\documentclass[a4paper,USenglish]{llncs}

\usepackage[inline]{enumitem}
\PassOptionsToPackage{hyphens}{url}\usepackage{hyperref}

\expandafter\def\expandafter\UrlBreaks\expandafter{\UrlBreaks
  \do\a\do\b\do\c\do\d\do\e\do\f\do\g\do\h\do\i\do\j%
  \do\k\do\l\do\m\do\n\do\o\do\p\do\q\do\r\do\s\do\t%
  \do\u\do\v\do\w\do\x\do\y\do\z\do\A\do\B\do\C\do\D%
  \do\E\do\F\do\G\do\H\do\I\do\J\do\K\do\L\do\M\do\N%
  \do\O\do\P\do\Q\do\R\do\S\do\T\do\U\do\V\do\W\do\X%
  \do\Y\do\Z}

\usepackage{amssymb}
\usepackage{amsmath}

\usepackage{stmaryrd}
\usepackage{xspace}

\usepackage{thmtools}
\usepackage{thm-restate}

\usepackage[obeyDraft]{todonotes}
\usepackage{bussproofs}

\newcommand{\R}{\mathcal{R}}
\newcommand{\U}{\mathcal{U}}
\renewcommand{\P}{\mathcal{P}}

\newcommand{\CF}{{\rm CF}}

\newcommand{\eqbydef}{\triangleq}
\newcommand{\tsem}[1]{[\![{#1}]\!]}
\newcommand{\sem}[1]{\lfloor\!\!\lfloor{#1}\rfloor\!\!\rfloor}
\newcommand{\gtran}[1]{\leadsto_{#1}}
\newcommand{\gpath}[1]{\circ_{#1}}
\newcommand{\rform}[2]{{#1}\Rightarrow{#2}}
\newcommand{\rrule}[3]{{#1}\twoheadrightarrow{#2}{\tt~if~}#3}

\newcommand{\lbrbrak}{\left<}
\newcommand{\rbrbrak}{\right>}

\newcommand{\ct}[2]{\ensuremath{\lbrbrak{#1}\,\middle|\,{#2}\rbrbrak}}

\newcommand{\var}{\mathit{var}}

\newcommand{\bltn}[1]{{#1}^{\sf b}}
\newcommand{\cons}[1]{{#1}^{\sf c}}

\newcommand{\hd}{{\it hd}}
\newcommand{\ls}{{\it ls}}
\newcommand{\dom}{{\it dom}}

\newcommand{\Bool}{{\it Bool}\xspace}
\newcommand{\Int}{{\it Int}\xspace}
\newcommand{\Cfg}{{\it Cfg}\xspace}

\newcommand{\limplies}{\rightarrow}
\renewcommand{\iff}{\leftrightarrow}
\newcommand{\True}{\top}
\newcommand{\False}{\bot}

\newcommand{\demEntails}{\vdash^{\forall}}

\newcommand{\demModels}{\vDash^{\forall}}

\renewcommand{\models}{\vDash}

\newcommand{\semrule}[1]{\llbracket {\sf {#1}} \rrbracket}

\newcommand{\SHSH}{\ensuremath{\textit{SH}}}

\newcommand{\qeq}{=}

\newcommand{\X}{\mathcal{X}}

\title{A Coinductive Approach to Proving Reachability Properties in
  Logically Constrained Term Rewriting Systems}

\author{{\c S}tefan Ciob\^ac\u{a}\inst{1} \and Dorel Lucanu\inst{1}}
\institute{
  Alexandru Ioan Cuza University, Romania\\
  \texttt{\{stefan.ciobaca,dlucanu\}@info.uaic.ro}
}

\begin{document}

\maketitle
\begin{abstract}

  We introduce a sound and complete coinductive proof system for
  reachability properties in transition systems generated by logically
  constrained term rewriting rules over an order-sorted signature
  modulo builtins.
  A key feature of the calculus is a circularity proof rule, which
  allows to obtain finite representations of the infinite coinductive
  proofs.

\end{abstract}

\section{Introduction}
\label{sec:introduction}

We propose a framework for specifying and proving reachability
properties of systems whose behaviour is modelled using transition
systems described by logically constrained term rewriting systems
(LCTRSs). By reachability properties we mean that a set of target
states are reached in all terminating system computations starting
from a given set of initial states. We assume transition systems are
generated by constrained term rewriting rules of the form
\\ \centerline{$\rrule{l}{r}{\phi},$} \\ where $l$ and $r$ are terms
and $\phi$ is a logical constraint. The terms $l, r$ may contain both
uninterpreted function symbols and function symbols interpreted in a
builtin model, e.g., the model of booleans and integers. The
constraint $\phi$ is a first-order formula that limits the application
of the rule and which may contain predicate symbols interpreted in the
builtin model. The intuitive meaning of a constrained rule
$\rrule{l}{r}{\phi}$ is that any instance of $l$ that satisfies $\phi$
transitions in one step into a corresponding instance of $r$.

\begin{example}\label{ex:algorithm_compositeness}
  The following set of constrained rewrite rules specifies a procedure
  for compositeness:
\\
\centerline{$\begin{array}{l}
\rrule{\textit{init}(n)}{\textit{loop}(n, 2)}{\True}, \\
\rrule{\textit{loop}(i \times k, i)}{\textit{comp}}{k > 1}, \\
\rrule{\textit{loop}(n, i)}{\textit{loop}(n, i + 1)}{\lnot(\exists k.k
      > 1 \land n = i \times k)}.
\end{array}$}
\\
If $n$ is not composite, the computation of the procedure is infinite.
\end{example}
Given a LCTRS, which serves as a specification for a transition
system, it is natural to define the notion of constrained term
$\ct{t}{\phi}$, where $t$ is an ordinary term (with variables) and
$\phi$ is a logical constraint. The intuitive meaning of such a term
is the set of ground instances of $t$ that satisfy $\phi$.
\begin{example}The constrained term
  $\ct{\textit{init}(n)}{\exists u.1 < u < n \land n
    \mathrel{\textit{mod}} u = 0}$ defines exactly the instances of
  $\textit{init}(n)$ where $n$ is composite.
\end{example}

A \emph{reachability formula} is a pair of constrained terms
$\rform{\ct{t}{\phi}}{\ct{t'}{\phi'}}.$ The intuitive meaning of a
reachability formula is that any instance of $\ct{t}{\phi}$ reaches,
along all terminating paths of the transition system, an instance of
$\ct{t'}{\phi'}$ that agrees with $\ct{t}{\phi}$ on the set of shared
variables.

\begin{example}\label{ex:reachability-formula}
  The reachability formula\\
  \centerline{$\rform{\ct{\textit{init}(n)}{\exists u.1 < u <
        n \land n \mathrel{\textit{mod}} u = 0}}{\ct{\textit{comp}}{\True}}$}\\
  captures a functional specification for the algorithm described in
  Example~\ref{ex:algorithm_compositeness}: each terminating
  computation starting from a state in which $n$ is composite reaches
  the state \textit{comp}. Computations that start with a negative
  number (composite or not) are infinite and therefore vacuously
  covered by the specification above.
\end{example}

We propose an effective proof system that, given a LCTRS, proves
valid reachability formulas such as the one above, assuming an oracle
that solves logical constraints. In practice, we use an SMT solver
instead of the oracle.

\paragraph{Contributions}
\begin{enumerate*}
\item As computations can be finite or infinite, an inductive approach
  for reachability is not practically possible. In
  Section~\ref{sec:reach-prop}, we propose a coinductive approach for
  specifying transition systems, which is an elegant way to look at
  reachability, but also essential in
  handling both finite and infinite executions.
\item We formalize the semantics of LCTRSs as a reduction relation
  over a particular model that combines order-sorted terms with
  builtin elements such as integers, booleans, arrays, etc. The new
  approach, introduced in Section~\ref{sec:lctrss}, is simpler than
  the usual semantics for constrained term rewriting
  systems~\cite{DBLP:journals/corr/Kop16,DBLP:conf/frocos/KopN13,DBLP:conf/lpar/Kop015,tocl-lctrs},
  but it also lifts several technical restrictions that are important
  for our case studies.
\item We introduce a sound and complete coinductive proof system for
  deriving valid reachability formulas for transition systems
  specified by a LCTRS. We present our proof system in two steps: in
  the first step, we provide a three-rule proof
  system~(Figure~\ref{fig:symbolic-execution}) for \emph{symbolic
    execution of constrained terms}. When interpreting the proof
  system coinductively, its proof trees can be finite or infinite. The
  finite proof trees correspond to reachability formulas
  $\rform{\ct{t}{\phi}}{\ct{t'}{\phi'}}$ where there is a bounded
  number of symbolic steps between $\ct{t}{\phi}$ and
  $\ct{t'}{\phi'}$. The infinite proof trees correspond to proofs of
  reachability formulas $\rform{\ct{t}{\phi}}{\ct{t'}{\phi'}}$ that
  hold for an unbounded number of symbolic steps between
  $\ct{t}{\phi}$ and $\ct{t'}{\phi'}$ (obtained, e.g., by unrolling
  loops). Symbolic execution has similarities to narrowing, but unlike
  narrowing, where each step computes a possible successor, symbolic
  execution must consider \emph{all} successors of a state
  at~the~same~time.
\item The infinite proof trees above cannot be obtained in finite time
  in practice. In order to derive reachability formulas that require
  an unbounded number of symbolic steps in finite time, we introduce a
  fourth proof rule to the system that we call \emph{circularity}.
  The circularity proof rule can be used to compress infinite proof
  trees into finite proof trees. The intuition is to use as axioms the
  goals that are to be proven, when they satisfy a \emph{guardedness}
  condition. This compression of infinite coinductive trees into
  finite proof trees via the guardedness condition nicely complements
  our coinductive approach. This separation between symbolic execution
  and circularity answers an open question in~\cite{JSC2016}.
\item We introduce the {\tt RMT} tool, an implementation of the proof
  system that validates our approach on a number of examples. {\tt
    RMT} uses an SMT solver to discharge logical constraints. The tool
  is expressive enough for specifying various transition systems,
  including operational semantics of programming languages, and
  proving reachability properties of practical interest and is
  intended to be the starting point of a library for rewriting modulo
  builtins, which could have more applications.
\end{enumerate*}

\paragraph{Related Work}%
A number of
approaches~\cite{Aguirre,guarded,tocl-lctrs,Rocha2016,SkeirikSM2017}
to combining rewriting and SMT solving have appeared lately. The
rewrite tool Maude~\cite{DBLP:conf/cade/DuranEEMMT16} has been
extended with SMT solving in~\cite{Rocha2016} in order to enable the
analysis of open systems. A method for proving invariants based on an
encoding into reachability properties is presented
in~\cite{SkeirikSM2017}. Both approaches above are restricted to
\emph{topmost rewrite theories}. While almost any theory can be
written as a topmost theory~\cite{Meseguer:2007:SRA:1236620.1236625},
the encoding can significantly increase the number of transitions,
which raises performance concerns. Our definition for constrained term
is a generalization of that of constructor constrained pattern used
in~\cite{SkeirikSM2017}. In particular~\cite{SkeirikSM2017} does not
allow for quantifiers in constraints, but quantifiers are critical to
obtaining a complete proof system, as witnessed by their use in the
subsumption rule in our proof system ({\sf [subs]},
Figure~\ref{fig:symbolic-execution}). The approach without quantifiers
is therefore not sufficient to prove reachabilities in a general
setting.

A calculus for reachability properties in a formalism similar to
LCTRSs is given in~\cite{Aguirre}. However, the notion of reachability
in~\cite{Aguirre} is different from ours: while we show reachability
along \emph{all terminating paths of the computation},~\cite{Aguirre}
solves reachability properties of the form
$\exists \widetilde{x}.t(\widetilde{x}) \rightarrow^* t'(\widetilde{x})$
(i.e. does there exists an instance of $t$ that reaches, along some
path, an instance of $t'$).

Work on constrained term rewriting systems appeared
in~\cite{DBLP:conf/frocos/KopN13,DBLP:conf/lpar/Kop015,DBLP:journals/corr/Kop16,tocl-lctrs}. In
contrast to this approach to constrained rewriting, our semantics is
simpler (it does not require two reduction relations), it does not
have restrictions on the terms $l, r$ in a rule $\rrule{l}{r}{\phi}$
and the constraint is an arbitrary first-order formula $\phi$,
possibly with quantifiers, which are crucial to obtain symbolic
execution in its full generality. Constrained terms are generalized to
guarded terms in~\cite{guarded}, in order to reduce the state space.

Reachability in rewriting is explored in depth
in~\cite{ESCOBAR20075}. The work by Kirchner and
others~\cite{kirchner} is the first to propose the use of rewriting
with symbolic constraints for deduction. Subsequent
work~\cite{Rocha2016,DBLP:conf/frocos/KopN13,tocl-lctrs} extends and
unifies previous approaches to rewriting with constraints. The related
work section in~\cite{Rocha2016} includes a comprehensive account of
literature related to rewriting modulo constraints.

Our previous
work~\cite{stefanescu-ciobaca-mereuta-moore-serbanuta-rosu-2014-rta,JSC2016}
on proving program correctness was in the context of the K
framework~\cite{rosu-serbanuta-2010-jlap}. K, developed by Ro\c{s}u
and others, implements semantics-based program
verifiers~\cite{stefanescu-park-yuwen-li-rosu-2016-oopsla} for any
language that can be specified by a rewriting-based operational
semantics, such as C~\cite{DBLP:conf/pldi/HathhornER15},
Java~\cite{DBLP:conf/popl/BogdanasR15} and
JavaScript~\cite{DBLP:conf/pldi/ParkSR15}. Our formalism is not more
expressive than that of reachability
logic~\cite{stefanescu-ciobaca-mereuta-moore-serbanuta-rosu-2014-rta}
for proving partial correctness of programs in a language-independent
manner, but it does have several advantages. Firstly, we make a clear
separation between \emph{rewrite rules} (used to define transition
systems), for which it makes no sense to have constraints on both the
lhs and the rhs, and \emph{reachability formulas} (used to specify
reachability properties), for which there can be constraints on both
the lhs and the rhs. We provide clear semantics of both syntactic
constructs above, which makes it unnecessary to check well-definedness
of the underlying rewrite system, as required
in~\cite{stefanescu-ciobaca-mereuta-moore-serbanuta-rosu-2014-rta}. Additionally,
this separation, which we see as a contribution, makes it easy to get
rid of the top-most restriction in previous approaches. Another
advantage is that the proposed proof system is very easy to automate,
while being sufficiently expressive to specify real-world
applications. Additionally, we work in the more general setting of
LCTRSs, not just language semantics, which enlarges the possible set
of applications of the technique. We also have several major technical
improvements compared to~\cite{JSC2016}, where the proof system is
restricted to the cases where unification can be reduced to matching
and topmost rewriting. The totality property required for languages
specifications, which was quite restrictive, was replaced by a local
property in proof rules and all restrictions needed to reduce
unification to matching were removed.

In contrast to the work on partial correctness
in~\cite{stefanescu-park-yuwen-li-rosu-2016-oopsla}, the approach on
reachability discussed here is meant for any LCTRS, not just
operational semantics. The algorithm
in~\cite{stefanescu-park-yuwen-li-rosu-2016-oopsla} contains a small
source of incompleteness, as when proving a reachability property it
is either discharged completely through implication or through
circularities/rewrite rules. We allow a reachability rule to be
discharged partially by subsumption and partially by other
means. Constrained terms are a fragment of Matching Logic
(see~\cite{rosu-2017-lmcs}), where no distinction is made between
terms and constraints. Coinduction and circular or cyclic proofs have
been proposed in other contexts. For example, circular proof systems
have been proposed for first-order logic with inductive predicates
in~\cite{brotherston} and for separation logic
in~\cite{brotherston-aplas2012}. In the context of interactive theorem
provers, circular coinduction has been proposed as an incremental
proof method for bisimulation in process calculi
(see~\cite{popescu-fossacs2010}). A compositional and incremental
approach to coinduction that uses a semantic guardedness check instead
of a syntactic check is given in~\cite{hur-popl2013}.

\paragraph{Paper Structure} We present coinductive definitions for
execution paths and reachability predicates in
Section~\ref{sec:reach-prop}. In Section~\ref{sec:lctrss}, we
introduce logically constrained term rewriting with builtins in an
order-sorted setting. In Section~\ref{sec:proving}, we propose a sound
and complete coinductive calculus for reachability and a circularity
rule for compressing infinite proof trees into finite proof
trees. Section~\ref{sec:conclusion} discusses the implementation before
concluding. The proofs and a discussion of coinduction and
order-sorted algebras can be found in the Appendix.

\section{Reachability Properties: Coinductive Definition}
\label{sec:reach-prop}

In this section we introduce a class of reachability properties,
defined coinductively. A \emph{state predicate} is a subset of
states. A \emph{reachability property} is a pair $\rform{P}{Q}$ of
state predicates. Such a reachability property is \emph{demonically
  valid} iff each execution path starting from a state in $P$
eventually reaches a state in $Q$, or if it is infinite. Since the set
of finite and infinite executions is coinductively defined, the set of
valid predicates can be defined coinductively as well. Formally,
consider a transition system $(M,\gtran{})$, with
${\gtran{}}\subseteq M\times M$. We write $\gamma\gtran{} \gamma'$ for
$(\gamma,\gamma')\in{\gtran{}}$. An element $\gamma\in M$ is
\emph{irreducible} if $\gamma\not\gtran{}\gamma'$ for any
$\gamma' \in M$.

\begin{definition}[Execution Path] The set of \emph{(complete)
    execution paths} is coinductively defined by the following rules:
\\
  \centerline{$\dfrac{}{~\gamma~}~\gamma\in M, \gamma~{\rm irreducible} \qquad
  \dfrac{\tau}{\gamma_0\gpath{}\tau}~\gamma_0\gtran{}\hd(\tau)$}
\\
where the function $\hd$ is defined by
$\hd(\gamma)=\gamma$ and $\hd(\gamma_0\gpath{}\tau)=\gamma_0$.
\end{definition}

The above definition includes both the finite execution paths ending
in a irreducible state and the infinite execution paths, defined as
the greatest fixed point of the associated functional (see
Appendix~\ref{sec:coind}).

\begin{definition}[State and Reachability Predicates] A \emph{state
    predicate} is a subset $P\subseteq M$. A \emph{reachability
    predicate} is a pair of state predicates $\rform{P}{Q}$. The
  predicate $P$ is \emph{runnable} if $P \not= \emptyset$ and for all
  $\gamma\in P$ there is $\gamma'\in M$ s.t.  $\gamma\gtran{}\gamma'$.
\end{definition}
A \emph{derivative} measures the sensitivity to change of a
quantity. For the case of transition systems, the change of states is
determined by the transition relation.
\begin{definition}[Derivative of a State Predicate] The derivative of
  a state predicate $P$ is the state predicate
  $\partial(P) = \{\gamma' \mid
  \gamma\gtran{}\gamma'\textrm{~for~some~}\gamma\in P\}$.
\end{definition}
As a reachability predicate specifies reachability property of
execution paths, we define when a particular execution path satisfies
a reachability predicate.
\begin{definition}[Satisfaction of a Reachability Predicate] An
  execution path $\tau$ \emph{satisfies} a reachability predicate
  $\rform{P}{Q}$, written $\tau\demModels\rform{P}{Q}$, iff
  $\langle \tau,\rform{P}{Q}\rangle\in \nu\,\widehat{\sf EPSRP}$,
  where {\sf EPSRP} consists of the following rules:
  \\
  \centerline{$ \dfrac{}{~\langle \tau, \rform{P}{Q}\rangle}~\hd(\tau)\in P\cap Q
    \qquad
    \dfrac{\langle \tau,\rform{\partial(P)}{Q}\rangle}{\langle
      \gamma_0\gpath{}\tau,\rform{P}{Q}\rangle}~\gamma_0\in P,
    \gamma_0\gtran{}\hd(\tau). $}
\end{definition}
The notation $\widehat{\sf EPSRP}$ stands for the functional of {\sf
  EPSRP} and $\nu\widehat{\sf EPSRP}$ stands for its greatest fixed
point (see Appendix~\ref{sec:coind}). We coinductively define the set
of \emph{demonically valid reachability predicates} over
$(M,\gtran{})$. This allows to use coinductive proof techniques to
prove validity of reachability predicates.
\begin{definition}[Valid Reachability Predicates, 
  Coinductively] \label{def:dvp} We say that $\rform{P}{Q}$ is
  \emph{demonically valid}, and we write \\
  \centerline{$(M,\gtran{})\demModels\rform{P}{Q},$}
  iff
  $\rform{P}{Q}\in \nu\,\widehat{\sf DVP}$, where {\sf DVP} consists
  of the following rules:\\
  \centerline{$
  {\semrule{Subsumption}}~\dfrac{}{\rform{P}{Q}}~P\subseteq Q
  \qquad
  {\semrule{Step}}~\dfrac{\rform{\partial(P\setminus
      Q)}{Q}}{\rform{P}{Q}}~P\setminus Q~{\rm runnable}.$}
\end{definition}
The condition \emph{$P \setminus Q$ runnable} in the second rule is
essential to avoid the cases where execution is stuck. These blocking
states have no successor in $\partial(P \setminus Q)$ and, in the
absense of the condition, we would wrongly conclude that they satisfy
$\rform{P}{Q}$. The terminating executions are captured by
$\semrule{Subsumption}$.

The following proposition justifies our definition of demonically
valid reachability predicates.
\begin{restatable}{proposition}{demonicvalidsoundness} Let
  $\rform{P}{Q}$ be a reachability predicate. We have
  $(M,\gtran{})\demModels\rform{P}{Q}$ iff any execution path $\tau$
  \emph{starting from $P$} ($\hd(\tau) \in P$) satisfies
  $\rform{P}{Q}$.
\end{restatable}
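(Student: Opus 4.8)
The plan is to prove the two implications separately, each by the coinduction principle for greatest fixed points: to show a pair lies in $\nu\widehat{\sf EPSRP}$ (resp.\ $\nu\widehat{\sf DVP}$) I will exhibit a set closed under the associated functional (a post-fixed point) containing that pair. Before the two directions I will record two auxiliary \emph{weakening lemmas}, both proved by this same technique. The first is downward closure of $\nu\widehat{\sf DVP}$ in its first component: if $\rform{P}{Q}\in\nu\widehat{\sf DVP}$ and $P'\subseteq P$ then $\rform{P'}{Q}\in\nu\widehat{\sf DVP}$; the witnessing post-fixed point is $\{\rform{P'}{Q}\mid P'\subseteq P,\ \rform{P}{Q}\in\nu\widehat{\sf DVP}\}$, and the only point to check is that when $\semrule{Step}$ fires for $P$, either $P'\subseteq Q$ (so $\semrule{Subsumption}$ fires for $P'$) or $P'\setminus Q$ is runnable, being a nonempty subset of the runnable set $P\setminus Q$. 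The second is the analogue for $\nu\widehat{\sf EPSRP}$: if $\langle\sigma,\rform{A}{Q}\rangle\in\nu\widehat{\sf EPSRP}$, $B\subseteq A$ and $\hd(\sigma)\in B$, then $\langle\sigma,\rform{B}{Q}\rangle\in\nu\widehat{\sf EPSRP}$.

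For the left-to-right implication, assume $\rform{P}{Q}\in\nu\widehat{\sf DVP}$ and take any execution path $\tau$ with $\hd(\tau)\in P$. I will prove $\langle\tau,\rform{P}{Q}\rangle\in\nu\widehat{\sf EPSRP}$ using the post-fixed point $X=\{\langle\tau,\rform{P'}{Q}\rangle\mid \tau\text{ an execution path},\ \hd(\tau)\in P',\ \rform{\{\hd(\tau)\}}{Q}\in\nu\widehat{\sf DVP}\}$, where membership of $\langle\tau,\rform{P}{Q}\rangle$ follows from the first weakening lemma applied to $\{\hd(\tau)\}\subseteq P$. To check $X$ is post-fixed, fix an element and unfold $\rform{\{\hd(\tau)\}}{Q}\in\nu\widehat{\sf DVP}$. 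If $\semrule{Subsumption}$ fired then $\hd(\tau)\in Q$, hence $\hd(\tau)\in P'\cap Q$ and the base rule of {\sf EPSRP} applies. Otherwise $\semrule{Step}$ fired, so $\hd(\tau)\notin Q$, $\hd(\tau)$ is reducible (runnability), and $\rform{\partial(\{\hd(\tau)\})}{Q}\in\nu\widehat{\sf DVP}$; reducibility forces $\tau=\hd(\tau)\gpath{}\tau'$, so the step rule of {\sf EPSRP} applies, and the first weakening lemma (using $\hd(\tau')\in\partial(\{\hd(\tau)\})$) places the resulting premise $\langle\tau',\rform{\partial(P')}{Q}\rangle$ back in $X$.

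For the right-to-left implication, assume every execution path starting in $P$ satisfies $\rform{P}{Q}$ and prove $\rform{P}{Q}\in\nu\widehat{\sf DVP}$ via the post-fixed point $Z=\{\rform{P'}{Q}\mid\text{every execution path }\tau\text{ with }\hd(\tau)\in P'\text{ satisfies }\rform{P'}{Q}\}$, which contains $\rform{P}{Q}$ by hypothesis. Fix $\rform{P'}{Q}\in Z$. If $P'\subseteq Q$ then $\semrule{Subsumption}$ applies; otherwise I verify the two requirements of $\semrule{Step}$. Runnability of $P'\setminus Q$ holds because an irreducible $\gamma\in P'\setminus Q$ would make the one-state path $\gamma$ an execution path starting in $P'$ whose only available {\sf EPSRP} justification is the base rule, forcing $\gamma\in Q$, a contradiction. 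For the premise $\rform{\partial(P'\setminus Q)}{Q}\in Z$, take an execution path $\sigma$ with $\hd(\sigma)\in\partial(P'\setminus Q)$, pick $\gamma\in P'\setminus Q$ with $\gamma\gtran{}\hd(\sigma)$, and apply the hypothesis to $\gamma\gpath{}\sigma$: since its head $\gamma\notin Q$ the base rule is excluded, so the step rule justifies it, yielding $\langle\sigma,\rform{\partial(P')}{Q}\rangle\in\nu\widehat{\sf EPSRP}$; the second weakening lemma (with $\partial(P'\setminus Q)\subseteq\partial(P')$) then gives $\langle\sigma,\rform{\partial(P'\setminus Q)}{Q}\rangle\in\nu\widehat{\sf EPSRP}$, as needed.

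The main obstacle, and the reason both weakening lemmas are required, is the mismatch between the full derivative $\partial(P)$ in the {\sf EPSRP} step rule and the \emph{restricted} derivative $\partial(P\setminus Q)$ in the {\sf DVP} $\semrule{Step}$ rule. Passing between them forces one to shrink the state predicate while keeping the current head inside it; the singleton bookkeeping $\rform{\{\hd(\tau)\}}{Q}$ in $X$ (which makes $\partial(\{\hd(\tau)\}\setminus Q)$ coincide with $\partial(\{\hd(\tau)\})$ whenever $\hd(\tau)\notin Q$) and the inclusion $\partial(P'\setminus Q)\subseteq\partial(P')$ in $Z$ are exactly what reconcile the two derivatives, with runnability of $P\setminus Q$ being the property that correctly excludes stuck irreducible non-$Q$ states in both directions.
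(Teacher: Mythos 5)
Your proof is correct and follows the same overall strategy as the paper's: both directions are handled by exhibiting a post-fixed point of the relevant functional and invoking the coinduction principle, and your set $Z$ for the right-to-left direction is literally the paper's set $Y$. The genuine difference is in the left-to-right direction. The paper uses the post-fixed point $\{\langle\tau,\rform{P}{Q}\rangle\mid\hd(\tau)\in P,\ (M,\gtran{})\demModels\rform{P}{Q}\}$ and, in the step case, passes from $\langle\tau',\rform{\partial(P\setminus Q)}{Q}\rangle\in X$ directly to an application of the second {\sf EPSRP} rule --- but that rule requires the premise $\langle\tau',\rform{\partial(P)}{Q}\rangle$, so the paper silently elides exactly the $\partial(P)$ versus $\partial(P\setminus Q)$ mismatch you identify. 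Your singleton bookkeeping via $\rform{\{\hd(\tau)\}}{Q}$ (for which the two derivatives coincide once the head is outside $Q$), combined with your first weakening lemma (which is the paper's own Proposition~\ref{prop:subsetrp}, proved in the appendix but not cited in this proof), closes that gap cleanly; your second weakening lemma likewise makes precise the paper's unargued observation that $\tau\demModels\rform{P'}{Q}$ implies $\tau\demModels\rform{P'\setminus Q}{Q}$. So your write-up is, if anything, more careful than the original at exactly the two points where the {\sf DVP} and {\sf EPSRP} definitions have to be reconciled.
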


\section{Logically Constrained Term Rewriting Systems}
\label{sec:lctrss}

In this section we introduce our formalism for LCTRSs. We interpret
LCTRSs in a model combining order-sorted terms with builtins such as
integers, booleans, etc. Logical constraints are first-order formulas
interpreted over the fixed model.

We assume a \emph{builtin model} $\bltn{M}$ for a many-sorted
\emph{builtin signature} $\bltn{\Sigma} = (\bltn{S}, \bltn{F})$, where
$\bltn{S}$ is a set of \emph{builtin sorts} that includes at least the
sort $\Bool$ and $\bltn{F}$ is the $\bltn{S}$-sorted set of
\emph{builtin function symbols}. We assume that the set interpreting
the sort $\Bool$ in the model $\bltn{M}$ is $\bltn{M}_{\Bool} = \{
\True, \False \}$. We use the standard notation $M_o$ for the
interpretation of the sort/symbol $o$ in the model $M$. The set
$\bltn{\CF}$, defined as the set of (many-sorted) first-order formulas
with equality over the signature $\bltn{\Sigma}$, is the set of
\emph{builtin constraint formulas}. Functions returning $\Bool$ play
the role of predicates and terms of sort $\Bool$ are atomic
formulas. We will assume that the builtin constraint formulas can be
decided by an oracle (implemented as an SMT solver).

A \emph{signature modulo builtins} is an order-sorted signature
$\Sigma = (S, \le, F)$ that includes $\bltn{\Sigma}$ as a subsignature
and such that the only builtin constants in $\Sigma$ are elements of
the builtin model ($\{ c \mid c \in F_{\varepsilon,s}, s \in \bltn{S}
\} = \bltn{M}_s$) -- therefore the signature might be infinite. By
$F_{w,s}$ we denoted the set of function symbols of arity $w$ and
result sort $s$. $\bltn{\Sigma}$ is called the \emph{builtin
  subsignature} of $\Sigma$ and $\cons{\Sigma}=(S,\le, (F\setminus
\bltn{F})\cup\bigcup_{s\in \bltn{S}}F_{\varepsilon,s})$ the
\emph{constructor subsignature} of $\Sigma$. We let $\X$ be an
$S$-sorted set of variables.

We extend the builtin model $\bltn{M}$ to an $(S,\le,F)$-model
$M^\Sigma$ defined as follows:
$\bullet$ $M^\Sigma_s = T_{\cons{\Sigma},s}$, for each
$s \in S \setminus \bltn{S}$ ($M^\Sigma_s$ is the set of ground
constructor terms of sort $s$, i.e. terms built from constructors
applied to builtin elements);
$\bullet$ $M^\Sigma_f=\bltn{M}_f$ for each builtin function symbol
$f\in\bltn{F}$;
$\bullet$ $M^\Sigma_f$ is the term constructor
$M^\Sigma_f(t_1,\ldots,t_n) = f(t_1,\ldots,t_n)$, for each
non-builtin function symbol $f \in F \setminus \bltn{F}$.
By fixing the interpretation of the non-builtin function symbols, we
can reduce constraint formulas to built-in constraint formulas by
relying on an unification algorithm described in detail
in~\cite{wollic-2018}. We also make the standard assumption that
$M_s \not= \emptyset$ for any $s \in S$.

\begin{example}\label{ex:model-generated}%
  Let $\bltn{\Sigma} = (\bltn{S}, \bltn{F})$, where
  $\bltn{S} = \{ \Int, \Bool \}$ and $\bltn{F}$ include the usual
  operators over $\Bool$eans ($\lor, \land, \ldots$) and over the
  $\Int$egers ($+, -, \times, \ldots$). The builtin model $\bltn{M}$
  interprets the above sorts and operations as expected.

  We consider the signature modulo builtins $\Sigma = (S, \le, F)$,
  where the set of sorts $S = \{ \Cfg, \Int, \Bool \}$ consists of the
  builtin sorts and an additional sort $\Cfg$, where the subsorting
  relation ${\le} \subseteq S \times S = \emptyset$ is empty, 
  and where the set of function symbols $F$ includes, in addition to the
  builtin symbols in $\bltn{F}$, the following function symbols:
  $\textit{init} : \Int \to \Cfg, \textit{loop} : \Int \times \Int \to
  \Cfg, \textit{comp} : \Cfg$. We have that
  $M^\Sigma_\Cfg = \{ \textit{init}(i) \mid i \in \mathbb{Z} \} \cup
  \{ \textit{loop}(i,j) \mid i, j \in \mathbb{Z} \} \cup \{
  \textit{comp} \}$.
\end{example}

The set $\CF$ of \emph{constraint formulas} is the set of first-order
formulas with equality over the signature $\Sigma$. 
The subset of the builtin constraint formulas is denoted by $\CF^b$. 
Let $\var(\phi)$ denote the set of variables freely
occurring in $\phi$. We write $M^\Sigma, \alpha \models \phi$ when the
formula $\phi$ is satisfied by the model $M^\Sigma$ with a valuation
$\alpha : X \rightarrow M^\Sigma$.

\begin{example}\label{ex:constraint-formula}
  The constraint formula
  $\phi \eqbydef {\exists u.1 < u < n \land n
    \allowbreak\mathrel{\textit{mod}} u = 0}$ is satisfied by the
  model $M^\Sigma$ defined in Example~\ref{ex:model-generated} and any
  valuation $\alpha$ such that $\alpha(n)$ is a composite number.

\end{example}

\begin{definition}[Constrained Terms]
  A \emph{constrained term} $\varphi$ of sort $s\in S$ is a pair
  $\ct{t}{\phi}$, where $t\in T_{\Sigma,s}(\X)$ and $\phi \in \CF$.

\end{definition}

\begin{example}%
\label{ex:constrained-term}%
Continuing the previous example, the following is a constrained term:
$\ct{\textit{init}(n)}{\exists u.1 < u < n \land n
  \mathrel{\textit{mod}} u = 0}.$
\end{example}

We consistently use $\varphi$ for constrained terms and $\phi$ for
constraint formulas.

\begin{definition}[Valuation Semantics of Constraints]
\label{def:valuationsemantics}
The \emph{valuation semantics} of a constraint $\phi$ is the set
$\sem{\phi} \eqbydef \{\alpha : X \to M^\Sigma \mid M^\Sigma, \alpha
\models \phi\}$.
\end{definition}
\begin{example}\label{ex:constrained-term-valuation-semantics}
  Continuing the previous example, we have that
  \\
  \centerline{$\begin{array}{l}\sem{\exists u.1
                 < u < n \land n \mathrel{\textit{mod}} u = 0} = 
      \{ \alpha : X \to M^\Sigma \mid \mbox{$\alpha(n)$ is
        composite}\}.\end{array}$}
\end{example}
\begin{definition}[State Predicate Semantics of Constrained
  Terms]
\label{def:statepredicatesemantics}
The \emph{state predicate semantics} of a constrained term
$\ct{t}{\phi}$ is the set
\\
\centerline{$\tsem{\ct{t}{\phi}}\eqbydef\{\alpha(t)\mid\alpha\in\sem{\phi}\}.$}
\end{definition}
\begin{example}
\label{ex:constrained-term-state-predicate-semantics}
Continuing the previous example, we have that
\\
\centerline{$\tsem{\ct{\textit{init}(n)}{\exists u.1 < u < n \land n
      \mathrel{\textit{mod}} u = 0}} = \{ \textit{init}(n) \mid
  \mbox{$n$ is composite} \}.$}
\end{example}
We now introduce our formalism for logically constrained term
rewriting systems. Syntactically, a rewrite rule consists of two terms
(the left hand side and respectively the right hand side), together
with a constraint formula. As the two terms could \emph{share some
  variables}, these shared variables should be instantiated
consistently in the semantics:

\begin{definition}[LCTRS]
\label{d:rules}
A \emph{logically constrained rewrite rule} is a tuple $(l,r,\phi)$,
often written as $\rrule{l}{r}{\phi}$, where $l,r$ are terms in
$T_{\Sigma}(\X)$ having the same sort, and $\phi \in \CF$. A
\emph{logically constrained term rewriting system} $\R$ is a set of
logically constrained rewrite rules. $\R$ defines an order-sorted
\emph{transition relation} $\gtran{\R}$ on $M^\Sigma$ as follows:
$t \gtran{\R} t'$ iff there exist a rule $\rrule{l}{r}{\phi}$ in $\R$,
a context $c[\cdot]$, and a valuation $\alpha : X \to M^\Sigma$ such
that $t = \alpha(c[l])$, $t' = \alpha(c[r])$ and
$M^\Sigma,\alpha \models \phi$.
\end{definition}
\begin{example}\label{ex:constrained-rule-system}
  We recall the LCTRS given in the introduction:\\
  \centerline{$\R = \left\{\begin{array}{l}
\rrule{\textit{init}(n)}{\textit{loop}(n, 2)}{\True}, \\
\rrule{\textit{loop}(i \times k, i)}{\textit{comp}}{k > 1}, \\
\rrule{\textit{loop}(n, i)}{\textit{loop}(n, i + 1)}{\lnot(\exists k.k
      > 1 \land n = i \times k)}
\end{array}\right\}.$}
\end{example}

A LCTRS $\R$ defines a sort-indexed transition system
$(M^\Sigma,\gtran{\R})$. As each constrained term $\varphi$ defines a
state predicate $\tsem{\varphi}$, it is natural to specify
reachability predicates as pairs of constrained terms sharing a subset
of variables. The shared variables must be instantiated in the same
way by the execution paths connecting states specified by the two
constrained terms.

\begin{definition}[Reachability Properties of LCTRSs]
  A \emph{reachability formula} $\rform{\varphi}{\varphi'}$ is a pair
  of constrained terms, which may share variables. We say that a LCTRS
  \emph{$\R$ demonically satisfies $\rform{\varphi}{\varphi'}$},
  written
  \\
  \centerline{$\R\demModels \rform{\varphi}{\varphi'},$}
  \\
  iff
  $(M^\Sigma,\gtran{\R})\demModels
  \rform{\tsem{\sigma(\varphi)}}{\tsem{\sigma(\varphi')}}$ for each
  $\sigma:\var(\varphi)\cap\var(\varphi')\to M^\Sigma$.

\end{definition}

Since the carriers sets of $M^\Sigma$ consist of ground terms,
$\sigma$ is both a substitution and a valuation in the definition
above. Its role is critical: to ensure that the shared variables of
$\varphi$ and $\varphi'$ are instantiated by the same values.

\begin{example}
  Continuing the previous example, we have that the reachability
  formula
  $\rform{\ct{\textit{init}(n)}{\exists u.1 < u < n \land n
      \mathrel{\textit{mod}} u = 0}}{\ct{\textit{comp}}{\True}}$
  is demonically satisfied by the constrained rule system $\R$ defined
  in Example~\ref{ex:constrained-rule-system}:
  \\
  \centerline{$\R \demModels \rform{\ct{\textit{init}(n)}{\exists u.1
        < u < n \land n \mathrel{\textit{mod}} u =
        0}}{\ct{\textit{comp}}{\True}}.$}
  We have checked the above reachability formula against $\R$
  mechanically, using an implementation of the approach described in
  this paper.

\end{example}

\section{Proving Reachability Properties of LCTRSs}
\label{sec:proving}

We introduce two proof systems for proving reachability properties in
transition systems specified by LCTRSs. The first proof system
formalizes \emph{symbolic execution} in a LCTRS, in the following
sense: a reachability formula $\rform{\varphi}{\varphi'}$ can be
proven if either the left-hand side $\varphi$ can be derived
infinitely many times (and therefore all execution paths starting with
$\varphi$ are infinite), or if some derivative is an instance of the
right-hand side $\varphi'$, i.e. all the execution paths starting with
$\varphi$ reach a state that is an instance of $\varphi'$. Note that
this intuition holds when the proof system is interpreted
coinductively, where infinite proof trees are allowed. Unfortunately,
these infinite proof trees have a limited practical use because they
cannot be obtained in finite time.

In order to solve this limitation, we introduce a second proof system,
which contains an additional inference rule, called
\emph{circularity}. The \emph{circularity} rule allows to use the
reachability formula to be proved as an axiom. This allows to fold
infinite proof trees into finite proof trees, which can be obtained in
finite time. Adding the reachability formulas that are to be proved as
axioms seems at first to be unsound, but it corresponds to a natural
intuition: when reaching a proof obligation that we have handled
before, there is no need to prove it again, because the previous
reasoning can be reused (possibly leading to an infinite proof
branch). However, the circularity rule must be used in a guarded
fashion in order to preserve soundness. We introduce a simple
criterion to select the sound proof trees.

\subsection{Derivatives of Constrained Terms}

Our proof system relies on the notion of derivative at the syntactic
level:
\begin{definition}[Derivatives of Constrained Terms]
  The \emph{set of derivatives} of a constrained term
  $\varphi\eqbydef \ct{t}{\phi}$ w.r.t. a rule
  $\rrule{l}{r}{\phi_{\it lr}}$ is
\begin{align}
\Delta_{l,r,\phi_{\it lr}}(\varphi) \eqbydef \{\ct{c[r]}{\phi'}\mid {}
&\phi' \eqbydef \phi \land t \qeq c[l] \land \phi_{\it lr},\notag\\
& c[\cdot]\textrm{~an~appropriate~context~and~}\phi'\textrm{~is~satisfiable}\},\label{eq:der}
\end{align}
\noindent where the variables in $\rrule{l}{r}{\phi_{\it lr}}$ are renamed such
that $\var(l,r,\phi_{\it lr})$ and $\var(\varphi)$ are disjoint. If
$\R$ is a set of rules, then
$\Delta_\R(\varphi)=\bigcup_{(l,r,\phi_{\it lr})\in
  \R}\Delta_{l,r,\phi_{\it lr}}(\varphi).$ A constrained term
$\varphi$ is \emph{$\R$-derivable} if
$\Delta_{\R}(\varphi)\not=\emptyset$.

\end{definition}

\begin{example}
  Continuing the previous examples, we have that
  \\
  \centerline{$\begin{array}{l}\Delta_\R(\ct{\textit{init}(n)}{\exists
                 u.1 < u < n \land n \mathrel{\textit{mod}} u = 0}) = \\
                 \qquad\!\! \{ \ct{\textit{loop}(n, 2)}{\exists u.1 <
                 u < n \land n \mathrel{\textit{mod}} u = 0}
                 \}.\end{array}$}
In the above case, $\Delta_\R$ includes only the derivative computed
w.r.t. the first rule in $\R$, because the constraints of the ones
computed w.r.t. the other rules are unsatisfiable. Intuitively, the
derivatives of a constrained term denote all its possible successor
configurations in the transition system generated by $\R$.
\end{example}

The symbolic derivatives and the concrete ones are related as
expected:
\begin{restatable}{theorem}{deltacommutes}
\label{th:der}
Let $\varphi \eqbydef \ct{t}{\phi}$ be a constrained term, $\R$ a
constrained rule system, and $(M^\Sigma,\gtran{\R})$ the transition
system defined by $\R$.  Then
$\tsem{\Delta_\R(\varphi)} = \partial(\tsem{\varphi})$.
\end{restatable}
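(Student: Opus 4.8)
The plan is to prove the set equality by extensionality, establishing the two inclusions separately; both reduce, by distributing $\tsem{\cdot}$ and $\partial(\cdot)$ over the union $\Delta_\R = \bigcup_{(l,r,\phi_{\it lr})\in\R}\Delta_{l,r,\phi_{\it lr}}$, to a per-rule statement. Throughout I would use that, since $M^\Sigma$ interprets every non-builtin symbol as a free constructor and its carriers consist of ground terms, a valuation $\beta$ satisfies $t \qeq c[l]$ exactly when the ground instances $\beta(t)$ and $\beta(c[l]) = \beta(c)[\beta(l)]$ coincide; this is the semantic reading of the context equation appearing in the derivative.

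For the inclusion $\tsem{\Delta_\R(\varphi)} \subseteq \partial(\tsem{\varphi})$, I would take $\gamma' \in \tsem{\ct{c[r]}{\phi'}}$ for some $\ct{c[r]}{\phi'} \in \Delta_{l,r,\phi_{\it lr}}(\varphi)$, so that $\gamma' = \beta(c[r])$ for some $\beta \in \sem{\phi'}$ with $\phi' \eqbydef \phi \land t \qeq c[l] \land \phi_{\it lr}$. Setting $\gamma \eqbydef \beta(t)$, the conjunct $\beta \models \phi$ gives $\gamma \in \tsem{\varphi}$, the conjunct $\beta \models t \qeq c[l]$ gives $\gamma = \beta(c[l])$, and $\beta \models \phi_{\it lr}$ supplies the rule's side condition. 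Hence $\gamma \gtran{\R} \gamma'$, witnessed by the same rule, the context $c$, and the valuation $\beta$, so $\gamma' \in \partial(\tsem{\varphi})$. This direction is essentially bookkeeping.

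The reverse inclusion is the substance. Given $\gamma' \in \partial(\tsem{\varphi})$ there is $\gamma \in \tsem{\varphi}$ with $\gamma \gtran{\R} \gamma'$; unfolding the definitions yields $\alpha \in \sem{\phi}$ with $\gamma = \alpha(t)$, together with a rule, a context $c_0$, and a valuation $\alpha_0$ such that $\gamma = \alpha_0(c_0[l])$, $\gamma' = \alpha_0(c_0[r])$ and $\alpha_0 \models \phi_{\it lr}$. Using the renaming that makes $\var(l,r,\phi_{\it lr})$ disjoint from $\var(\varphi)$, I would merge $\alpha$ and $\alpha_0$ into a single valuation $\beta$. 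The task is then to exhibit an \emph{appropriate} symbolic context $c$ so that $\beta$, suitably extended on the fresh variables of $c$, satisfies $\phi'$ and gives $\beta(c[r]) = \gamma'$; the derivative $\ct{c[r]}{\phi'}$ then witnesses $\gamma' \in \tsem{\Delta_\R(\varphi)}$, and satisfiability of $\phi'$ is automatic since $\beta$ itself satisfies it.

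The main obstacle is producing that symbolic context from the concrete redex position $p$ at which the rule fires in $\gamma$. When $p$ lies on the non-variable skeleton of $t$, I read $c$ off directly by placing the hole at $p$ in $t$ and let $\beta$ agree with $\alpha$ on $\var(t)$ and with $\alpha_0$ on the rule variables; consistency of $\beta(t) = \beta(c[l]) = \gamma$ is then immediate. The delicate case is when $p$ lies strictly below a variable position $q$ of $t$, so the redex sits \emph{inside} the value $\alpha(t|_q)$ and $c$ cannot be read from the syntax of $t$ alone. I would handle this by letting the variable $x \eqbydef t|_q$ be unified, through the equation $t \qeq c[l]$, with a context $e[l]$ reproducing the decomposition $\alpha(x) = \beta(e)[\alpha_0(l)]$, thereby placing the hole of $c$ inside a symbolic copy of $x$'s instantiation. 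This is precisely the completeness point: the enumeration of appropriate contexts, read via the semantic/unification interpretation of $t \qeq c[l]$ over $M^\Sigma$, must cover every position of $\gamma$ and not merely the skeleton positions of $t$, and it is here that the free-constructor structure of $M^\Sigma$ together with the unification algorithm of~\cite{wollic-2018} does the real work. Once such $c$ and $\beta$ are fixed, $\beta(c[r]) = \gamma'$ follows from the chosen decomposition, closing the inclusion and hence the equality.
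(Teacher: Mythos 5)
Your first inclusion and the opening of the second (unfolding $\gamma'\in\partial(\tsem{\varphi})$ into a concrete transition, then merging the valuation for $\varphi$ with the one for the renamed rule) coincide with the paper's proof. The divergence, and the gap, is in how you produce the ``appropriate context'' for the reverse inclusion. The paper does not synthesize a symbolic context from the syntax of $t$ at all: it simply reuses the context $c[\cdot]$ witnessing the concrete step $\gamma\gtran{\R}\gamma'$ as the context of the derivative. Since the carriers of $M^\Sigma$ consist of ground ($\cons{\Sigma}$-)terms and every builtin element is a constant of $\Sigma$, that (ground) context is itself a legitimate context in Definition~\eqref{eq:der}, the equation $t\qeq c[l]$ is just a constraint formula evaluated in $M^\Sigma$, and the merged valuation $\alpha$ satisfies $\phi\land t\qeq c[l]\land\phi_{\it lr}$ outright --- so $\ct{c[r]}{\phi'}\in\Delta_{l,r,\phi_{\it lr}}(\varphi)$ and $\gamma'=\alpha(c[r])$. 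No case analysis on the redex position and no unification are needed.

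Your position-based case split is therefore a detour, and in the case you yourself flag as delicate (redex strictly below a variable of $t$) the argument is not actually carried out: ``letting the variable $x$ be unified, through $t\qeq c[l]$, with a context $e[l]$'' is not a construction available in the formalism (there are no context variables, and the derivative quantifies over contexts $c[\cdot]$, not over decompositions of variable instantiations), and the claim that the enumeration of appropriate contexts ``covers every position of $\gamma$'' is exactly the statement to be proved, which you delegate to the external unification algorithm rather than establish. The concern is understandable --- it would be a real issue if ``appropriate context'' meant a context read off the non-variable skeleton of $t$ --- but under the paper's definition it does not, and the fix is simply to take $c$ to be the ground context underlying the concrete rewrite step. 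As written, the reverse inclusion is not established.
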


Our proof systems allows to replace any reachability formula by an
equivalent one. Two reachability formulas,
$\rform{\varphi_1}{\varphi'_1}$ and $\rform{\varphi_2}{\varphi'_2}$,
are \emph{equivalent}, written
$\rform{\varphi_1}{\varphi'_1}\equiv\rform{\varphi_2}{\varphi'_2}$,
if, for all LCTRSs $\R$,
\\
\centerline{$\R\demModels \rform{\varphi_1}{\varphi'_1}$ iff
  $\R\demModels\rform{\varphi_2}{\varphi'_2}.$}

We write $\tsem{\varphi}\subseteq_{\it shared}\tsem{\varphi'}$ iff for
each $\sigma : \var(\varphi) \cap \var(\varphi') \to M^\Sigma$, we
have $\tsem{\sigma(\varphi)}\subseteq\tsem{\sigma(\varphi')}$. The
next result, used in our proof system, shows that inclusion of the
state predicate semantics of two constrained terms can be expressed as
a constraint formula, when the shared variables are instantiated
consistently.
\begin{restatable}{proposition}{semanticinclusion}\label{prop:subs}
The inclusion $\tsem{\ct{t}{\phi}}\subseteq_{\it shared}\tsem{\ct{t'}{\phi'}}$
holds if and only if
$M^\Sigma\models \phi\limplies(\exists \widetilde{x})(t \qeq t'\land \phi')$,
where $\widetilde{x} \eqbydef \var(t',\phi')\setminus\var(t,\phi)$.
\end{restatable}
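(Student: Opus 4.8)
The plan is to prove both directions by a single chain of equivalences: unfold the definitions of $\subseteq_{\it shared}$ and of the state-predicate semantics $\tsem{\cdot}$, reduce the set inclusion to a first-order statement about valuations, and then recognize that statement as the semantics of the displayed implication. Write $\widetilde{y} \eqbydef \var(t,\phi)\cap\var(t',\phi')$ for the shared variables, so that $\var(t',\phi') = \widetilde{y}\uplus\widetilde{x}$ and, by definition, $\widetilde{x}$ is disjoint from $\var(t,\phi)$. First I would observe that in $\tsem{\ct{t}{\phi}} = \{\alpha(t)\mid\alpha\in\sem{\phi}\}$ the valuation $\alpha$ ranges over all of $X$, but only its restriction to $\var(t,\phi)$ affects $\alpha(t)$ and the satisfaction of $\phi$; likewise for the primed term. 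This lets me rewrite, for a fixed $\sigma:\widetilde{y}\to M^\Sigma$, the set $\tsem{\sigma(\ct{t}{\phi})}$ as $\{\alpha(t)\mid M^\Sigma,\alpha\models\phi,\ \alpha|_{\widetilde{y}}=\sigma\}$, and similarly for $\ct{t'}{\phi'}$.

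Next I would collapse the outer quantification over $\sigma$. Since $\sigma$ ranges over all maps $\widetilde{y}\to M^\Sigma$ and each $\alpha$ on the left merely has to agree with $\sigma$ on $\widetilde{y}$, quantifying over all $\sigma$ and then over all compatible $\alpha$ is the same as quantifying over all valuations $\alpha$ and taking $\sigma \eqbydef \alpha|_{\widetilde{y}}$. Hence the inclusion $\tsem{\ct{t}{\phi}}\subseteq_{\it shared}\tsem{\ct{t'}{\phi'}}$ is equivalent to the statement: for every valuation $\alpha$ with $M^\Sigma,\alpha\models\phi$ there is a valuation $\alpha'$ such that $\alpha'|_{\widetilde{y}} = \alpha|_{\widetilde{y}}$, $M^\Sigma,\alpha'\models\phi'$, and $\alpha'(t') = \alpha(t)$.

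Then I would match this valuation-level statement against $M^\Sigma\models\phi\limplies(\exists\widetilde{x})(t\qeq t'\land\phi')$. Fixing an $\alpha$ with $\alpha\models\phi$, the existential formula asserts precisely that $\alpha$ can be extended by values for $\widetilde{x}$ — call the result $\alpha'$, which differs from $\alpha$ only on $\widetilde{x}$ — so that $M^\Sigma,\alpha'\models t\qeq t'\land\phi'$. Because $\widetilde{x}$ is disjoint from $\var(t,\phi)$, the variables $\widetilde{x}$ do not occur in $t$, so $\alpha'(t) = \alpha(t)$; thus $\alpha'\models t\qeq t'$ says exactly $\alpha'(t') = \alpha(t)$. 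Since $\alpha'$ agrees with $\alpha$ on $\widetilde{y}$ and its values elsewhere are irrelevant to $t'$ and $\phi'$, choosing such an $\alpha'$ is the same as producing the witness of the previous paragraph, and both directions follow at once.

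The routine parts are the unfolding and the variable bookkeeping; the one step needing genuine care is the collapse of the $\forall\sigma\,\forall\alpha$ quantification in the second paragraph, together with the correct identification of $\widetilde{x}$ as exactly the degrees of freedom available when building the witness $\alpha'$. I would also double-check that equality in the formula $t\qeq t'$ is interpreted in $M^\Sigma$ as genuine equality of the resulting elements — which holds because $M^\Sigma$ is the term model extending the builtins — so that $\alpha'\models t\qeq t'$ is literally $\alpha'(t)=\alpha'(t')$ in the carrier, matching the set-membership condition $\alpha'(t')=\alpha(t)$.
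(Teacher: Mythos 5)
Your proof is correct and follows essentially the same route as the paper's: both arguments unfold $\subseteq_{\it shared}$ and the state-predicate semantics into statements about valuations, fold the quantification over $\sigma$ into the choice of $\alpha$ on the shared variables, and exploit the disjointness of $\widetilde{x}$ from $\var(t,\phi)$ to identify the existential witness with the valuation witnessing set membership. The only difference is presentational — you organize the two implications as a single chain of equivalences, whereas the paper constructs the intermediate valuations explicitly in each direction — and the one step you flag as delicate (collapsing the $\forall\sigma\,\forall\alpha$ quantification and adjusting the witness off $\var(t',\phi')$) is handled correctly.
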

 
\subsection{Proof System for Symbolic Execution}

The first proof system, ${\sf DSTEP}$, derives sequents of the form
$\rform{\ct{t_l}{\phi_l}}{\ct{t_r}{\phi_r}}$. The proof system
consists of three proof rules presented in
Figure~\ref{fig:symbolic-execution} and an implicit structural rule
that allows to replace reachability formulas by equivalent
reachability formulas. The instances of this implicit structural rule
are not included in the proof trees.  We explain the three rules in
the proof system.
\begin{figure}[t]
\begin{align*}
&[{\sf
  axiom}]~\dfrac{}{\rform{\ct{t_l}{\False}}{\ct{t_r}{\phi_r}}}\\
&[{\sf subs}]~\dfrac{\rform{\ct{t_l}{\phi_l \land \neg (\exists{\tilde{x}}.t_l \qeq t_r \land \phi_r)}}{\ct{t_r}{\phi_r}}}{\rform{\ct{t_l}{\phi_l}}{\ct{t_r}{\phi_r}}}
  ~~\begin{array}{l}
      {\widetilde{x}} \eqbydef \var(t_r,\phi_r) \setminus
      \var(t_l,\phi_l)\\
      \exists{\tilde{x}}.t_l \qeq t_r \land \phi_r\mbox{ satisfiable}
     \end{array}
\\
&[{\sf der}^\forall]~
  \dfrac{
  \rform{\ct{t^j}{\phi^j}}{\ct{t_r}{\phi_r}}, j \in \{ 1, \ldots, n\}
  }{
  \rform{\ct{t_l}{\phi_l}}{\ct{t_r}{\phi_r}}
  }~~\begin{array}{l}
\ct{t_l}{\phi_l}{\rm~is~}\R {\rm-derivable~and}\\
\phi_l \limplies \bigvee_{j \in \{1, \ldots, n\}}\exists{\widetilde{y}^j}.\phi^j{\rm~is~valid}
     \end{array}\\
&\phantom{[{\sf der}^\forall]~}
\begin{array}{l}
{\rm where~}\Delta_\R(\ct{t_l}{\phi_l}) = \{ \ct{t^1}{\phi^1}, \ldots,
  \ct{t^n}{\phi^n} \}{\rm~and}\\
{\rm \phantom{where~}}\widetilde{y}^j = \var(t^j, \phi^j) \setminus \var(t_l, \phi_l)
\end{array}
\end{align*}
\caption{\label{fig:symbolic-execution}The ${\sf DSTEP}(\R)$ Proof System}
\end{figure}
\\
\noindent
$\bullet$ The [{\sf axiom}] rule discharges goals where the left hand
side of the goal does not match any state. As our structural rule
identifies equivalent reachability formulas, this rule can be applied
to any left-hand side where the constraint is unsatisfiable
(equivalent to $\False$). This rule discharges reachability formulas
where there are no execution paths starting from the left-hand side,
and therefore no need to continue the proof process.
\\
$\bullet$ The [{\sf subs}] rule discharges the cases where the
left-hand side is an instance of the right-hand side. The constraint
$\exists{\widetilde{x}}.t_l \qeq t_r \land \phi_r$ is true exactly when
the left-hand side is an instance of the right-hand side, which is
ensured by Proposition~\ref{prop:subs}. The proof of the current goal continues
only for the cases where the negation of this constraint holds (i.e., the cases where the left-hand
side is not included in the right-hand side).
\\
$\bullet$ The [{\sf der}$^\forall$] rule allows to take a symbolic
step in the left-hand side of the current goal. It computes all
derivatives of the left-hand side; the proof process must continue
with each such derivative. Let
$\psi \eqbydef \phi_l \limplies{}
\bigvee\{\exists{\widetilde{y}^j}.\phi^j\}$ be the logical constraint
that occurs in the condition of [{\sf der}$^\forall$].  The formula
$\psi$ is valid iff there is at least one rule of $\R$ that can be
applied to any instance of $\ct{t_l}{\phi_l}$, meaning that $\R$ is
\emph{total for $\ct{t_l}{\phi_l}$}. Summarising, the condition of
[{\sf der}$^\forall$] says that $\tsem{\ct{t_l}{\phi_l}}$ must have at
least one successor and furthermore that any instance
$\gamma \in \tsem{\ct{t_l}{\phi_l}}$ has a $\gtran{\R}$-successor.

The following result shows that ${\sf DSTEP}(\R)$ is sound and
complete.

\begin{restatable}{theorem}{dstepsound}
\label{th:dstep}
Let $\R$ be a LCTRS. For any reachability formula
$\rform{\varphi}{\varphi'}$, we have
\\
\centerline{$\R \demModels \rform{\varphi}{\varphi'}$ iff
$\rform{\varphi}{\varphi'} \in \nu\,\widehat{\sf DSTEP}(\R)$.}

\end{restatable}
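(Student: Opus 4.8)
The plan is to prove the two inclusions between $\nu\,\widehat{\sf DSTEP}(\R)$ and the set of demonically valid reachability formulas separately, each by the coinduction principle: to establish $X\subseteq\nu F$ it suffices to exhibit $X$ as a post-fixed point $X\subseteq\widehat{F}(X)$. Before either inclusion, I would record three easy consequences of Definition~\ref{def:dvp} and of the proposition characterizing $\demModels$ via satisfaction of all execution paths: (i) validity distributes over unions of the source, i.e. $\rform{\bigcup_i P_i}{Q}$ is valid iff each $\rform{P_i}{Q}$ is; (ii) $\rform{P}{Q}$ is valid iff $\rform{P\setminus Q}{Q}$ is valid, since any path starting inside $Q$ satisfies the predicate immediately; and (iii) if $\rform{P}{Q}$ is valid and $P\setminus Q$ is runnable, then $\rform{\partial(P\setminus Q)}{Q}$ is valid, because every path from a successor of a state in $P\setminus Q$ is a tail of a path from $P$. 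I would then reuse Theorem~\ref{th:der} ($\tsem{\Delta_\R(\varphi)}=\partial(\tsem{\varphi})$) and Proposition~\ref{prop:subs}, the latter being exactly what lets the constraint $\exists\widetilde{x}.\,t_l\qeq t_r\land\phi_r$ of $[{\sf subs}]$ denote the set of source instances already inside the target.

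For completeness ($\R\demModels\rform{\varphi}{\varphi'}$ implies membership) I would show that $V\eqbydef\{\rform{\varphi}{\varphi'}\mid \R\demModels\rform{\varphi}{\varphi'}\}$ is a post-fixed point of $\widehat{\sf DSTEP}(\R)$. Fix a valid $\rform{\ct{t_l}{\phi_l}}{\ct{t_r}{\phi_r}}$ and abbreviate $\psi\eqbydef\exists\widetilde{x}.\,t_l\qeq t_r\land\phi_r$. If $\phi_l$ is unsatisfiable, the structural rule rewrites the goal to $\ct{t_l}{\False}$ and $[{\sf axiom}]$ applies. If the source meets the target, I apply $[{\sf subs}]$; its premise denotes $P\setminus Q$ and is valid by (ii), hence lies in $V$. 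Otherwise the source is disjoint from the target, and I apply $[{\sf der}^\forall]$ to $\ct{t_l}{\phi_l}$: validity together with $P\neq\emptyset$ forces every instance to be reducible (a stuck instance outside $Q$ would give a finite path missing $Q$), so $P$ is runnable, which through Theorem~\ref{th:der} yields both the derivability and the totality side conditions. Finally, (iii) and Theorem~\ref{th:der} give that $\partial(P\setminus Q)=\bigcup_j\tsem{\ct{t^j}{\phi^j}}$ validates $\rform{\partial(P\setminus Q)}{Q}$, so by (i) each premise $\rform{\ct{t^j}{\phi^j}}{\ct{t_r}{\phi_r}}$ is valid, i.e. in $V$.

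Soundness (membership implies validity) is the genuinely coinductive direction, matching possibly infinite ${\sf DSTEP}$ trees against infinite executions. I would argue path-wise through the coinductive satisfaction relation ${\sf EPSRP}$, since a path passes through a single derivative per step, which sidesteps rebuilding the whole set $\partial(P\setminus Q)$. Let $D\eqbydef\nu\,\widehat{\sf DSTEP}(\R)$ and set
\[
R \eqbydef \{\langle\tau,\rform{P}{Q}\rangle \mid \exists\,\rform{\varphi}{\varphi'}\in D,\ \sigma:\ \hd(\tau)\in\tsem{\sigma(\varphi)}\subseteq P \text{ and } Q=\tsem{\sigma(\varphi')}\},
\]
where the source is allowed to be any superset of the witnessing $\tsem{\sigma(\varphi)}$; I then show $R\subseteq\widehat{\sf EPSRP}(R)$. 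If $\hd(\tau)\in Q$ the base rule fires. If $\hd(\tau)\notin Q$, I inspect which rule justifies $\rform{\varphi}{\varphi'}\in D$: not $[{\sf axiom}]$ (the source is nonempty); and if $[{\sf subs}]$, its premise is again in $D$ but can no longer be discharged by $[{\sf subs}]$ (its constraint entails $\neg\psi$, so re-applying $[{\sf subs}]$ would remove nothing and is blocked) nor by $[{\sf axiom}]$, hence by $[{\sf der}^\forall]$. Thus a single $[{\sf der}^\forall]$ step always governs $\hd(\tau)$; its totality condition supplies a successor $\hd(\tau)\gtran{\R}\hd(\tau')$ and, by Theorem~\ref{th:der}, that successor lies in some derivative $\tsem{\ct{t^{j_0}}{\phi^{j_0}}}\subseteq\partial P$. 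The step rule then fires with premise $\langle\tau',\rform{\partial P}{Q}\rangle$, which is in $R$ as witnessed by $\rform{\ct{t^{j_0}}{\phi^{j_0}}}{\varphi'}\in D$ lying below $\partial P$. The idempotence of $[{\sf subs}]$ — it cannot repeat — is exactly what guarantees that a path element is consumed between successive steps, so the ${\sf EPSRP}$ step rule is always applicable.

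The main obstacle I expect is the variable bookkeeping hidden in the last step. Demonic satisfaction quantifies over valuations $\sigma$ of the shared variables, and when $[{\sf der}^\forall]$ replaces the source by a derivative $\ct{t^{j_0}}{\phi^{j_0}}$ the set of variables shared with $\ct{t_r}{\phi_r}$ changes (rule variables appear freshly, source variables may vanish). I must extend $\sigma$ to a valuation $\sigma'$ of the new shared variables that still realises the same target $Q=\tsem{\sigma'(\varphi')}$ and agrees with the concrete successor on the path — which is precisely why Theorem~\ref{th:der} and Proposition~\ref{prop:subs} must be threaded through reachability formulas rather than plain constrained terms. Reconciling the union over derivatives produced by $[{\sf der}^\forall]$ with the single-predicate derivative $\partial(P\setminus Q)$ of Definition~\ref{def:dvp}, handled above by reasoning path-wise and by the superset closure built into $R$, is the second point requiring care.
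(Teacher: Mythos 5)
Your completeness direction is essentially the paper's: the paper also builds the \textsf{DSTEP} tree coinductively by the same three-way case split (intersecting target $\Rightarrow$ \textsf{[subs]}, empty source $\Rightarrow$ \textsf{[axiom]}, otherwise \textsf{[der$^\forall$]}), extracting runnability from $\semrule{Step}$ exactly as you do, and your helper facts (i)--(iii) are the paper's Propositions on unions, difference with $Q$, and the derivative corollary. The soundness direction, however, is where you genuinely diverge. The paper stays at the level of state predicates: it shows that the set $A$ of all $\rform{\tsem{\sigma(\varphi)}}{\tsem{\sigma(\varphi')}}$ with $\rform{\varphi}{\varphi'}\in\nu\,\widehat{\sf DSTEP}(\R)$ is backward closed under $\widehat{\sf DVP}$, and the price of that choice is the \textsf{[der$^\forall$]} case: the rule produces a \emph{family} of premises $\ct{t^j}{\phi^j}$ whose union of semantics is $\partial(P)$, so the paper must first prove a nontrivial disjunction lemma (Proposition~\ref{prop:disj}, that $\nu\,\widehat{\sf DSTEP}(\R)$ is closed under disjunction of left-hand constraints) in order to reassemble the premises into a single member of $\nu\,\widehat{\sf DSTEP}(\R)$ denoting $\partial(P)$. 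Your path-wise argument through $\widehat{\sf EPSRP}$ (then appealing to the paper's first proposition relating path satisfaction to $\demModels$) sidesteps that lemma entirely, because a concrete path visits only one derivative per step; the cost you pay instead is the superset closure built into your relation $R$ (needed because the $\semrule{Step}$/\textsf{EPSRP} premise mentions $\partial(P)$ while your witness only covers one derivative inside it), which you correctly identify and handle. Both arguments lean on the same delicate points: that \textsf{[subs]} cannot be applied twice in a row (the paper asserts this in the proof of Proposition~\ref{prop:disj}; it requires reading the satisfiability side condition of \textsf{[subs]} as joint satisfiability with $\phi_l$, since otherwise an infinite tower of \textsf{[subs]} nodes would be a legitimate coinductive tree), and that taking derivatives preserves the variables shared with $\varphi'$ so the valuation $\sigma$ can be carried across a \textsf{[der$^\forall$]} step (the paper states this without proof in its completeness case; you flag it as the main remaining bookkeeping obligation). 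Your proposal is correct modulo those acknowledged details, and arguably cleaner on the soundness side since it eliminates the dependence on the disjunction lemma.
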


\begin{example}
\label{ex:symbexec}
Consider the LCTRS $\R$ defined in
Example~\ref{ex:constrained-rule-system}.  The proof tree for the
reachability formula $\rform{\ct{\textit{init}(n)}{\psi}}{\varphi_r},$
where
$\psi \eqbydef \exists u.1 < u < n \land n \mathrel{\textit{mod}} u =
0$ denotes the fact that $n$ is composite and
$\varphi_r \eqbydef \ct{\textit{comp}}{\True}$, is infinite:
\begin{prooftree}\fontsize{9}{11}\selectfont
\AxiomC{}
\RightLabel{[{\sf axiom}]}
\UnaryInfC{$\rform{\ct{\textit{comp}}{\False}}{\varphi_r}$}
\RightLabel{[{\sf subs}]}
\UnaryInfC{$\rform{\ct{\textit{comp}}{\psi \land \phi_a}}{\varphi_r}$}
\AxiomC{}
\RightLabel{[{\sf axiom}]}
\UnaryInfC{$\rform{\ct{\textit{comp}}{\False}}{\varphi_r}$}
\RightLabel{[{\sf subs}]}
\UnaryInfC{$\rform{\ct{\textit{comp}}{\psi \land \phi_2 \land \phi_b}}{\varphi_r}$}
\AxiomC{\vdots}
\RightLabel{[${\sf der}^\forall$]}
\BinaryInfC{$\rform{\ct{\textit{loop}(n,3)}{\psi \land \phi_2}}{\varphi_r}$}
\RightLabel{[${\sf der}^\forall$]}
\BinaryInfC{$\rform{\ct{\textit{loop}(n,2)}{\psi}}{\varphi_r}$}
\RightLabel{[${\sf der}^\forall$]}
\UnaryInfC{$\rform{\ct{\textit{init}(n)}{\psi}}{\varphi_r}$}
\end{prooftree}
\noindent The right branch of the
above proof tree is infinite, and:
\\
\centerline{
$\begin{aligned}
&\phi_2 \eqbydef \lnot \exists k.k > 1 \land n = 2 \times k \qquad
\phi_a \eqbydef \textit{loop}(n, 2) = \textit{loop}(i' \times k', i')
\land k' > 1 \\
& \phi_3 \eqbydef \lnot \exists k.k > 1 \land n = 3 \times k \qquad
\phi_b \eqbydef \textit{loop}(n, 3) = \textit{loop}(i' \times k', i')
\land k' > 1 \\
& \qquad \qquad \qquad \qquad \qquad \qquad \qquad \ldots
\end{aligned}$}\\

Note that in the presentation of the tree above, we used the
structural rule to replace reachability formulas by equivalent
reachability formulas as follows:

\centerline{
  $\begin{aligned}
&
\rform{\ct{\textit{comp}}{\psi \land \phi_a \land \neg (\textit{comp} = \textit{comp} \land \True)}}{\varphi_r}
& \equiv & \;\;\;
\rform{\ct{\textit{comp}}{\False}}{\varphi_r}, \\
&
\rform{\ct{\textit{comp}}{\psi \land \phi_2 \land \phi_b \land \neg (\textit{comp} = \textit{comp} \land \True)}}{\varphi_r}
& \equiv &\;\;\;
\rform{\ct{\textit{comp}}{\False}}{\varphi_r}, \\
&
\rform{\ct{\textit{loop}(n',2)}{\True \land \textit{init}(n') =
  \textit{init}(n) \land \psi}}{\varphi_r}
& \equiv &\;\;\;
\rform{\ct{\textit{loop}(n,2)}{\psi}}{\varphi_r}, \\
&
\rform{\ct{\textit{loop}(n',i'+1)}{\psi \land \phi_2'}}{\varphi_r}
& \equiv &\;\;\;
\rform{\ct{\textit{loop}(n,3)}{\psi \land \phi_2}}{\varphi_r},
\end{aligned}$}
\noindent where
$\phi_2' \eqbydef \textit{loop}(n,2) = \textit{loop}(n', i') \land
\lnot \exists k.k > 1 \land n' = i' \times k$. The ticks appear in the
formulas above because, to compute derivatives, we used
the following fresh instance of $\R$:\\
\centerline{$\R = \left\{\begin{array}{l}
                           \rrule{\textit{init}(n')}{\textit{loop}(n', 2)}{\True}, \\
                           \rrule{\textit{loop}(i' \times k', i')}{\textit{comp}}{k' > 1}, \\
                           \rrule{\textit{loop}(n',
                           i')}{\textit{loop}(n', i' + 1)}
                           {\lnot(\exists k.k > 1 \land n' = i' \times
                           k)}
\end{array}\right\}.$}

\end{example}

\subsection{Extending the Proof System with a Circularity Rule}
 
As we said at the beginning of the section, the use of {\sf DSTEP} is
limited because of the infinite proof trees. The next inference rule
is intended to use the initial goals as axioms to fold infinite {\sf
  DSTEP}-proof trees into sound finite proof trees.

\begin{definition}[Demonic circular coinduction]
\label{def:ccsyst}
Let $G$ be a finite set reachability formulas. Then the set of rules
{\sf DCC}$(\R,G)$ consists of ${\sf DSTEP}(\R)$, together with
\[
[{\sf circ}]~
\dfrac{
\begin{array}{l}
\rform{\ct{t^c_r}{\phi_l \land \phi \land \phi^c_r}}{\varphi_r},\\
\rform{\ct{t_l}{\phi_l \land \neg \phi}}{\varphi_r}
\end{array}
}
{\rform{\ct{t_l}{\phi_l}}{\varphi_r}}~
~
\begin{array}{l}
\phi\mbox{ is }\exists \var(t^c_l,\phi^c_l).t_l \qeq t^c_l \land \phi^c_l,\\
\rform{\ct{t^c_l}{\phi^c_l}}{\ct{t^c_r}{\phi^c_r}} \in G
\end{array}
\] 
where $\rform{\ct{t^c_l}{\phi^c_l}}{\ct{t^c_r}{\phi^c_r}}$ is a rule
in $G$ whose variables have been renamed with fresh names.
\end{definition}

The idea is that $G$ should be chosen conveniently so that {\sf
  DCC}$(\R,G)$ proves $G$ itself. We call such goals $G$ (that are
used to prove themselves) \emph{circularities}. The intuition behind
the rule is that the formula $\phi$ defined in the rule holds when a
circularity can be applied. In that case, it is sufficient to continue
the current proof obligation from the rhs of the circularity
$\ct{t^c_r}{\phi^c_r \land \phi_l \land \phi}$. The cases when $\phi$
does not hold (the circularity cannot be applied) are captured by the
proof obligation $\rform{\ct{t_l}{\phi_l \land \neg
    \phi}}{\varphi_r}$.

Of course, not all proof trees under ${\sf DCC}(\R,G)$ are sound. The
next two definitions identify a class of sound proof trees
(cf. Theorem~\ref{th:cp}).

\begin{definition}
\label{def:ppt}
Let $\it PT$ be a proof tree of $\rform{\varphi}{\varphi'}$ under
${\sf DCC}(\R,G)$. A {\sf [circ]} node in $\it PT$ is \emph{guarded}
iff it has as ancestor a $\sf [der^\forall]$ node. $\it PT$ is
\emph{guarded} iff all its {\sf [circ]} nodes are guarded.

\end{definition}

\begin{definition}
\label{def:dement}
We write $(\R,G)\demEntails \rform{\varphi}{\varphi'}$ iff there is a
proof tree of $\rform{\varphi}{\varphi'}$ under ${\sf DCC}(\R,G)$ that
is guarded. If $F$ is a set of reachability formulas, we write
$(\R,G)\demEntails F$ iff $(\R,G)\demEntails\rform{\varphi}{\varphi'}$
for all $\rform{\varphi}{\varphi'} \in F$.

\end{definition}

The criterion stated by Definition~\ref{def:ppt} can be easily checked
in practice. The following theorem states that the guarded proof trees
under {\sf DCC} are sound.

\begin{restatable}[Circularity
  Principle]{theorem}{circularityprinciple} \label{th:cp} Let $\R$ be a
  constrained rule system and $G$ a set of goals. If
  $(\R,G) \demEntails G$ then $\R \demModels G$.

\end{restatable}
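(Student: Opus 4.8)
The plan is to reduce guarded ${\sf DCC}(\R,G)$ proofs to (possibly infinite) ${\sf DSTEP}(\R)$ proofs and then invoke the soundness direction of Theorem~\ref{th:dstep}. By that theorem it suffices to show that each goal $g \in G$ — more precisely, each semantic instance $\rform{\tsem{\sigma\varphi}}{\tsem{\sigma\varphi'}}$ obtained by fixing the shared variables — belongs to $\nu\,\widehat{\sf DSTEP}(\R)$. Equivalently, I want to manufacture, from the guarded proof trees witnessing $(\R,G)\demEntails G$, genuine coinductive ${\sf DSTEP}(\R)$-proof trees for the same goals in which the $[{\sf circ}]$ rule no longer appears.

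First I would analyse the $[{\sf circ}]$ rule semantically. At a $[{\sf circ}]$ node with conclusion $\rform{\ct{t_l}{\phi_l}}{\varphi_r}$ invoking $\rform{\ct{t^c_l}{\phi^c_l}}{\ct{t^c_r}{\phi^c_r}} \in G$, the constraint $\phi$ splits $\tsem{\ct{t_l}{\phi_l}}$ into the part that matches the circularity left-hand side ($\phi$ holds) and the part that does not ($\neg\phi$ holds). The second premise handles the non-matching part by a direct recursive descent. The matching part is exactly the set of states that, \emph{assuming the circularity goal is valid}, reach $\tsem{\ct{t^c_r}{\phi^c_r}}$ consistently on the shared variables; here I would use Proposition~\ref{prop:subs} together with the shared-variable bookkeeping recorded in the premise constraint $\phi_l \land \phi \land \phi^c_r$, from which the first premise continues the proof towards $\varphi_r$. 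Thus a single $[{\sf circ}]$ application is sound precisely relative to the validity of the invoked $G$-goal, which is what makes the argument circular and forces a coinductive treatment.

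The heart of the proof is a corecursive tree transformation that eliminates this circularity. Whenever a $[{\sf circ}]$ node invokes $g' \in G$, I replace the forward jump to $\ct{t^c_r}{\phi_l \land \phi \land \phi^c_r}$ by a grafted copy of the guarded proof tree $T_{g'}$, specialised to the matching instances: I follow $T_{g'}$ until its states have reached $\tsem{\ct{t^c_r}{\phi^c_r}}$ (where $T_{g'}$ would discharge by $[{\sf subs}]$), and at those leaves I attach the continuation given by the first premise rather than stopping. Since $T_{g'}$ again contains $[{\sf circ}]$ nodes, this replacement is applied corecursively and yields an infinite tree. The key role of the \emph{guardedness} condition (Definition~\ref{def:ppt}) is to guarantee that this corecursion is \emph{productive}: every $[{\sf circ}]$ node lies below a $[{\sf der}^\forall]$ node, so each grafting step is preceded by at least one genuine symbolic transition. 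Consequently every infinite branch of the resulting tree passes through infinitely many $[{\sf der}^\forall]$ steps, the tree is well-defined, and — using Theorem~\ref{th:der} to match each $[{\sf der}^\forall]$ step with the derivative $\partial$ on the semantic side — it is a legitimate element of $\nu\,\widehat{\sf DSTEP}(\R)$.

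The main obstacle I expect is making the grafting formally precise while preserving the shared-variable discipline: each spliced copy of $T_{g'}$ must be instantiated so that its conclusion coincides with the matching fragment $\ct{t_l}{\phi_l \land \phi}$, and so that reaching the circularity right-hand side threads the shared variables of $g'$ consistently into the constraint $\phi_l \land \phi \land \phi^c_r$ of the first premise. This is where Proposition~\ref{prop:subs} and the definition of $\subseteq_{\it shared}$ do the real work, and where the composition (``transitivity'') of the two reachability segments $\tsem{\ct{t_l}{\phi_l \land \phi}} \to \tsem{\ct{t^c_r}{\phi^c_r}} \to \tsem{\varphi_r}$ must be justified. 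Once the transformed tree is shown to be a valid coinductive ${\sf DSTEP}(\R)$-proof, the soundness direction of Theorem~\ref{th:dstep} immediately yields $\R\demModels g$ for every $g \in G$, that is, $\R\demModels G$.
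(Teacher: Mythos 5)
Your proposal is correct and follows essentially the same route as the paper: both eliminate the $[{\sf circ}]$ nodes by corecursively grafting (suitably constraint-strengthened) copies of the circularities' own guarded proof trees, redirect the $[{\sf subs}]$-discharge points of each grafted copy into the continuation supplied by the first $[{\sf circ}]$ premise, use guardedness to make the unfolding productive (the paper phrases this as a $\sqsubset$-chain with strictly increasing ${\it circHeight}$ whose limit has no $[{\sf circ}]$ nodes), and conclude by the soundness direction of Theorem~\ref{th:dstep}. The "main obstacle" you flag is exactly where the paper invests its effort, via Proposition~\ref{prop:dcc-impl}, the semantic inclusion argument, and the coadmissible $[{\sf disj}]$ rule.
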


Theorem~\ref{th:cp} can be used by finding a set of circularities and
using them in a guarded fashion to prove themselves. Then the
circularity principle states that such circularities hold.

\begin{example}
  In order to prove
  $\rform{\ct{\textit{init}(n)}{\psi_1}}{\ct{\textit{comp}}{\True}}$,
  we choose the following set of circularities
  \\
  \centerline{$G =
    \left\{\begin{array}{l}\rform{\ct{\textit{init}(n)}{\psi_1}}{\ct{\textit{comp}}{\True}},
             \\
\rform{\ct{\textit{loop}(n, i)}{2 \leq i \land \exists u.i \leq u < n \land n \mathrel{\textit{mod}} u
= 0}}{\ct{\textit{comp}}{\True}}
\end{array}\right\}.$}%

The second circularity is inspired by the infinite branch of the proof tree under {\sf DSTEP}.
We will show that $(\R, G) \vdash^\forall G$, and by
Theorem~\ref{th:cp}, it follows that all reachability formulas in $G$
hold in $\R$.

\paragraph{First circularity.} To obtain a proof of the first
circularity,
$\rform{\ct{\textit{init}(n)}{\psi_1}}{\ct{\textit{comp}}{\True}}$, we
replace the infinite subtree rooted at
$\rform{\ct{\textit{loop}(n, 2)}{\psi}}{\varphi_r}$ 
in Example~\ref{ex:symbexec} by the following finite proof tree (that
uses [{\sf circ}]):
\begin{prooftree}\fontsize{9}{11}\selectfont
\AxiomC{}
\RightLabel{[{\sf axiom}]}
\UnaryInfC{$\rform{\ct{\textit{comp}}{\False}}{\varphi_r}$}
\RightLabel{[{\sf subs}]}
\UnaryInfC{$\rform{\ct{\textit{comp}}{\psi \land \phi\land \True}}{\varphi_r}$}
\AxiomC{}
\RightLabel{[{\sf axiom}]}
\UnaryInfC{$\rform{\ct{\textit{loop}(n,2)}{\psi \land \lnot \phi}}{\varphi_r}$}
\RightLabel{[{\sf circ}]}
\BinaryInfC{$\rform{\ct{\textit{loop}(n, 2)}{\psi}}{\varphi_r}$}
\end{prooftree}
where $\phi \eqbydef \exists n',i'.\textit{loop}(n, 2) = \textit{loop}(n',
i') \land 2 \leq i' \land \exists u.i' \leq u < n' \land n'
\mathrel{\textit{mod}} u = 0$.

\paragraph{Second circularity.} To complete the proof of $G$, we have
to find a finite proof tree for
\\
\centerline{$\rform{\ct{\textit{loop}(n, i)}{2 \leq i \land \exists
      u.i \leq u < n \land n \mathrel{\textit{mod}} u =
      0}}{\ct{\textit{comp}}{\True}}$}
\\
as well.
This is also obtained using [{\sf circ}] as follows:
\begin{prooftree}\fontsize{9}{11}\selectfont
\AxiomC{}
\RightLabel{[{\sf axiom}]}
\UnaryInfC{$
\begin{array}{ll}
\rform{\ct{\textit{comp}}{\False}}{\varphi_r}
\end{array}$}
\RightLabel{[{\sf subs}]}
\UnaryInfC{$\rform{\ct{\textit{comp}}{\psi_i \land \psi_a}}{\varphi_r}$}
\AxiomC{$T_1$ \qquad $T_2$}
\RightLabel{[${\sf circ}$]}
\UnaryInfC{$
\begin{array}{ll}
\rform{\ct{\textit{loop}(n,i + 1)}{\psi_i \land
  \psi_b}}{\varphi_r}
\end{array}$}
\RightLabel{[${\sf
         der}^\forall$]}
\BinaryInfC{$
\rform{\ct{\textit{loop}(n, i)}{\psi_i}}{\ct{\textit{comp}}{\True}}$}
\end{prooftree}
\noindent where
\\
\centerline{$\begin{aligned}
&\psi_a \eqbydef k' > 1 \land \textit{loop}(n, i) = \textit{loop}(i'
\times k', i'),\\
&\psi_b \eqbydef \lnot \exists k.k > 1 \land n = i \times k,\\
&\psi_i \eqbydef 2 \leq i \land \exists u.i \leq u < n \land n
\mathrel{\textit{mod}} u = 0.
\end{aligned}$}\\
The subtree
\begin{prooftree}\fontsize{9}{11}\selectfont
\AxiomC{$T_1$ \qquad $T_2$}
\RightLabel{[${\sf circ}$]}
\UnaryInfC{$\rform{\ct{\textit{loop}(n,i + 1)}{\psi_i \land \psi_b}}{\varphi_r}$}
\end{prooftree}
is:
\begin{prooftree}\fontsize{9}{11}\selectfont
\AxiomC{}
\RightLabel{[{\sf axiom}]}
\UnaryInfC{$
\begin{array}{l}
\rform{\ct{\textit{comp}}{\False}}{\varphi_r} 
\end{array}$}
\RightLabel{[{\sf subs}]}
\UnaryInfC{$\rform{\ct{\textit{comp}}{\psi_i \land \psi_b
    \land \psi_c}}{\varphi_r}$}
\AxiomC{}
\RightLabel{[{\sf axiom}]}
\UnaryInfC{$
\begin{array}{l}
\rform{\ct{\textit{loop}(n, i +
      1)}{\psi_i \land \psi_b \land \lnot \psi_c}}{\varphi_r} 
\end{array}$}
\RightLabel{[{\sf circ}]}
\BinaryInfC{$\rform{\ct{\textit{loop}(n, i + 1)}{\psi_i \land \psi_b}}{\varphi_r},$}
\end{prooftree}
\noindent where 
\\
\centerline{$\begin{aligned}
&\psi_c \eqbydef \exists n',i'.\textit{loop}(n, i+1) = \textit{loop}(n', i')
\land 2 \leq i' \land \exists u.i' \leq u < n' \land n' \mathrel{\textit{mod}} u = 0.\\
\end{aligned}$}
\\
The constraint $\psi_c$ holds when the circularity can be applied and
therefore this branch is discharged immediately by {\sf subs} and {\sf
  axiom}. The other branch, when the circularity cannot be applied, is
discharged directly by {\sf axiom}, as
$\psi_i \land \psi_b \land \lnot \psi_c$ is unsatisfiable ($\psi_i$
says that $n$ has a divisor between $i$ and $n$, $\psi_b$ says that
$i$ is not a divisor of $n$, and $\psi_c$ that $n$ has a divisor
between $i+1$ and $n$).
\\
Note that in both proof trees of the two circularities in $G$, in
order to apply the [{\sf circ}] rule, we used the following fresh
instance of the second circularity:
\\
\centerline{$\rform{\ct{\textit{loop}(n', i')}{2 \leq i' \land \exists
      u.i' \leq u < n' \land n' \mathrel{\textit{mod}} u =
      0}}{\ct{\textit{comp}}{\True}}.$}

The proof trees for both goals (circularities) in $G$ are guarded. We
have shown therefore that $(\R,G) \demEntails G$. By the Circularity
Principle (Theorem~\ref{th:cp}), we obtain that $\R \demModels G$ and
therefore
\\
\centerline{$\R \demModels 
\left\{
\begin{array}{l}
\rform{\ct{\textit{init}(n)}{\exists u.1 <
      u < n \land n \mathrel{\textit{mod}} u =
      0)}}{\ct{\textit{comp}}{\True}},\\
\rform{\ct{\textit{loop}(n, i)}{2 \leq i \land \exists u.i \leq u < n \land n \mathrel{\textit{mod}} u
= 0}}{\ct{\textit{comp}}{\True}}
\end{array}\right\}$}
\\
which includes what we wanted to show of our transition system
defined $\R$ in the running example.
\end{example}

\section{Implementation}

We have implemented the proof system for reachability in a tool called
{\tt RMT} (for {\bf r}ewriting {\bf m}odulo {\bf t}heories). {\tt RMT}
is open source and can be obtained from
\begin{center}\url{http://github.com/ciobaca/rmt/}.\end{center}

To prove a reachability property, the {\tt RMT} tool performs a
bounded search in the proof system given above. The bounds can be set
by the user. We have also tested the tool on reachability problems
where we do not use strong enough circularities. In these cases, the
tool will not find proofs. A difficulty that appears when a proof
fails, difficulty shared by all deductive approaches to correctness,
is that it is not known is the specification is wrong or if the
circularities are not strong enough. Often, analysing the failing
proof tree, the user may have the chance to find a hint for the
missing circularities, if any.  In addition, proofs might also fail
because of the incompleteness of the SMT solver. In addition to the
running example, we have used {\tt RMT} on a number of examples,
summarized in the table below:

{\small\begin{center}\begin{tabular}{|l|}
    \hline

LCTRS \hfill Reachability Property \\

\hline

Computation of $1 + \ldots + n$ \hfill Result is $n*(n+1)/2$ \\

\hline

Comp. of $\textit{gcd}{(u,v)}$ by \emph{rptd. subtractions} \hfill Result matches builtin \textit{gcd} function \\

\hline

Comp. of $\textit{gcd}{(u,v)}$ by \emph{rptd. divisions} \hfill Result matches builtin
                                       \textit{gcd} function \\

\hline

Mult. of two naturals by \emph{rptd. additions} \hfill Result matches builtin $\times$
                                 function \\

\hline

Comp. of $1^2 + \ldots + n^2$ \hfill Result is $n(n+1)(2n+1)/6$ \\

\hline

Comp. of $1^2 + \ldots + n^2$ \emph{w/out multiplications} \hfill Result is $n(n+1)(2n+1)/6$ \\

\hline

Semantics of an IMPerative language \hfill Program computing $1+\ldots+n$ is correct \\

\hline

Semantics of a FUNctional language  \hfill Program computing $1+\ldots+n$ is correct
                \\

\hline

Semantics of a FUNctional language \hfill Program computing $1^2+\ldots+n^2$ is
                             correct \\

\hline

\end{tabular}\end{center}
}

\paragraph{Implementation details. } {\tt RMT} contains roughly 5000
lines of code, including comments and blank lines. {\tt RMT} depends
only on the standard C++ libraries and it can be compiled by any
relatively modern C++ compiler out of the box. At the heart of {\tt
  RMT} is a hierarchy of classes for representing variables, function
symbols and terms. Terms are stored in DAG format, with maximum
structure sharing. The {\tt RMT} tool relies on an external SMT solver
to check satisfiability of constraints. By default, the only
dependency is the {\tt Z3} SMT solver, which should be installed and
its binary should be in the system path. A compile time switch allows
to use any other SMT solver that supports the SMTLIB interface, such
as CVC4~\cite{CVC4}. In order to reduce constraints over the full
signature to constraints over the builtin signature, {\tt RMT} uses a
unification modulo builtins algorithm (see~\cite{wollic-2018}), which
transforms any predicate $t_1 = t_2$ (where the terms $t_1, t_2$ can
possibly contain constructor symbols) into a set of builtin
constraints.

\section{Conclusion and Future Work}
\label{sec:conclusion}

We introduced a coinduction based method for proving reachability
properties of logically constrained term rewriting systems. We use a
coinductive definition of transition systems that unifies the handling
of finite and infinite executions. We propose two proof systems for
the problem above. The first one formalizes symbolic execution in
LCTRSs coinductively, with possibly infinite proof trees. This proof
system is complete, but its infinite proof trees cannot be used in
practice as proofs. In the second proof system we add to symbolic
execution a circularity proof rule, which allows to transform infinite
proof trees into finite trees. It is not always possible to find
finite proof trees, and we conjecture that establishing a given
reachability property is higher up in the arithmetic hierarchy.

We also proposed a semantics for logically constrained term rewriting
systems as transition systems over a model combining order-sorted
terms with builtin elements such as booleans, integers, etc. The
proposed semantics has the advantage of being simpler than the usual
semantics of LCTRSs defined in~\cite{DBLP:conf/frocos/KopN13}, which
requires two reduction relations (one for rewriting and one for
computing).  The approach proposed here also removes some technical
constraints such as variable inclusion of the rhs in the lhs, which is
important in modelling open systems, where the result of a transition
is non-deterministically chosen by the environment. In addition,
working in an order-sorted setting is indispensable in order to model
easily the semantics of programming
languages. %

\begin{sloppypar}In fact, proving program properties, like correctness and equivalence,
is one application of our method. A tool such as C2LCTRS
(\url{http://www.trs.cm.is.nagoya-u.ac.jp/c2lctrs/}) can be used to
convert the semantics of a C program into a LCTRS and then RMT can
prove reachability properties of the C program. Additionally, the
operational semantics of any language can be encoded as a
LCTRS~\cite{DBLP:journals/iandc/SerbanutaRM09} and then program
correctness is reducible to a particular reachability formula. But our
approach is not limited to programs, as any system that can be
modelled as a LCTRS is also amenable to our approach. We define
reachability in the sense of partial correctness (infinite execution
paths are not considered). Therefore termination should be established
in some other way~\cite{DBLP:journals/corr/Kop16}, as it is an
orthogonal concern. Our approach to reachability and LCTRSs extends to
working modulo AC (or more generally, modulo any set of equations E),
but we have not formally presented this to preserve brevity and
simplicity. For future work, we would like to test our approach on
other interesting problems that arrise in various domains. In
particular, it would be interesting to extend our approach to
reachability in the context of program
equivalence~\cite{DBLP:journals/fac/CiobacaLRR16}. An interesting
challenge is to add defined operations to the algebra underlying the
constrained term rewriting systems, which would allow a user to define
their own functions, which are not necessarily builtin.\end{sloppypar}

{\bf Acknowledgements.} We thank the anonymous reviewers for their
valuable suggestions. This work was supported by a grant of the
Romanian National Authority for Scientific Research and Innovation,
CNCS/CCCDI - UEFISICDI, project number PN-III-P2-2.1-BG-2016-0394,
within PNCDI III.

\bibliographystyle{plain}
\bibliography{refs}

\clearpage

\appendix

\section{Preliminaries}
\label{sec:prelim}

\subsection{Order-Sorted Algebra.}
\label{sec:osa}
In this subsection we recall the main definitions and notations from order-sorted algebra we use in this paper. More details can be found, e.g., in~\cite{osa1}.

An \emph{order-sorted signature} $\Sigma\eqbydef (S,\le, F)$ consists
of:
\begin{enumerate}
\item A set $S$ of \emph{sorts},
\item An $S^*\times S$-indexed family
  $F=\{F_{w, s}\mid w \in S^*, s\in S\}$ of sets whose elements are
  called \emph{operation symbols}, and
\item A partial order ${\le}\subseteq S\times S$,
\end{enumerate}
such that the following \emph{monotonicity condition} is
satisfied:%
\[f\in F_{w_1, s_1}\cap F_{w_2, s_2}\textrm{~and~}w_1\le
  w_2\textrm{~imply~}s_1\le s_2.\]
We often write $f:s_1\times\cdots\times s_n\to s$ for $f\in F_{w,s}$
with $w=s_1\ldots s_n$. We write $\varepsilon$ for the empty sequence
of sorts. A \emph{connected component} of $(S,\le)$ is an
$\simeq$-equivalence class, where $\simeq$ is the smallest equivalence
relation containing $\le$.

Given an order-sorted signature $\Sigma=(S,\le, F)$, a
\emph{$\Sigma$-model ($\Sigma$-algebra)} $M$ consists of:
\begin{itemize}
\item An $S$-indexed family $\{M_s\mid s\in S\}$ of \emph{carrier
    sets} such that $s\le s'$ implies
  $M_s\subseteq M_{s'}$;
\item A \emph{function}\footnote{If $w=s_1\ldots s_n$ then
    $M_w=M_{s_1}\times\cdots\times M_{s_n}$.} $M_f:M_w\to M_s$ for
  each operation symbol $f\in F_{w,s}$ such that if
  $f\in F_{w_1, s_1}\cap F_{w_2, s_2}$ and $w_1\le w_2$ then the
  corresponding functions $M_f: M_{w_1}\to M_{s_1}$ and
  $M_f: M_{w_2}\to M_{s_2}$ agree on $M_{w_1}$.
\end{itemize}
Let $\Sigma\eqbydef (S,\le, F)$ be an order-sorted signature and let
$X\eqbydef\{X_s\mid s\in S\}$ be an $S$-indexed family of
\emph{variables} such that $s\not=s'$ implies
$X_s\cap X_{s'}=\emptyset$. 
The $S$-indexed family
$T_\Sigma(X)=\{T_{\Sigma,s}(X)\mid s\in S\}$ of \emph{$\Sigma$-terms
  with variables $X$} is inductively defined as follows:
\begin{itemize}
\item $X_s\subseteq T_{\Sigma,s}(X)$;
\item if $f\in F_{w,s}$, $w=s_1\ldots s_n$, and
  $t_i\in T_{\Sigma,s_i}(X)$ for $i=1,\ldots, n$, then the expression
  $f(t_1,\ldots,t_n)$ belongs to $T_{\Sigma,s}(X)$;
\item if $s\le s'$ then $T_{\Sigma,s}(X)\subseteq T_{\Sigma,s'}(X)$.
\end{itemize}

We also make the standard assumption that constant symbols are not
followed by parentheses.

$T_\Sigma(X)$ can be organised as a $\Sigma$-model by considering
$T_\Sigma(X)_f:T_\Sigma(X)_w\to T_\Sigma(X)_s$ that maps the terms
$t_1,\ldots,t_n$ into $f(t_1,\ldots,t_n)$, where $f\in F_{w,s}$. The
set of \emph{ground terms} is $T_\Sigma = T_\Sigma(\emptyset)$. In
this paper we consider order-sorted signatures $(S,\le, F)$ that are
\emph{preregular}~\cite{osa1}, i.e., each term $t$ in $T_\Sigma(X)$
has a least sort $\ls(t)$. For this case, $T_\Sigma$ is a initial
order-sorted $\Sigma$-algebra and $T_\Sigma(X)$ is a free order-sorted
$\Sigma$-algebra. A \emph{context} $c[\cdot]$ is a term including
exactly one occurrence of a distinguished variable $\cdot$, and $c[t]$
denotes the term obtained from $c[\cdot]$ by replacing the $\cdot$
variable with the term $t$. The set of variables occurring in a term
$t$ is denoted by $\var(t)$, and $\var(t_1,\ldots,t_n)$ denotes
$\var(t_1)\cup\cdots\cup \var(t_n)$.

Given a $\Sigma$-model $M$, a \emph{variable assignment (valuation)}
is a function $\alpha:X\to M$ that sends a variable $x\in X_s$ into a
model element $\alpha(x)\in M_s$. A valuation $\alpha : X \to M$ is
extended to terms $\alpha:T_\Sigma(X)\to M$ by setting
$\alpha(f(t_1,\ldots,t_n))=M_f(\alpha(t_1),\ldots,\alpha(t_n))$. A
\emph{substitution} is a valuation $\sigma : X \to T_\Sigma(X)$ such
that the domain of $\sigma$,
$\dom(\sigma) = \{x \in X\mid \sigma(x)\not= x\}$, is
finite. Functions $\sigma : Y \to T_\Sigma(X)$ defined only on a
subset $Y \subset X$ and having a finite domain $\dom(\sigma)$ are
identified with the unique substitution $\sigma^e : X \to T_\Sigma(X)$
with $\dom(\sigma^e) = \dom(\sigma)\subseteq Y$. The identity substitution
$\textit{id} : X \to T_\Sigma(X)$ is defined as the unique
substitution with $\dom(\textit{id}) = \emptyset$.

If the partial order ${\le} \subseteq S \times S$ is the equality
relation, i.e. $s \le s'$ iff $s = s'$, then
$\Sigma \eqbydef (S, \le, F)$ is a \emph{many-sorted signature} and we
simply write $\Sigma \eqbydef (S, F)$.

An order-sorted signature $(S',\le', F')$ is a \emph{closed
  subsignature} of $(S,\le, F)$ if: 
\begin{enumerate}
\item $S'\subseteq S'$ and ${\le'}={\le}|_{S'}$,
\item $F'\subseteq F$ (as $S'\times {S'}^*$-indexed families) and
\item $f \in F_{w,s}$ and $w\in {S'}^*$ imply $f\in F'$ (and hence
  $s\in S'$).
\end{enumerate}
We refer~\cite{osa1} for a detailed presentation of order-sorted algebra.

\subsection{Coinduction.}
\label{sec:coind}
We briefly recall from~\cite{SangiorgiBook} (Chapter 2) the
coinductive definitions and the coinduction proof technique defined
using inference rules.

Given a set $U$, a \emph{ground inference rule} over $U$ is a tuple
$(a_1, \ldots, a_n, a)$\footnote{Here we consider only case when the set of the premises is finite.}, 
often written \[\dfrac{a_1 \ldots a_n}{a},\]
where $a, a_1, \ldots, a_n \in U$.

Given a set $R$ of inference rules over $U$ and a subset
$X \subseteq U$, the \emph{one-step closure of $X$ with respect to
  $R$} is the set $\widehat{R}(X) \subseteq U$ defined as
follows:
\[\widehat{R}(X) \eqbydef \left\{ a \,\middle |\, \dfrac{a_1 \ldots a_n}{a} \in R,
  a_1, \ldots, a_n \in X \right\}.\]

$\widehat{R}$ is a monotone endofunction defined over ${\cal P}(U)$,
which is a complete lattice. Hence $\widehat{R}$ has a least fixed
point and a greatest fixed point, by the Fixed Point Theorem
(see~\cite{SangiorgiBook}). By $\nu \widehat{R}$ we denote
\emph{the greatest fixed point of $\widehat{R}$}, that is the largest
set $X \subseteq U$ such that $X \supseteq \widehat{R}(X)$
(equivalently, $X$ is the largest set such that $X =
\widehat{R}(X)$). We say that a set is \emph{coinductively defined} if
it is the greatest fixed point $\nu \widehat{R}$ of some ground rule
system $R$.

\begin{example}
\label{ex:lists}
The system LIST, given below,  coinductively defines the possibly infinite lists over integers:
\[
[A]~\dfrac{}{\mathit{nil}}\qquad [B]~\dfrac{\ell}{z,\ell}~z\in\mathbb{Z}
\]
Let $\U$ be any set including all (finite and infinite) strings over $\mathbb{Z}\cup\{\mathit{nil},{\,}{,}{\,}\}$.  
Note that $[B]$ is a rule scheme; the ground rules are obtained by instantiating $z$ and $\ell$ with concrete integers and elements in $\U$, respectively. 
We have\\
\centerline{$\widehat{\sf LIST}(X)=\{z,\ell\mid z\in \mathbb{Z},\ell\in X\}\cup\{\mathit{nil}\}$.}\\
The set of possibly infinite (i.e., finite and infinite) lists is the greatest fixed point, $\mathbb{Z}^\infty=\nu\, \widehat{\sf LIST}$.
We obtain\\
\centerline{$
\begin{aligned}
\mathbb{Z}^\infty &=\U\cap\widehat{\sf LIST}(\U)\cap \widehat{\sf LIST}^2(\U)\cap\ldots\\
&= \U\cap (\{z_1,u\mid z_1\in \mathbb{Z},u\in \U\}\cup\{\mathit{nil}\})\\
&\phantom{{}= \U}\cap(\{z_1, z_2,u\mid z_1,z_2\in \mathbb{Z},u\in \U\}\cup\{z_1,\mathit{nil}\mid z_1\in \mathbb{Z}\}\cup\{\mathit{nil}\})\\
&\phantom{{}= \U}\cap\ldots
\end{aligned}$}\\ 
by Kleene's Theorem.
The set of infinite lists over $\mathbb{Z}$ is the greatest fixed point of the system consisting only of rule $[B]$, i.e. $\mathbb{Z}^\omega=\nu\,\widehat{[B]}$. 
\end{example}

  To prove that some element $x$ is in $\nu \widehat{R}$, we often use the
  well-known \emph{coinduction principle}:

  \begin{proposition}[The Coinduction Principle] Let $X \subseteq U$
    be a set such that $X \subseteq \widehat{R}(X)$. If $x \in X$,
    then $x \in \nu \widehat{R}$.

  \end{proposition}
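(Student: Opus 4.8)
The plan is to derive the statement directly from the Knaster--Tarski characterization of the greatest fixed point made available by the Fixed Point Theorem cited just above. Recall that $\widehat{R}$ is a monotone endofunction on the complete lattice $\mathcal{P}(U)$, so $\nu\widehat{R}$ exists and can be described as the largest set closed under $\widehat{R}$ in the \emph{post-fixed} sense, i.e. the largest $Y$ with $Y\subseteq\widehat{R}(Y)$. Sets $Y$ satisfying $Y\subseteq\widehat{R}(Y)$ are exactly those meeting the hypothesis of the proposition, so the whole argument reduces to showing that every such $Y$ is contained in $\nu\widehat{R}$.

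First I would make the characterization of $\nu\widehat{R}$ explicit by setting $Z\eqbydef\bigcup\{Y\subseteq U\mid Y\subseteq\widehat{R}(Y)\}$, the union of all post-fixed points. For any individual post-fixed point $Y$ we have $Y\subseteq Z$, and monotonicity of $\widehat{R}$ gives $Y\subseteq\widehat{R}(Y)\subseteq\widehat{R}(Z)$; taking the union over all such $Y$ yields $Z\subseteq\widehat{R}(Z)$, so $Z$ is itself a post-fixed point. Applying $\widehat{R}$ once more and using monotonicity shows that $\widehat{R}(Z)$ is also a post-fixed point, hence $\widehat{R}(Z)\subseteq Z$; together these give $Z=\widehat{R}(Z)$. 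Since every fixed point is a post-fixed point and is therefore contained in $Z$, while $Z$ is itself a fixed point, $Z$ is the greatest fixed point and $Z=\nu\widehat{R}$.

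With this in hand the proposition is immediate: if $X\subseteq\widehat{R}(X)$, then $X$ is one of the post-fixed points whose union is $Z$, so $X\subseteq Z=\nu\widehat{R}$, and hence $x\in X$ implies $x\in\nu\widehat{R}$. I do not expect a genuine obstacle here, as this is the standard coinduction principle; the only point requiring care is to invoke the \emph{post}-fixed characterization $Y\subseteq\widehat{R}(Y)$ of $\nu\widehat{R}$ (the direction matching the hypothesis of the proposition) rather than the pre-fixed condition, and, if one wishes to be fully self-contained, to re-derive that characterization from the bare Fixed Point Theorem as sketched above instead of merely quoting it.
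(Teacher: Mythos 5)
Your proof is correct. The paper itself does not prove this proposition: it states it as the well-known coinduction principle and defers to the cited Fixed Point Theorem (Sangiorgi, Chapter 2), so there is no in-paper argument to compare against. Your self-contained derivation is the standard Knaster--Tarski argument: the union $Z$ of all post-fixed points is itself a post-fixed point by monotonicity, one more application of $\widehat{R}$ shows it is a fixed point, and it dominates every fixed point, so $Z = \nu\widehat{R}$ and any $X$ with $X \subseteq \widehat{R}(X)$ sits inside it. Your insistence on the \emph{post}-fixed characterization is well placed: the paper's own gloss describes $\nu\widehat{R}$ as ``the largest set $X$ such that $X \supseteq \widehat{R}(X)$,'' which has the inclusion reversed (the largest pre-fixed point of an endofunction on $\mathcal{P}(U)$ is just $U$); your version, $X \subseteq \widehat{R}(X)$, is the correct one and is the one actually used throughout the paper's backward-closure arguments.
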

The coinduction principle can be represented in a more compact way by the following inference rule:
\[
\dfrac{X \subseteq \widehat{R}(X)}{X \subseteq \nu \widehat{R}}
\]
\begin{definition}
\label{def:coadm}
A rule $r$ is called \emph{coadmissible} for $R$ if  $\nu\,\widehat{R}=\nu\,\widehat{R\cup\{r\}}$.
\end{definition}
\begin{remark}
Adding a coadmissible rule $R$ does not change the greatest fixed point. Usually, a coadmissible rule can be used only finitely many times in a proof tree. Otherwise we may have unsound proofs, like that consisting only of coadmissible rules.
The main idea behind of a coadmissible rule $r$ is that if its premises can be derived using rules from $R$ (i.e. there are proof trees under $R$ for its premises), then we may find a proof tree under $R$ of its conclusion. Then, we can show by induction that any proof tree under $R\cup\{r\}$, where $r$ is applied only finitely many times, can be transformed into a proof tree under $R$ for the same conclusion.
\end{remark}

  We make extensive use of sets coinductively defined by rules. The
  underlying set $U$ will be understood each time from the shape of
  rules: e.g., if the hypotheses and the conclusion are pairs of
  execution paths and formulas, then the set $U$ is the set of all
  these pairs.

\section{Proofs of Helper Results}

The following result is a direct consequence of Definition~\ref{def:dvp}.
\begin{corollary}

  If $(M,\gtran{})\demModels\rform{P}{Q}$ then
  $(M,\gtran{})\demModels\rform{\partial(P\setminus Q)}{Q}$.

\end{corollary}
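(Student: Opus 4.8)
The plan is to unfold the definition of demonic validity and exploit the fact that $\nu\,\widehat{\sf DVP}$ is a fixed point of $\widehat{\sf DVP}$. By Definition~\ref{def:dvp}, the hypothesis $(M,\gtran{})\demModels\rform{P}{Q}$ means $\rform{P}{Q}\in\nu\,\widehat{\sf DVP}$, and since $\nu\,\widehat{\sf DVP}=\widehat{\sf DVP}(\nu\,\widehat{\sf DVP})$, the formula $\rform{P}{Q}$ must arise as the conclusion of one of the two {\sf DVP} rules, with all premises already contained in $\nu\,\widehat{\sf DVP}$. I would then proceed by a case analysis on which rule was applied.

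First, suppose $\rform{P}{Q}$ is the conclusion of $\semrule{Subsumption}$. Then its side condition $P\subseteq Q$ holds, so $P\setminus Q=\emptyset$ and hence $\partial(P\setminus Q)=\partial(\emptyset)=\emptyset$. Since $\emptyset\subseteq Q$ holds trivially, $\semrule{Subsumption}$ (which has no premises) applies again and yields $\rform{\partial(P\setminus Q)}{Q}=\rform{\emptyset}{Q}\in\widehat{\sf DVP}(\nu\,\widehat{\sf DVP})=\nu\,\widehat{\sf DVP}$, which is exactly the desired conclusion.

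Second, suppose $\rform{P}{Q}$ is the conclusion of $\semrule{Step}$. Then the single premise of that rule instance is precisely $\rform{\partial(P\setminus Q)}{Q}$, and being a premise of a rule that witnesses membership of $\rform{P}{Q}$ in the fixed point, it already lies in $\nu\,\widehat{\sf DVP}$. This is exactly $(M,\gtran{})\demModels\rform{\partial(P\setminus Q)}{Q}$, as required.

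Since both cases deliver the conclusion, the corollary follows. The argument is essentially immediate once the fixed-point equation is in hand; the only point requiring a moment's care is the $\semrule{Subsumption}$ case, where one must observe that $P\subseteq Q$ collapses $\partial(P\setminus Q)$ to the empty state predicate, so that the subsumption side condition is trivially re-established for the target formula. I do not expect any genuine obstacle here, which is consistent with the text's remark that the result is a direct consequence of Definition~\ref{def:dvp}.
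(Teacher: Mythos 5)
Your proof is correct and matches the paper's intent: the paper offers no explicit argument beyond calling the corollary a direct consequence of Definition~\ref{def:dvp}, and your unfolding of the fixed-point equation $\nu\,\widehat{\sf DVP}=\widehat{\sf DVP}(\nu\,\widehat{\sf DVP})$ followed by the two-rule case analysis is exactly that direct consequence spelled out. Both cases are handled properly, including the observation that $P\subseteq Q$ forces $\partial(P\setminus Q)=\emptyset$ so that $\semrule{Subsumption}$ reapplies.
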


The disjunction of valid predicates with the same target is a valid
predicate as well:

\begin{proposition} \label{prop:joinrp} If $(M,\gtran{})\demModels\rform{P_i}{Q}$ for $i=1,2$, then
$(M,\gtran{})\demModels\rform{P_1\cup P_2}{Q}$.
\end{proposition}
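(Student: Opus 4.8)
The plan is to prove the statement directly by coinduction, using the coinduction principle for $\nu\,\widehat{\sf DVP}$ recalled in Appendix~\ref{sec:coind} and the rules of Definition~\ref{def:dvp}. Concretely, I would exhibit a set $X$ of reachability predicates satisfying the closure condition $X\subseteq\widehat{\sf DVP}(X)$ and containing $\rform{P_1\cup P_2}{Q}$. The natural candidate, with $Q$ the fixed target of the proposition, is
\[ X \eqbydef \{\,\rform{A\cup B}{Q}\mid (M,\gtran{})\demModels\rform{A}{Q}\text{ and }(M,\gtran{})\demModels\rform{B}{Q}\,\}. \]
Since $\demModels\rform{P_1}{Q}$ and $\demModels\rform{P_2}{Q}$ hold by hypothesis, we have $\rform{P_1\cup P_2}{Q}\in X$, so once the closure condition is verified the coinduction principle gives $X\subseteq\nu\,\widehat{\sf DVP}$ and hence $(M,\gtran{})\demModels\rform{P_1\cup P_2}{Q}$.

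To check closure, I take an arbitrary $\rform{A\cup B}{Q}\in X$ and show it is the conclusion of an instance of one of the two {\sf DVP} rules whose premise, if any, again lies in $X$. If $A\cup B\subseteq Q$, then $\semrule{Subsumption}$ applies and nothing further is needed. Otherwise $(A\cup B)\setminus Q\neq\emptyset$ and I apply $\semrule{Step}$, which requires two things. First, $(A\cup B)\setminus Q=(A\setminus Q)\cup(B\setminus Q)$ must be runnable: for any $\gamma$ in it, say $\gamma\in A\setminus Q$, we have $A\not\subseteq Q$, so since $\rform{A}{Q}\in\nu\,\widehat{\sf DVP}=\widehat{\sf DVP}(\nu\,\widehat{\sf DVP})$ the only rule that can witness validity of $\rform{A}{Q}$ is $\semrule{Step}$, whence $A\setminus Q$ is runnable and $\gamma$ has a successor. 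Second, the premise $\rform{\partial((A\cup B)\setminus Q)}{Q}$ must belong to $X$: using that $\partial$ and set difference both distribute over union, $\partial((A\cup B)\setminus Q)=\partial(A\setminus Q)\cup\partial(B\setminus Q)$, and by the corollary stated just above (if $\demModels\rform{P}{Q}$ then $\demModels\rform{\partial(P\setminus Q)}{Q}$) both $\rform{\partial(A\setminus Q)}{Q}$ and $\rform{\partial(B\setminus Q)}{Q}$ are valid, so the premise indeed has the shape required for membership in $X$.

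The distributivity facts are immediate from the definitions and are purely routine. The one step that needs genuine care—and the main obstacle—is the runnability side condition of $\semrule{Step}$: enlarging the source predicate to $A\cup B$ could a priori introduce a stuck state of, say, $B\setminus Q$ that would block $\semrule{Step}$. The argument above rules this out by noting that membership of such a state in $B\setminus Q$ already forces $\semrule{Step}$ (rather than $\semrule{Subsumption}$) to be the rule witnessing $\demModels\rform{B}{Q}$, and $\semrule{Step}$ carries exactly the runnability guarantee needed. An alternative route would go through the path characterization (the proposition relating demonic validity to satisfaction by all execution paths starting from the source predicate), but that would demand a separate coinductive argument showing that path satisfaction of $\rform{P}{Q}$ is monotone in $P$; the direct coinduction on {\sf DVP} sketched here avoids this and is therefore preferable.
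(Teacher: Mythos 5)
Your proof is correct and takes essentially the same route as the paper: both exhibit the set of unions of valid predicates with the fixed target $Q$ and show it is backward closed w.r.t.\ $\widehat{\sf DVP}$, concluding by the coinduction principle. If anything, your version is more careful than the paper's, which applies $\semrule{Step}$ without separating out the case $P_1\cup P_2\subseteq Q$ (where $\semrule{Subsumption}$ is needed) and without explicitly discharging the runnability side condition as you do.
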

\begin{proof} 
We show that the set $X=\{\rform{P_1\cup P_2}{Q}\mid (M,\gtran{})\demModels\rform{P_i}{Q}, i=1,2\}$ is backward closed w.r.t. $\widehat{\sf DVP}$, i.e. $X\subseteq \widehat{\sf DVP}(X)$. Note first that if $(M,\gtran{})\demModels\rform{P}{Q}$ then obviously $\rform{P}{Q}\in X$ since $P=P\cup\emptyset$ and $(M,\gtran{})\demModels\rform{\emptyset}{Q}$.
\\
Let $\rform{P_1\cup P_2}{Q}\in X$. 
We have $\partial((P_1\cup P_2)\setminus Q)=\partial(P_1\setminus Q)\cup \partial(P_2\setminus Q)$, which implies $\rform{\partial((P_1\cup P_2)\setminus Q)}{Q}\in X$ (since $(M,\gtran{})\demModels\rform{\partial(P_i\setminus Q)}{Q}$, $i=1,2$).
It follows that $\rform{P_1\cup P_2}{Q}\in \widehat{\sf DVP}(X)$ by the rule $\semrule{Step}$.
\qed\end{proof}
\begin{corollary}\label{cor:joinrp}
$X = \nu\,\widehat{\sf DVP}$, where $X$ is the set from the proof of Proposition~\ref{prop:joinrp}.
\end{corollary}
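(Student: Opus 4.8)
The plan is to prove the equality $X = \nu\,\widehat{\sf DVP}$ by establishing the two inclusions separately. The inclusion $X \subseteq \nu\,\widehat{\sf DVP}$ is precisely the content of Proposition~\ref{prop:joinrp}: its proof establishes $X \subseteq \widehat{\sf DVP}(X)$, so the coinduction principle yields $X \subseteq \nu\,\widehat{\sf DVP}$. Equivalently, every element $\rform{P_1 \cup P_2}{Q}$ of $X$ is demonically valid, i.e. lies in $\nu\,\widehat{\sf DVP}$, so nothing new needs to be shown for this direction.

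For the reverse inclusion $\nu\,\widehat{\sf DVP} \subseteq X$, I would take an arbitrary $\rform{P}{Q} \in \nu\,\widehat{\sf DVP}$, that is $(M,\gtran{})\demModels\rform{P}{Q}$, and exhibit it as an element of $X$. The key observation, already noted in the proof of Proposition~\ref{prop:joinrp}, is that $(M,\gtran{})\demModels\rform{\emptyset}{Q}$ holds trivially: since $\emptyset \subseteq Q$, the predicate $\rform{\emptyset}{Q}$ is derivable by the $\semrule{Subsumption}$ rule and is therefore in $\nu\,\widehat{\sf DVP}$. Writing $P = P \cup \emptyset$ and taking $P_1 = P$, $P_2 = \emptyset$, both $\rform{P_1}{Q}$ and $\rform{P_2}{Q}$ are demonically valid, so $\rform{P}{Q} = \rform{P_1 \cup P_2}{Q}$ belongs to $X$ by the very definition of $X$.

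Combining the two inclusions gives $X = \nu\,\widehat{\sf DVP}$. I expect no substantial obstacle: the forward inclusion is a restatement of the already-proven proposition, and the backward inclusion rests only on the trivial validity of $\rform{\emptyset}{Q}$ together with the identity $P = P \cup \emptyset$. The single point requiring a moment's care is justifying $(M,\gtran{})\demModels\rform{\emptyset}{Q}$, which is immediate from $\semrule{Subsumption}$.
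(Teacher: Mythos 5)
Your proof is correct and follows essentially the same route the paper intends: the inclusion $X \subseteq \nu\,\widehat{\sf DVP}$ comes from the backward-closure $X \subseteq \widehat{\sf DVP}(X)$ established in the proof of Proposition~\ref{prop:joinrp} plus the coinduction principle, and the reverse inclusion is exactly the observation already made at the start of that proof ($P = P \cup \emptyset$ together with $(M,\gtran{})\demModels\rform{\emptyset}{Q}$ via $\semrule{Subsumption}$). Nothing is missing.
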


Proposition~\ref{prop:joinrp} allows to extend {\sf DVP} with the
following coadmissible inference rule: 
  \[
  {\semrule{Union}}~\dfrac{\rform{P_1}{Q}~\rform{P_2}{Q}}{\rform{P_1\cup
      P_2}{Q}}~P_1\not=\emptyset\not=P_2. \]

The following result shows that the set of demonic valid reachability predicates is closed under the subset relation.

\begin{proposition} \label{prop:subsetrp} If
  $(M,\gtran{})\demModels\rform{P}{Q}$ and $P'\subseteq P$ then
  $(M,\gtran{})\demModels\rform{P'}{Q}$.
\end{proposition}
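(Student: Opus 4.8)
The plan is to prove membership in the greatest fixed point $\nu\,\widehat{\sf DVP}$ directly via the coinduction principle, mirroring the argument used for Proposition~\ref{prop:joinrp}. Fixing the target $Q$, I would introduce the candidate set
\[
X \eqbydef \{\rform{P'}{Q} \mid \exists P.\ P' \subseteq P \text{ and } (M,\gtran{})\demModels\rform{P}{Q}\}.
\]
The pair $\rform{P'}{Q}$ of interest belongs to $X$ (take $P$ itself as the witness, using the hypothesis $(M,\gtran{})\demModels\rform{P}{Q}$), so by the coinduction principle it suffices to establish that $X$ is backward closed, i.e. $X \subseteq \widehat{\sf DVP}(X)$.

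To show backward closure, I would take an arbitrary $\rform{P'}{Q}\in X$ with witness $P \supseteq P'$ and split on whether $P'\subseteq Q$. If $P'\subseteq Q$, then $\rform{P'}{Q}$ is immediately an instance of the conclusion of $\semrule{Subsumption}$, hence lies in $\widehat{\sf DVP}(X)$ with no premises required. Otherwise $P'\setminus Q\neq\emptyset$; since $P'\subseteq P$ this forces $P\setminus Q\supseteq P'\setminus Q\neq\emptyset$, so $P\not\subseteq Q$ and the derivation of $\rform{P}{Q}\in\nu\,\widehat{\sf DVP}=\widehat{\sf DVP}(\nu\,\widehat{\sf DVP})$ cannot be an instance of $\semrule{Subsumption}$. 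Consequently it must be an instance of $\semrule{Step}$, which yields that $P\setminus Q$ is runnable and that $\rform{\partial(P\setminus Q)}{Q}\in\nu\,\widehat{\sf DVP}$.

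It then remains to verify the side condition and the premise of $\semrule{Step}$ for $\rform{P'}{Q}$. For runnability of $P'\setminus Q$: it is nonempty by the case assumption, and every one of its elements lies in $P\setminus Q$, which is runnable, hence has a $\gtran{}$-successor; thus $P'\setminus Q$ is runnable. For the premise, monotonicity of $\partial$ applied to $P'\setminus Q\subseteq P\setminus Q$ gives $\partial(P'\setminus Q)\subseteq\partial(P\setminus Q)$; combined with $(M,\gtran{})\demModels\rform{\partial(P\setminus Q)}{Q}$, this places $\rform{\partial(P'\setminus Q)}{Q}$ in $X$ by definition. Applying $\semrule{Step}$ then shows $\rform{P'}{Q}\in\widehat{\sf DVP}(X)$, completing the proof that $X\subseteq\widehat{\sf DVP}(X)$ and hence, by coinduction, that $(M,\gtran{})\demModels\rform{P'}{Q}$.

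The main obstacle I anticipate is the interaction with the runnability guard of $\semrule{Step}$: passing to the subset $P'$ could in principle empty out $P'\setminus Q$ or destroy the ``has a successor'' requirement, so the case split on $P'\subseteq Q$ is essential --- it is precisely what lets me conclude that the larger predicate was itself derived by $\semrule{Step}$ rather than $\semrule{Subsumption}$, thereby transporting both the runnability and the valid derivative premise down to $P'$. The only other ingredient is the evident monotonicity of the derivative operator $\partial$, which is routine.
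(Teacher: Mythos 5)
Your proof is correct and follows essentially the same route as the paper: the same candidate set $X$, the same backward-closure argument, and the same use of monotonicity of $\partial$. If anything, your version is slightly more careful than the paper's, since by splitting on $P'\subseteq Q$ first you explicitly guarantee the runnability side condition of $\semrule{Step}$ for $P'\setminus Q$ (which could fail if $P'\setminus Q=\emptyset$), a point the paper's case~2 glosses over.
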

\begin{proof} 
We show that the set $X=\{\rform{P'}{Q}\mid P'\subseteq P, (M,\gtran{})\demModels\rform{P}{Q}\}$ is backward closed w.r.t. $\widehat{\sf DVP}$, i.e. $X\subseteq \widehat{\sf DVP}(X)$.
\\
Let $\rform{P'}{Q}\in X$. It follows that there is $P$ such that $(M,\gtran{})\demModels\rform{P}{Q}$ and $P'\subseteq P$. Let $\it PT$ be a proof tree of $\rform{P}{Q}$ under $\widehat{\sf DVP}$. We distinguish the following two cases:
\\
1. $P\subseteq Q$. It follows that $P'\subseteq Q$ and hence $\rform{P'}{Q}\in \widehat{\sf DVP}(X)$.
\\
2. The unique child of the root is $\rform{\partial(P\setminus Q)}{Q}$. We have $\partial(P'\setminus Q)\subseteq \partial(P\setminus Q)$ and hence $\rform{\partial(P'\setminus Q)}{Q}\in X$, which implies $\rform{P'}{Q}\in \widehat{\sf DVP}(X)$.
\qed\end{proof}

To show the demonic validity of a reachability predicate, we have to
find a proof tree only for the state predicate of not already reached
target states:

\begin{proposition} \label{prop:redrp} If
  $(M,\gtran{})\demModels\rform{P}{Q}$ iff
  $(M,\gtran{})\demModels\rform{P\setminus Q}{Q}$.
\end{proposition}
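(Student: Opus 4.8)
The plan is to derive both implications directly from the two structural properties of demonically valid reachability predicates that have already been established: closure under subsets (Proposition~\ref{prop:subsetrp}) and closure under unions with a common target (Proposition~\ref{prop:joinrp}). No fresh coinductive argument is needed; the work is entirely set-theoretic bookkeeping around the set $Q$.

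For the forward direction, suppose $(M,\gtran{})\demModels\rform{P}{Q}$. Since $P\setminus Q\subseteq P$, Proposition~\ref{prop:subsetrp} immediately yields $(M,\gtran{})\demModels\rform{P\setminus Q}{Q}$.

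For the backward direction, suppose $(M,\gtran{})\demModels\rform{P\setminus Q}{Q}$. I would decompose $P$ as $P=(P\setminus Q)\cup(P\cap Q)$. The first piece is demonically valid by hypothesis. The second piece satisfies $P\cap Q\subseteq Q$, so $\rform{P\cap Q}{Q}$ is an instance of the $\semrule{Subsumption}$ rule of {\sf DVP} (Definition~\ref{def:dvp}) and is therefore demonically valid. Applying Proposition~\ref{prop:joinrp} to these two predicates, which share the target $Q$, gives $(M,\gtran{})\demModels\rform{(P\setminus Q)\cup(P\cap Q)}{Q}$, that is, $(M,\gtran{})\demModels\rform{P}{Q}$, as required.

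The only point that needs a moment's care is that one of the two pieces in the decomposition may be empty (for instance $P\cap Q=\emptyset$ when $P$ and $Q$ are disjoint, or $P\setminus Q=\emptyset$ when $P\subseteq Q$). This causes no difficulty: the empty predicate is always demonically valid, since $\emptyset\subseteq Q$ triggers $\semrule{Subsumption}$, and Proposition~\ref{prop:joinrp} as stated places no non-emptiness restriction on its arguments; the emptiness hypothesis appears only in the derived coadmissible $\semrule{Union}$ rule, which I do not invoke. Hence there is no genuine obstacle, and the proof is a short consequence of the earlier closure results.
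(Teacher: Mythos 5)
Your proof is correct and is essentially identical to the paper's: the forward direction is exactly the application of Proposition~\ref{prop:subsetrp} to $P\setminus Q\subseteq P$, and the backward direction uses the same decomposition $P=(P\setminus Q)\cup(P\cap Q)$ with $\semrule{Subsumption}$ and Proposition~\ref{prop:joinrp}. Your extra remark about empty pieces is accurate (Proposition~\ref{prop:joinrp} imposes no non-emptiness condition, unlike the derived $\semrule{Union}$ rule) but not needed.
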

\begin{proof}
\textit{Reverse implication} ($\Leftarrow$). We have $P=(P\setminus Q)\cup (P\cap Q)$.  Since $P\cap Q\subseteq Q$, we obviously have $(M,\gtran{})\demModels\rform{P\cap Q}{Q}$. The conclusion follows for by applying Proposition~\ref{prop:joinrp}.
\\
\textit{Direct implication} ($\Rightarrow$). Since $(P\setminus Q)\subseteq P$, the conclusion follows by Proposition~\ref{prop:subsetrp}.
\qed\end{proof}

A starting state of a demonically valid reachability predicate that is not in the target state predicate must be runnable:

\begin{proposition} \label{prop:runnablerp} If
  $(M,\gtran{})\demModels\rform{P}{Q}$ then $P\setminus Q$ is
  runnable.
\end{proposition}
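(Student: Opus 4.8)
The plan is to convert the coinductive hypothesis into concrete structural information by unfolding the greatest fixed point a single time. Since $(M,\gtran{})\demModels\rform{P}{Q}$ unfolds to $\rform{P}{Q}\in\nu\,\widehat{\sf DVP}$ and $\nu\,\widehat{\sf DVP}$ is a fixed point of $\widehat{\sf DVP}$, we get $\rform{P}{Q}\in\widehat{\sf DVP}(\nu\,\widehat{\sf DVP})$; that is, $\rform{P}{Q}$ is the conclusion of an instance of either $\semrule{Subsumption}$ or $\semrule{Step}$ whose premise already lies in $\nu\,\widehat{\sf DVP}$. I would then finish by a case analysis on which of these two rules justifies the membership, observing that we never need to descend into the proof tree further than this first step.

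The $\semrule{Step}$ case is immediate: the side condition attached to $\semrule{Step}$ is precisely ``$P\setminus Q$ runnable'', which is exactly the conclusion sought, so no further argument is required. Moreover, this is in fact the only relevant case once $P\setminus Q\neq\emptyset$: the side condition of $\semrule{Subsumption}$ is $P\subseteq Q$, i.e.\ $P\setminus Q=\emptyset$, so $\semrule{Subsumption}$ cannot have been applied when $P\setminus Q$ is nonempty, and the derivation must therefore use $\semrule{Step}$.

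The delicate point, which I expect to be the main obstacle, is the degenerate $\semrule{Subsumption}$ branch $P\subseteq Q$, where $P\setminus Q=\emptyset$: the part of runnability asserting that every state of $P\setminus Q$ has a successor then holds vacuously, and the nonemptiness clause is only meaningful for the non-target starting states, so the statement is to be read under the standing assumption $P\setminus Q\neq\emptyset$ (the case of interest for ruling out stuck executions). As an independent cross-check of the reducibility content, I would argue directly from the execution-path characterisation of demonic validity stated earlier: if some $\gamma\in P\setminus Q$ were irreducible, then $\gamma$ is itself a complete execution path with $\hd(\gamma)=\gamma\in P$, hence satisfies $\rform{P}{Q}$ and $\langle\gamma,\rform{P}{Q}\rangle\in\nu\,\widehat{\sf EPSRP}$; but unfolding $\widehat{\sf EPSRP}$ once, the first rule fails because $\gamma\notin Q$ and the second fails because $\gamma$ is not of the composite form $\gamma_0\gpath{}\tau$, a contradiction that forces every $\gamma\in P\setminus Q$ to be reducible.
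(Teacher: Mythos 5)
Your argument is correct and coincides with the paper's own (one-line) proof: the paper simply asserts that the claim ``follows directly from the definition of {\sf DVP}'', which is exactly your one-step unfolding of $\nu\,\widehat{\sf DVP}$ into $\widehat{\sf DVP}(\nu\,\widehat{\sf DVP})$ followed by the case split between $\semrule{Subsumption}$ and $\semrule{Step}$, the latter's side condition being verbatim the conclusion. You are also right to single out the degenerate $\semrule{Subsumption}$ branch: when $P\subseteq Q$ the set $P\setminus Q=\emptyset$ fails the nonemptiness clause of runnability, so the proposition as literally stated needs the reading you give (and your independent {\sf EPSRP}-based check of the reducibility half is a sound bonus the paper does not include).
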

\begin{proof}
It follows directly from the definition of {\sf DVP}.
\qed\end{proof}
\begin{corollary}
\label{cor:runnablerp}
If $P\cap Q=\emptyset$ and $(M,\gtran{})\demModels\rform{P}{Q}$ then $P$ is runnable.
\end{corollary}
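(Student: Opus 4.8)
The plan is to exploit that $\nu\,\widehat{\sf DVP}$ is a fixed point, so the hypothesis $(M,\gtran{})\demModels\rform{P}{Q}$, i.e. $\rform{P}{Q}\in\nu\,\widehat{\sf DVP}$, can be unfolded exactly one step. Since $\nu\,\widehat{\sf DVP}=\widehat{\sf DVP}(\nu\,\widehat{\sf DVP})$, the pair $\rform{P}{Q}$ must be the conclusion of an instance of one of the two {\sf DVP} rules whose premises already lie in $\nu\,\widehat{\sf DVP}$. This splits the argument into two cases according to which rule justifies $\rform{P}{Q}$: either $\semrule{Subsumption}$ (so $P\subseteq Q$) or $\semrule{Step}$ (so its side condition holds).

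If the justifying rule is $\semrule{Step}$, there is nothing left to prove: its side condition is \emph{precisely} that $P\setminus Q$ is runnable. The difficulty, and the step I expect to be the main obstacle, is the $\semrule{Subsumption}$ branch. There membership in the greatest fixed point only tells us that \emph{some} rule justifies $\rform{P}{Q}$, and if $P\subseteq Q$ then $P\setminus Q=\emptyset$, which under the stated definition of \emph{runnable} (nonemptiness plus the existence of a successor for every element) is not literally runnable. So a naive one-step unfolding of $\rform{P}{Q}$ itself does not close the proof; the useful content of the claim is really that \emph{every} state $\gamma\in P\setminus Q$ has a $\gtran{}$-successor.

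To neutralise this obstacle I would restrict to singletons. Fix an arbitrary $\gamma\in P\setminus Q$. Since $\{\gamma\}\subseteq P$, Proposition~\ref{prop:subsetrp} (closure of demonically valid predicates under the subset relation) yields $(M,\gtran{})\demModels\rform{\{\gamma\}}{Q}$. Unfolding this membership one step as above, the $\semrule{Subsumption}$ rule \emph{cannot} apply, because it would require $\{\gamma\}\subseteq Q$, i.e. $\gamma\in Q$, contradicting $\gamma\in P\setminus Q$. Hence $\rform{\{\gamma\}}{Q}$ must be justified by $\semrule{Step}$, whose side condition forces $\{\gamma\}\setminus Q=\{\gamma\}$ to be runnable; in particular $\gamma$ has a $\gtran{}$-successor. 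As $\gamma$ was arbitrary, every state of $P\setminus Q$ has a successor, which is exactly the runnability assertion for $P\setminus Q$ (the remaining $\semrule{Subsumption}$ case for $P$ itself, where $P\setminus Q=\emptyset$, contributes no states and is vacuous). The only genuinely delicate point throughout is this interplay between the disjunctive nature of one-step closure and the nonemptiness requirement in \emph{runnable}, which the singleton restriction via Proposition~\ref{prop:subsetrp} resolves cleanly.
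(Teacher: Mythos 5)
Your proof is correct and rests on the same underlying mechanism as the paper's -- a one-step unfolding of the greatest fixed point $\nu\,\widehat{\sf DVP}$ followed by a case split on which {\sf DVP} rule justified $\rform{P}{Q}$ -- but you take a detour the paper does not need. The paper obtains the corollary from Proposition~\ref{prop:runnablerp} ($P\setminus Q$ is runnable whenever $(M,\gtran{})\demModels\rform{P}{Q}$, justified ``directly from the definition of {\sf DVP}''), and uses $P\cap Q=\emptyset$ only to rewrite $P\setminus Q$ as $P$. Under the corollary's hypothesis the $\semrule{Subsumption}$ branch that worries you is already excluded globally: $P\subseteq Q$ together with $P\cap Q=\emptyset$ forces $P=\emptyset$, so for nonempty $P$ the rule applied at the root must be $\semrule{Step}$, whose side condition is literally the conclusion. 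Your singleton localization via Proposition~\ref{prop:subsetrp} is therefore unnecessary here, but it is not wasted effort: it is exactly the argument needed to make the more general Proposition~\ref{prop:runnablerp} true in its only reasonable reading (every state of $P\setminus Q$ has a $\gtran{}$-successor), since there the $\semrule{Subsumption}$ case genuinely can occur and the paper's one-line justification silently ignores it. One caveat applies equally to your argument and to the paper's: if $P=\emptyset$ the corollary is false as written, because \emph{runnable} requires nonemptiness; your ``vacuous'' remark establishes that every element of $P$ has a successor, not that $P\neq\emptyset$. That degenerate case is evidently meant to be excluded by both.
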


\begin{proposition} \label{pro:eqrform1} If
  $\var(\varphi_1)\cap\var(\varphi')=\var(\varphi_2)\cap\var(\varphi')$
  and $\tsem{\sigma(\varphi_1)}= \tsem{\sigma(\varphi_2)}$ for all
  $\sigma:\var(\varphi_i)\cap\var(\varphi')\to M^\Sigma$, then
  $\rform{\varphi_1}{\varphi'}\equiv_\R\rform{\varphi_2}{\varphi'}$.
\end{proposition}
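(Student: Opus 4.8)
The plan is to reduce the claimed equivalence directly to the definition of demonic satisfaction $\R\demModels(-)$ and then observe that, for each relevant valuation, the two reachability predicates being compared are literally identical, so no coinductive reasoning is needed. Write $V \eqbydef \var(\varphi_1)\cap\var(\varphi')$. By the first hypothesis $V = \var(\varphi_2)\cap\var(\varphi')$, so the valuations $\sigma : V \to M^\Sigma$ that appear when unfolding $\R\demModels\rform{\varphi_1}{\varphi'}$ and those that appear when unfolding $\R\demModels\rform{\varphi_2}{\varphi'}$ range over exactly the same set. This is the only place the first hypothesis is used.

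Next I would fix an arbitrary $\sigma : V \to M^\Sigma$ and compare the two induced reachability predicates over $(M^\Sigma,\gtran{\R})$. Their target state predicates are both $\tsem{\sigma(\varphi')}$ — the same set, since $\sigma$ restricted to $\var(\varphi')$ and $\varphi'$ itself do not depend on the choice between $\varphi_1$ and $\varphi_2$ — while their source state predicates are $\tsem{\sigma(\varphi_1)}$ and $\tsem{\sigma(\varphi_2)}$, which are equal by the second hypothesis. Hence $\rform{\tsem{\sigma(\varphi_1)}}{\tsem{\sigma(\varphi')}}$ and $\rform{\tsem{\sigma(\varphi_2)}}{\tsem{\sigma(\varphi')}}$ are the very same reachability predicate, so one is demonically valid in $(M^\Sigma,\gtran{\R})$ exactly when the other is.

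Finally, taking the conjunction of this pointwise equivalence over all $\sigma : V \to M^\Sigma$ and recalling the definition of $\R\demModels(-)$ yields $\R\demModels\rform{\varphi_1}{\varphi'}$ iff $\R\demModels\rform{\varphi_2}{\varphi'}$. Since the argument is carried out for an arbitrary $\R$, this is precisely the equivalence $\rform{\varphi_1}{\varphi'}\equiv_\R\rform{\varphi_2}{\varphi'}$ claimed (and, quantifying over all $\R$, the unsubscripted relation $\equiv$ as well).

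I do not expect any genuine obstacle: for each $\sigma$ the objects being compared coincide on the nose rather than being merely interderivable, so the whole statement is an unfolding of definitions. The only points requiring care are bookkeeping ones — verifying that the two instances of the definition of $\R\demModels(-)$ quantify over the same set of valuations (guaranteed by the first hypothesis) and that applying $\sigma$ to $\varphi'$ is insensitive to whether $\varphi'$ is paired with $\varphi_1$ or $\varphi_2$, so that the target predicate is genuinely common to both sides.
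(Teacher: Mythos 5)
Your proposal is correct and is essentially the paper's own argument, which states in one line that $\rform{\varphi_1}{\varphi'}$ and $\rform{\varphi_2}{\varphi'}$ define the same reachability predicate for each shared-variable valuation $\sigma$; you have simply spelled out the two bookkeeping points (the valuation sets coincide by the first hypothesis, and the source and target predicates coincide pointwise by the second). No gap, and no difference in approach beyond the level of detail.
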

\begin{proof}
$\rform{\varphi_1}{\varphi'}$ and $\rform{\varphi_2}{\varphi'}$ define the same reachability predicate for each $\sigma:\var(\varphi_i)\cap\var(\varphi')\to M^\Sigma$.
\qed\end{proof}
\begin{proposition} \label{pro:eqrform2}
  $\rform{\ct{t}{\phi}}{\varphi'}\equiv
  \rform{\ct{z}{z \qeq t\land\phi}}{\varphi'}$, where $z$ is a fresh
  variable (it does not appear in $\rform{\ct{t}{\phi}}{\varphi'}$).
\end{proposition}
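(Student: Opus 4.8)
The plan is to derive this equivalence as an instance of Proposition~\ref{pro:eqrform1}, taking $\varphi_1 \eqbydef \ct{t}{\phi}$ and $\varphi_2 \eqbydef \ct{z}{z \qeq t \land \phi}$ and leaving $\varphi'$ fixed. That proposition has two hypotheses, and the whole argument amounts to checking both and then invoking it. First I would verify the shared-variable condition. Since $\var(\varphi_1) = \var(t) \cup \var(\phi)$ while $\var(\varphi_2) = \{z\} \cup \var(t) \cup \var(\phi)$, and since $z$ is fresh so that $z \notin \var(\varphi')$, intersecting both with $\var(\varphi')$ discards $z$ and yields $\var(\varphi_1) \cap \var(\varphi') = \var(\varphi_2) \cap \var(\varphi')$, exactly as required.

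Next I would establish the semantic condition $\tsem{\sigma(\varphi_1)} = \tsem{\sigma(\varphi_2)}$ for every $\sigma : \var(\varphi_i) \cap \var(\varphi') \to M^\Sigma$. Because $z$ is not a shared variable, $\sigma$ leaves it untouched, so $\sigma(\varphi_2) = \ct{z}{z \qeq \sigma(t) \land \sigma(\phi)}$. Unfolding Definition~\ref{def:statepredicatesemantics}, its semantics is $\{\alpha(z) \mid M^\Sigma, \alpha \models z \qeq \sigma(t) \land \sigma(\phi)\}$; any such $\alpha$ forces $\alpha(z) = \alpha(\sigma(t))$, so this set equals $\{\alpha(\sigma(t)) \mid M^\Sigma, \alpha \models \sigma(\phi)\} = \tsem{\sigma(\varphi_1)}$. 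The $\subseteq$ inclusion is immediate from this rewriting; the $\supseteq$ inclusion is where the freshness hypothesis does its work: given $v = \alpha(\sigma(t))$ with $M^\Sigma, \alpha \models \sigma(\phi)$, I would define $\alpha'$ to coincide with $\alpha$ everywhere except $\alpha'(z) = v$. Since $\sigma$ substitutes only ground terms and $z$ is fresh, $z$ occurs in neither $\sigma(t)$ nor $\sigma(\phi)$, so $\alpha'(\sigma(t)) = v$ and $M^\Sigma, \alpha' \models \sigma(\phi)$; hence $M^\Sigma, \alpha' \models z \qeq \sigma(t) \land \sigma(\phi)$ and $v = \alpha'(z)$ belongs to the right-hand side.

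With both hypotheses discharged, Proposition~\ref{pro:eqrform1} delivers the claimed equivalence for all LCTRSs $\R$. The only delicate step is the $\supseteq$ inclusion: freshness of $z$ is precisely what permits choosing the value of $z$ freely while preserving both the truth of $\phi$ and the value of $t$ under the extended valuation, and it is also what makes the two shared-variable sets agree. Everything else is a routine unfolding of the state-predicate and valuation semantics, so I do not expect any genuine obstacle beyond being careful that $z$ is absent from $\sigma(t)$ and $\sigma(\phi)$ after applying $\sigma$.
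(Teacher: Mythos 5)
Your proposal is correct and follows exactly the paper's route: verify the two hypotheses of Proposition~\ref{pro:eqrform1} (the shared-variable sets agree because $z$ is fresh, and $\tsem{\sigma(\ct{t}{\phi})} = \tsem{\sigma(\ct{z}{z \qeq t\land\phi})}$ since $\sigma(z)=z$) and then invoke that proposition. The paper dismisses the semantic equality as obvious, whereas you spell out both inclusions carefully; the substance is the same.
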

\begin{proof}
We obviously have $\var(\ct{t}{\phi})\cap\var(\varphi')=\var(\ct{z}{z \qeq t\land\phi})\cap\var(\varphi')$ and $\tsem{\sigma(\ct{t}{\phi})}= \tsem{\sigma(\ct{z}{z \qeq t\land\phi})}$  for all
  $\sigma:\var(\varphi_i)\cap\var(\varphi')\to M^\Sigma$ (we used here the fact that $\sigma(z)=z$). 
  Then we apply Proposition~\ref{pro:eqrform1}.
\qed\end{proof}
\begin{proposition} \label{pro:eqrform3} If
  $\rform{\ct{t}{\phi_i}}{\varphi'}\equiv\rform{\ct{t''_i}{\phi''_i}}{\varphi'}$ and
  $\var(\ct{t}{\phi_i})\cap\var(\varphi')=\var(\ct{t''_i}{\phi''_i})\cap\var(\varphi')$
  for $i=1,2$, then
  \\
  \indent{$\rform{\ct{t}{\phi_1\lor\phi_2}}{\varphi'}$}
  \\
  is equivalent to
  \\
  \indent{$\rform{\ct{z}{(z \qeq t''_1\land\phi''_1)\lor
        (z \qeq t''_2\land\phi''_2)}}{\varphi'}$}
  \\
  where $z$ is a fresh variable.
\end{proposition}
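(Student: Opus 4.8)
The plan is to reduce the claimed equivalence to the componentwise hypotheses by exploiting two facts: a disjunction in the constraint corresponds, semantically, to a union of state predicates, and demonic validity is well-behaved with respect to unions of source predicates. So the overall shape of the proof is ``disjunction becomes union, then split the union into its components, then match the components using the hypotheses and Proposition~\ref{pro:eqrform2}''.

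First I would record the two semantic identities obtained directly from Definitions~\ref{def:valuationsemantics} and~\ref{def:statepredicatesemantics}: for every valuation $\sigma$ on the relevant shared variables,
\[
\tsem{\sigma(\ct{t}{\phi_1\lor\phi_2})} = \tsem{\sigma(\ct{t}{\phi_1})}\cup\tsem{\sigma(\ct{t}{\phi_2})},
\]
and, using the semantic identity underlying Proposition~\ref{pro:eqrform2} (applied with a common fresh $z$),
\[
\tsem{\sigma(\ct{z}{(z\qeq t''_1\land\phi''_1)\lor(z\qeq t''_2\land\phi''_2)})} = \tsem{\sigma(\ct{t''_1}{\phi''_1})}\cup\tsem{\sigma(\ct{t''_2}{\phi''_2})}.
\]
Next, from the hypothesis $\rform{\ct{t}{\phi_i}}{\varphi'}\equiv\rform{\ct{t''_i}{\phi''_i}}{\varphi'}$ together with Proposition~\ref{pro:eqrform2} and transitivity of $\equiv$, I obtain the componentwise equivalences $\rform{\ct{t}{\phi_i}}{\varphi'}\equiv\rform{\ct{z}{z\qeq t''_i\land\phi''_i}}{\varphi'}$ for $i=1,2$, which is exactly the form needed to connect each disjunct of the left-hand side with the corresponding disjunct of the right-hand side.

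The core step is then to unfold both sides of the target equivalence through the definition of $\demModels$ and apply the union lemma for demonic validity. Fixing an arbitrary LCTRS $\R$ and an arbitrary valuation $\sigma$ on the shared variables, the identities above turn each side into a reachability predicate whose source is a union; by Proposition~\ref{prop:joinrp} (backward) and Proposition~\ref{prop:subsetrp} (forward), $(M^\Sigma,\gtran{\R})\demModels\rform{P_1\cup P_2}{Q}$ holds iff $(M^\Sigma,\gtran{\R})\demModels\rform{P_i}{Q}$ holds for both $i$. Hence it suffices to match the two sides componentwise for each fixed $\sigma$, and the componentwise equivalences from the previous paragraph are meant to supply precisely this.

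The main obstacle, and the point that requires genuine care, is the bookkeeping of shared variables and of the quantifier over $\sigma$. Because the disjunction merges $\var(\phi_1)$ and $\var(\phi_2)$, the set of variables shared by the combined left-hand side with $\varphi'$ may be strictly larger than the set shared by a single disjunct; the same happens on the right, where $z$ is fresh and contributes no shared variable. I would first verify, using the variable hypotheses $\var(\ct{t}{\phi_i})\cap\var(\varphi')=\var(\ct{t''_i}{\phi''_i})\cap\var(\varphi')$ and the freshness of $z$, that the two combined left-hand sides share exactly the same variables with $\varphi'$, so that the outer quantification over $\sigma$ in $\demModels$ is literally the same on both sides. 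The delicate part is that, for a fixed $\sigma$, the target $\tsem{\sigma(\varphi')}$ instantiates variables of $\varphi'$ that a single disjunct does not even mention; to invoke the componentwise equivalence I would factor $\sigma$ as its restriction to the variables of that disjunct together with the remaining instantiation, absorb the latter into the target, and argue that the hypotheses still apply at the resulting set of shared variables. Checking that this factorisation is compatible with the componentwise equivalences (it is immediate when the two disjuncts share the same variables with $\varphi'$, i.e. when the two shared-variable sets coincide, and this is the situation in which the proposition is actually used) is where the real work of the argument lies.
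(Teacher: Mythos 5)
Your proposal follows essentially the same route as the paper's proof: rewrite the disjunction as a union of state predicates, split and recombine the source predicate using Propositions~\ref{prop:subsetrp} and~\ref{prop:joinrp}, and match the components via Proposition~\ref{pro:eqrform2} together with the componentwise hypotheses, for an arbitrary $\R$ and an arbitrary $\sigma$ on the (provably equal) shared-variable sets. The shared-variable bookkeeping you flag as the delicate point is handled in the paper only implicitly (via the stated variable hypothesis and Remark~\ref{rem:eqrf}), so your extra care there refines rather than diverges from the paper's argument.
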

\begin{proof}
We have $\rform{\ct{t''_i}{\phi''_i}}{\varphi'}$ equivalent to $\rform{\ct{z}{z \qeq t''_i\land\phi''_i}}{\varphi'}$ by Proposition~\ref{pro:eqrform2}.
Let $\R$ be a constrained rule system, 
$\sigma:\var(\ct{t}{\phi_1\lor\phi_2})\cap\var(\varphi')\to M^\Sigma$ and assume that 
\\%
\indent{$(M^\Sigma,\gtran{\R})\demModels\rform{\tsem{\sigma(\ct{t}{\phi_1\lor\phi_2})}}{\tsem{\sigma(\varphi')}}$.}
\\%
Note that $\var(\ct{t}{\phi_1\lor\phi_2})\cap\var(\varphi')=\var(\ct{z}{(z \qeq t''_1\land\phi''_1)\lor(z \qeq t''_1\land\phi''_1)}$ by the hypotheses.
Since $\tsem{\sigma(\ct{t}{\phi_i})}\subseteq\tsem{\sigma(\ct{t}{\phi_1\lor\phi_2})}$, it follows that
\\%
\indent{$(M^\Sigma,\gtran{\R})\demModels\rform{\tsem{\sigma(\ct{t}{\phi_i})}}{\tsem{\sigma(\varphi')}}$}
\\%
by Proposition~\ref{prop:subsetrp}, $i=1,2$.
By Proposition~\ref{pro:eqrform2} we obtain 
\\%
\indent{$(M^\Sigma,\gtran{\R})\demModels\rform{\tsem{\sigma(\ct{z}{z \qeq t''_i\land\phi''_i})}}{\tsem{\sigma(\varphi')}}$}
\\%
for $i=1,2$, which implies
\\%
\indent{$(M^\Sigma,\gtran{\R})\demModels\rform{\tsem{\sigma(\ct{z}{z \qeq t''_1\land\phi''_1})}\cup\tsem{\sigma(\ct{z}{z \qeq t''_1\land\phi''_1})}}{\tsem{\sigma(\varphi')}}$}
\\%
by Proposition~\ref{prop:joinrp}. Since
\\%
\indent{$\begin{array}{l}
\tsem{\sigma(\ct{z}{z \qeq t''_1\land\phi''_1})}\cup\tsem{\sigma(\ct{z}{z \qeq t''_2\land\phi''_2})}
\\
=
\\
\tsem{\sigma(\ct{z}{(z \qeq t''_1\land\phi''_1)\lor (z \qeq t''_2\land\phi''_2)})}
\end{array}$}
\\%
it follows that
\\%
\indent{$(M^\Sigma,\gtran{\R})\demModels\rform{\tsem{\sigma(\ct{z}{(z \qeq t''_1\land\phi''_1)\lor (z \qeq t''_2\land\phi''_2)})}}{\tsem{\sigma(\varphi')}}$.}
\\%
Since $\sigma$ defined over $\var(\ct{t}{\phi_1\lor\phi_2})\cap\var(\varphi')$
is arbitrary, we have proved that
\\%
\indent$\R\demModels \rform{\ct{t}{\phi_1\lor\phi_2}}{\varphi'}$ implies $\R\demModels \rform{\ct{z}{(z \qeq t''_1\land\phi''_1)\lor (z \qeq t''_2\land\phi''_2)}}{\varphi'}$.
\\%
The converse implication is proven in a similar way.
\qed\end{proof}

 \begin{remark}
 \label{rem:eqrf}
 We assume that $\var(\varphi_1)\cap\var(\varphi')=\var(\varphi_2)\cap\var(\varphi')$ and $\tsem{\varphi_1}=\tsem{\varphi_2}$ whenever
 $\rform{\varphi_1}{\varphi'}\equiv_\R\rform{\varphi_2}{\varphi'}$. 
 The first equality says that the shared variables by the lhs and rhs are preserved by the equivalence and the second one is needed to be sure that the two constrained terms have the same syntactic derivatives (see below).
 \end{remark}

The following result is useful for case analysis:

\begin{restatable}{proposition}{disjprop} \label{prop:disj} If
 $M^\Sigma\models \phi\iff(\phi_1\lor\phi_2)$,
  $\rform{\ct{t}{\phi_1}}{\varphi'}$ and
  $\rform{\ct{t}{\phi_2}}{\varphi'}$ are in
  $\nu\,\widehat{{\sf DSTEP}(\R)}$, and
  $\var(\ct{t}{\phi_1})\cap\var(\varphi')=\var(\ct{t}{\phi_2})\cap\var(\varphi')$,
  then $\rform{\ct{t}{\phi}}{\varphi'}$ is in
  $\nu\,\widehat{{\sf DSTEP}(\R)}$.
\end{restatable}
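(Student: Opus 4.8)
The plan is to apply the coinduction principle to the functional $\widehat{{\sf DSTEP}(\R)}$, using as candidate post-fixed point the greatest fixed point augmented with exactly the disjunctions we wish to capture. Write $\mathcal{G} \eqbydef \nu\,\widehat{{\sf DSTEP}(\R)}$ and, fixing the common right-hand side $\varphi' \eqbydef \ct{t_r}{\phi_r}$, set
\[ X \eqbydef \mathcal{G} \cup \{\rform{\ct{t}{\psi}}{\varphi'} \mid M^\Sigma \models \psi \iff (\psi_1 \lor \psi_2),\ \rform{\ct{t}{\psi_1}}{\varphi'},\ \rform{\ct{t}{\psi_2}}{\varphi'} \in \mathcal{G}\}, \]
where in addition we require $\var(\ct{t}{\psi_1})\cap\var(\varphi') = \var(\ct{t}{\psi_2})\cap\var(\varphi')$, matching the hypothesis. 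It then suffices to prove $X \subseteq \widehat{{\sf DSTEP}(\R)}(X)$: elements of $\mathcal{G}$ are handled by $\mathcal{G} = \widehat{{\sf DSTEP}(\R)}(\mathcal{G}) \subseteq \widehat{{\sf DSTEP}(\R)}(X)$ and monotonicity, so only a ``disjunction element'' $g \eqbydef \rform{\ct{t}{\psi}}{\varphi'}$ needs attention. Since $g_i \eqbydef \rform{\ct{t}{\psi_i}}{\varphi'} \in \mathcal{G} = \widehat{{\sf DSTEP}(\R)}(\mathcal{G})$, each $g_i$ is justified by one of the three rules of Figure~\ref{fig:symbolic-execution}, and I would proceed by case analysis on these two justifications.

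The key structural observation is that the subsumption predicate $\theta \eqbydef \exists \widetilde{x}.\, t \qeq t_r \land \phi_r$ and the totality predicate appearing in the side condition of [{\sf der}$^\forall$] depend only on $t$, $\varphi'$ and $\R$, not on the left constraint; hence they are shared by $\psi_1$, $\psi_2$ and $\psi$. When both $g_1,g_2$ are justified by [{\sf der}$^\forall$], I would apply [{\sf der}$^\forall$] to $g$: for each rule/context the derivative constraint of $\ct{t}{\psi}$ is equivalent to the disjunction of the corresponding derivative constraints of $\ct{t}{\psi_1}$ and $\ct{t}{\psi_2}$ (because $\psi \iff \psi_1 \lor \psi_2$ distributes over the added conjuncts), so every derivative of $g$ is again a disjunction element lying in $X$; the totality condition for $\psi$ and the $\R$-derivability of $\ct{t}{\psi}$ follow from those for $\psi_1,\psi_2$ by monotonicity of $\limplies$. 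When at least one $g_i$ is an [{\sf axiom}] (i.e.\ $\psi_i$ is unsatisfiable) the formula $g$ is equivalent to the other disjunct and is discharged through the implicit structural rule; when both use [{\sf subs}] I would apply [{\sf subs}] to $g$, whose premise constraint $(\psi_1 \lor \psi_2) \land \lnot\theta$ is precisely the disjunction of the two [{\sf subs}]-premises of $g_1,g_2$, again an element of $X$.

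The genuine obstacle is the mixed case, say $g_1$ justified by [{\sf subs}] and $g_2$ by [{\sf der}$^\forall$]: applying [{\sf der}$^\forall$] to $g$ is impossible because $\psi_1$ need not entail the totality predicate (its instances may be stuck yet subsumed), so [{\sf subs}] must be applied to $g$, which forces me to show $\rform{\ct{t}{\psi_2 \land \lnot\theta}}{\varphi'} \in \mathcal{G}$ starting from $\rform{\ct{t}{\psi_2}}{\varphi'} \in \mathcal{G}$. This is an instance of \emph{monotonicity of $\mathcal{G}$ in the left constraint}: if $\rform{\ct{t}{\chi}}{\varphi'} \in \mathcal{G}$ and $M^\Sigma \models \chi' \limplies \chi$ then $\rform{\ct{t}{\chi'}}{\varphi'} \in \mathcal{G}$. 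I would establish this strengthening property by a separate, self-contained coinduction (the syntactic analogue of Proposition~\ref{prop:subsetrp}) on the candidate $\{\rform{\ct{t}{\chi'}}{\varphi'} \mid M^\Sigma \models \chi' \limplies \chi,\ \rform{\ct{t}{\chi}}{\varphi'} \in \mathcal{G}\}$, checking that each of the three rules is preserved when the left constraint is strengthened, using again that $\theta$, the derivatives' added conjuncts and the totality predicate are insensitive to this strengthening. Throughout, the only bookkeeping to watch is that the fresh variables introduced by the rules keep the shared-variable condition of $X$ intact, which holds because those variables are either bound (the $\widetilde{x}$ of [{\sf subs}]) or disjoint from $\var(\varphi')$ (the renamed rule variables of [{\sf der}$^\forall$]). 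Finally, I would note that if Theorem~\ref{th:dstep} may be invoked without circularity, the whole statement collapses to one line: by soundness and completeness, membership in $\mathcal{G}$ coincides with $\R \demModels$, and the conclusion is then immediate from Proposition~\ref{prop:joinrp} applied pointwise over the shared-variable valuations $\sigma$.
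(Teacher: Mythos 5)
Your proof is correct and follows the same skeleton as the paper's: take as coinduction candidate the set of disjunctions of two members of $\nu\,\widehat{{\sf DSTEP}(\R)}$ with a common right-hand side, observe that the subsumption constraint and the derivative/totality data depend only on $t$, $\varphi'$ and $\R$, and do a case analysis on which rule justifies each disjunct (the paper's set $A$ and its cases 1--3 match your [{\sf axiom}]/[{\sf subs}]/[{\sf der}$^\forall$] cases almost verbatim, including the reliance on the structural equivalence rule and the analogue of Proposition~\ref{pro:eqrform3}). The one place you genuinely diverge is the mixed case. The paper dispatches it in one sentence (``[{\sf subs}] cannot be applied twice consecutively, \dots\ similar to case 1 or case 3''), which silently leaves open exactly the point you isolate: after applying [{\sf subs}] to $\rform{\ct{t}{\psi_1\lor\psi_2}}{\varphi'}$ one needs $\rform{\ct{t}{\psi_2\land\lnot\theta}}{\varphi'}$ in the fixed point, and this does not come for free from $\rform{\ct{t}{\psi_2}}{\varphi'}$ being there. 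Your explicit strengthening lemma (closure of $\nu\,\widehat{{\sf DSTEP}(\R)}$ under conjoining constraints on the left, the syntactic counterpart of Proposition~\ref{prop:subsetrp} and the {\sf DSTEP}-analogue of Proposition~\ref{prop:dcc-impl}) is the right patch, and its own coinductive proof goes through for the reasons you give; this makes your write-up more complete than the paper's on that point. Your closing caveat is also well placed: the shortcut through Theorem~\ref{th:dstep} plus Proposition~\ref{prop:joinrp} is unavailable, because the paper invokes Proposition~\ref{prop:disj} inside the soundness half of Theorem~\ref{th:dstep} (case [{\sf der}$^\forall$]), so that route would be circular.
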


\noindent and it allows to extend ${\sf DSTEP}(\R)$ with the
following inference rule%
:
\begin{definition}[Coadmissible rule for reachability formulae]
  \[ [{\sf disj}]~\dfrac{\rform{\ct{t}{\phi_1}}{\varphi'},
      \rform{\ct{t}{\phi_2}}{\varphi'}}{\rform{\ct{t}{\phi}}{\varphi'}}~M^\Sigma\models
    \phi\iff \phi_1\lor\phi_2
\]
\end{definition}
\begin{proof}[of Proposition~\ref{prop:disj}]
Let $A$ be the set
\\%
\centerline{$\left\{\!\rform{\ct{t}{\phi}}{\varphi'}\,\middle|\, \rform{\ct{t}{\phi_1}\!}{\!\varphi'}, \rform{\ct{t}{\phi_2}\!}{\varphi'\!}\in\nu\,\widehat{{\sf DSTEP}(\R)},M^\Sigma\models \phi\iff(\phi_1\lor\phi_2)\!\right\}$.}
\\%
Note that $\nu\,\widehat{{\sf DSTEP}(\R)}\subseteq A$ since $\phi$ is equivalent to $\phi\lor \phi$, which implies
$\nu\,\widehat{{\sf DSTEP}(\R)}\subseteq \widehat{{\sf DSTEP}(\R)}(A)$ ($\clubsuit$). 
We show that $A$ is backward-closed w.r.t. ${\sf DSTEP}(\R)$, i.e. $A\subseteq\widehat{{\sf DSTEP}(\R)}(A)$.
Let $\rform{\ct{t}{\phi}}{\varphi'}\in A$, where $M^\Sigma\models\allowbreak \phi\iff(\phi_1\lor\phi_2)$.
Let ${\it PT}_i$ a proof tree for $\rform{\ct{t}{\phi_i}}{\varphi'}$ under ${\sf DSTEP}(\R)$, $i=1,2$.
We distinguish the following cases, according to the definition of ${\it PT}_i$, $i=1,2$:
\\
1. $M^\Sigma\models\phi_i\iff \False$, $i\in\{1,2\}$ (${\it PT}_i$ consists of [{\sf axiom}]). Then  $\rform{\ct{t}{\phi_1{\lor}\phi_2}}{\varphi'}$ is equi\-valent to $\rform{\ct{t}{\phi_{3-i}}\!}{\!\varphi'}$, which is in
$\nu\,\widehat{{\sf DSTEP}(\R)}$ and hence in $\widehat{{\sf DSTEP}(\R)}(A)$ by ($\clubsuit$).  
\\
2. The rule corresponding to the root of ${\it PT}_i$ is [{\sf subs}] for $i=1,2$. 
By Proposition~\ref{pro:eqrform2}, we may assume that the child of the ${\it PT}_i$'root is of the form
\\%
\indent{$\rform{\ct{z}{z \qeq t''_i\land\phi''_i\land\neg\phi'''}}{\varphi'}$,}
\\%
$i=1,2$, where $\phi'''$ is of the form $(\exists \widetilde{x})z{ \qeq }t'\land\phi'$, and $\widetilde{x}=\var(\varphi')\setminus\var(\ct{t''_i}{\phi''_i}){=}\allowbreak\var(\varphi')\setminus\var(\ct{t}{\phi_i})$, $i=1,2$ (by the hypotheses of the proposition and Remark~\ref{rem:eqrf}).
Since $\rform{\ct{t}{\phi_i}}{\varphi'}$ is equivalent to $\rform{\ct{z}{z \qeq t''_i\land\phi''_i}}{\varphi'}$, it follows that
$\rform{\ct{t}{\phi_1\lor\phi_2}}{\varphi'}$ is equivalent to $\rform{\ct{z}{(z \qeq t''_1\land\phi''_1)\lor (z \qeq t''_2\land\phi''_2)}}{\varphi'}$ by Proposition~\ref{pro:eqrform3}. 
It follows that 
\\%
\indent{$\rform{\ct{z}{((z \qeq t''_1\land\phi''_1)\lor (z \qeq t''_2\land\phi''_2))\land\neg\phi'''}}{\varphi'}$}
\\%
is equivalent to
\\%
\indent{$\rform{\ct{z}{(z \qeq t''_1\land\phi''_1\land\neg\phi''')\lor (z \qeq t''_2\land\phi''_2\land\neg\phi''')}}{\varphi'}$}
\\%
which is in $A$ and hence $\rform{\ct{t}{\phi_1\lor\phi_2}}{\varphi'}\in \widehat{{\sf DSTEP}(\R)}(A)$ by [{\sf subs}].
\\
3. The rule corresponding to the root of ${\it PT}_i$ is [${\sf der}^\forall$] for $i=1,2$. We assume that the children of the root of ${\it PT}_i$ are of the form $\rform{\ct{{t''}^j}{{\phi''}^j_i}}{\varphi'}$ with 
\\%
\indent{$\ct{{t''}^j}{{\phi''}^j_i}\in\Delta_\R(\ct{t''_i}{\phi''_i})$}
\\%
and 
\\%
\indent{$\rform{\ct{t}{\phi_i}}{\varphi'}\equiv\rform{\ct{t''_i}{\phi''_i}}{\varphi'}$}
\\%
where $j\in J''_i$ and $i=1,2$. We also assume that 
\\%
\indent{$\Delta_\R(\ct{t}{\phi_i})=\{\ct{{t}^j}{{\phi}^j_i}\mid j\in J_i\}$}
\\%
where $i=1,2$. For $j\in J_i$ there is $j''\in J''_i$ such that
\\%
\indent{$\rform{\ct{t^j}{\phi^j_i}}{\varphi'}\equiv\rform{\ct{{t''}^{j''}}{{\phi''}^{j''}_i}}{\varphi'}$}
\\%
by Remark~\ref{rem:eqder}.
Since $\rform{\ct{{t''}^{j''}}{{\phi''}^{j''}_i}\!}{\!\varphi'}\in \nu\,\widehat{{\sf DSTEP}(\R)}$, it follows that 
$\rform{\ct{t^j}{\phi^j_i}\!}{\varphi'}\in \nu\,\widehat{{\sf DSTEP}(\R)}$ by the implicit equivalence rule, $i=1,2$.
Note that $\phi^j_i$ is of the form (or equivalent to) $\phi_i\land \phi^j$, where $\phi^j$ depends only on $t$ and the applied rule,
which implies
\\%
\indent{$\Delta_\R(\ct{t}{\phi_1\lor\phi_2})=\{\ct{{t}^j}{(\phi_1\lor\phi_2)\land{\phi}^j}\mid j\in J_1\cup J_2\}$}
\\%
Using the equivalence between $(\phi_1\lor\phi_2)\land{\phi}^j$ and $(\phi_1\land\phi^j)\lor(\phi_2\land{\phi}^j)$, we obviously obtain
$\rform{\ct{t^j}{(\phi_1\lor \phi_{2})\land\phi^j}}{\varphi'}\in A$, and hence 
\\%
\indent{$\rform{\ct{t}{\phi_1\lor\phi_{2}}}{\varphi'}\in \nu\,\widehat{{\sf DSTEP}(\R)}(A)$}
\\%
by [${\sf der}^\forall$].
\\
4. The rule corresponding to the root of ${\it PT}_i$ is [{\sf subs}] and the rule corresponding to the root of ${\it PT}_{3-i}$ is [${\sf der}^\forall$], $i\in\{1,2\}$. Note that the rule [{\sf subs}] cannot be applied twice consecutively, so the child of the root of ${\it PT}_i$ corresponds to either [{\sf axiom}] or [${\sf der}^\forall$]. The rest of the proof for this case is similar to the case 1 or to the case 3.
\qed\end{proof}

The next result shows that it is fine to relax the constraints of some goals (a kind of generalization).

\begin{proposition} \label{prop:dcc-impl} If
  $(\R,G)\demEntails \rform{\ct{t}{\phi}}{\varphi'}$ then
  $(\R,G)\demEntails \rform{\ct{t}{\phi\land\phi''}}{\varphi'}$, where
  $\phi''$ is a constraint formula.
\end{proposition}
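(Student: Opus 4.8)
The plan is to transform a given guarded ${\sf DCC}(\R,G)$ proof tree $\it PT$ of $\rform{\ct{t}{\phi}}{\varphi'}$ into a guarded proof tree $\it PT''$ of $\rform{\ct{t}{\phi\land\phi''}}{\varphi'}$, proceeding top-down by coinduction (corecursion) on $\it PT$, since guarded ${\sf DCC}$ trees may still contain infinite, circularity-free branches. The invariant I would maintain is that every node $\rform{\ct{s}{\psi}}{\varphi'}$ of $\it PT$ is mapped to the node $\rform{\ct{s}{\psi\land\phi''}}{\varphi'}$ of $\it PT''$, i.e.\ I conjoin $\phi''$ to the left-hand constraint while keeping both the term $s$ and the target $\varphi'$ unchanged. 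Before starting I may assume $\var(\phi'')\subseteq\var(t,\phi)$: any extraneous variable can be existentially quantified without changing $\tsem{\ct{t}{\phi\land\phi''}}$, which keeps $\phi''$ disjoint both from the fresh rule variables and from the rhs-local variables introduced by the rules below.

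The cases for $[{\sf axiom}]$, $[{\sf subs}]$ and $[{\sf circ}]$ are direct, as none of their side conditions mentions the left-hand constraint. If the root of $\it PT$ is $[{\sf axiom}]$, then $\phi$ is unsatisfiable, hence so is $\phi\land\phi''$, and $[{\sf axiom}]$ still applies. If it is $[{\sf subs}]$, the satisfiability condition $\exists\widetilde{x}.t\qeq t_r\land\phi_r$ depends only on $t,t_r,\phi_r$ and is unaffected; applying $[{\sf subs}]$ to the strengthened goal produces the child $\rform{\ct{t}{(\phi\land\neg(\exists\widetilde{x}.t\qeq t_r\land\phi_r))\land\phi''}}{\varphi'}$, which is exactly the $\phi''$-strengthening of the original child. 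If it is $[{\sf circ}]$, the matching formula $\exists\var(t^c_l,\phi^c_l).t\qeq t^c_l\land\phi^c_l$ again does not refer to the left constraint, and the circularity is drawn from $G$; both children of the strengthened $[{\sf circ}]$ node are the $\phi''$-strengthenings of the two original children. In each case the invariant is preserved, and I recurse.

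The main obstacle is the $[{\sf der}^\forall]$ rule, whose side conditions (that $\ct{t}{\phi}$ be $\R$-derivable and that $\R$ be total for it) could a priori be destroyed by strengthening. The key observation is that the set of context/rule pairs generating $\Delta_\R(\ct{t}{\phi\land\phi''})$ coincides with the one generating $\Delta_\R(\ct{t}{\phi})$, since it is fixed by unifying $t$ with the left-hand sides, independently of the constraint; hence $\Delta_\R(\ct{t}{\phi\land\phi''})$ consists exactly of the strengthenings $\ct{t^j}{\phi^j\land\phi''}$ of the original derivatives, with those that become unsatisfiable dropped. Writing $J$ for the surviving indices, I must re-establish totality $\phi\land\phi''\limplies\bigvee_{j\in J}\exists\widetilde{y}^j.(\phi^j\land\phi'')$. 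Since $\phi''$ shares no variable with the fresh $\widetilde{y}^j$, each disjunct equals $(\exists\widetilde{y}^j.\phi^j)\land\phi''$, and any valuation satisfying a dropped disjunct ($j\notin J$) would witness satisfiability of $\phi^j\land\phi''$, a contradiction; so the dropped disjuncts are unsatisfiable and totality follows from the original $\phi\limplies\bigvee_j\exists\widetilde{y}^j.\phi^j$ together with $\phi\land\phi''\limplies\phi''$. The same argument gives $J\neq\emptyset$, hence $\R$-derivability, whenever $\phi\land\phi''$ is satisfiable. If $\phi\land\phi''$ is unsatisfiable instead, I discard the subtree and close the node with $[{\sf axiom}]$; otherwise I apply $[{\sf der}^\forall]$ with children $\rform{\ct{t^j}{\phi^j\land\phi''}}{\varphi'}$ for $j\in J$, each a strengthening of an original child, and recurse.

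Two points remain. First, the implicit structural rule may, between a node and its children, replace a constrained term by an equivalent one, possibly renaming variables or reshaping the term; here I would use that equivalence preserves both $\tsem{\cdot}$ and the shared variables (Remark~\ref{rem:eqrf}) to transport $\phi''$ across the rewriting, so that corresponding strengthened nodes stay equivalent and the rule applications above remain legal — this is the chief technical nuisance beyond the $[{\sf der}^\forall]$ analysis. Second, guardedness is preserved because the transformation never introduces a new $[{\sf circ}]$ node and never deletes a proper ancestor of a surviving node: it only prunes $[{\sf der}^\forall]$ children for $j\notin J$ or truncates a subtree into an $[{\sf axiom}]$ leaf. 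Consequently every $[{\sf circ}]$ node of $\it PT''$ descends from a $[{\sf circ}]$ node of $\it PT$ with its $[{\sf der}^\forall]$ ancestor intact, so $\it PT''$ is guarded and witnesses $(\R,G)\demEntails\rform{\ct{t}{\phi\land\phi''}}{\varphi'}$.
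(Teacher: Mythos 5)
Your proposal is correct and follows essentially the same route as the paper's proof: a top-down transformation of the guarded ${\sf DCC}(\R,G)$ tree that conjoins $\phi''$ to every left-hand constraint, with a case analysis on the four rules and the observation that guardedness is preserved because no new {\sf [circ]} nodes are created. If anything, you are more careful than the paper at the $[{\sf der}^\forall]$ case — explicitly re-establishing totality after dropping unsatisfiable derivatives, handling the degenerate case where $\phi\land\phi''$ itself becomes unsatisfiable, and fixing the variable hygiene of $\phi''$ — all points the paper passes over with ``obviously, the new node is an instance of $[{\sf der}^\forall]$.''
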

\begin{proof}
Let $\it PT$ a guarded proof tree for $\rform{\ct{t}{\phi}}{\varphi'}$ under ${\sf DCC}(\R,G)$.
We transform $\it PT$ into a guarded proof tree $\it PT'$ for $\rform{\ct{t}{\phi\land\phi''}}{\varphi'}$ under ${\sf DCC}(\R,G)$ as follows:
The root $\rform{\ct{t}{\phi}}{\varphi'}$ is transformed into $\rform{\ct{t}{\phi\land\phi''}}{\varphi'}$. Assuming that the current node $\rform{\ct{t_1}{\phi_1}}{\varphi'}$ is transformed into $\rform{\ct{t'_1}{\phi'_1}}{\varphi'}$ with $M^\Sigma\models \phi'_1\iff \phi_1\land\phi''$, its children are transformed according to the inference rule used to obtain the current node (if the rule involves an equivalence of the conclusion, then $\rform{\ct{t_1}{\phi_1}}{\varphi'}$ is the used equivalent formula):
\begin{enumerate}
\item {} [{\sf axiom}]. $M^\Sigma\models \phi_1\iff \False$ implies $M^\Sigma\models \phi'_1\iff \False$ and there are no children in this case.
\item {}[{\sf subs}]. The unique child of $\rform{\ct{t_1}{\phi_1}}{\varphi'}$ is of the form $\rform{\ct{t_1}{\phi''_1}}{\varphi'}$ with $M^\Sigma\models \phi''_1\iff \phi_1\land\phi_2$ and it is transformed into  $\rform{\ct{t_1}{\phi''_1\land\phi''}}{\varphi'}$. 
Note that $\phi_2$ does not depend on $\phi_1$, so it is the same with that for $\rform{\ct{t_1}{\phi_1\land\phi''}}{\varphi'}$.
Obviously, $\rform{\ct{t'_1}{\phi'_1}}{\varphi'}$ and $\rform{\ct{t_1}{\phi''_1\land\phi''}}{\varphi'}$ form an instance of [{\sf subs}].
\item {}[${\sf der}^\forall$]. The children of the current node in $\it PT$ are of the form $\rform{\ct{t^j_1}{\phi^j_1}}{\varphi'}$, $j\in J$. We have $\Delta_\R(\ct{t_1}{\phi'_1})=\{\ct{t^j_1}{{\phi'}^j_1}\mid j\in J'\}$ with $M^\Sigma\models {\phi'}^j_1\iff \phi^j_1\land\phi''$ and $J'\subseteq J$ by the definition of $\Delta_\R$. We may have $J'\subset J$ because some of $\phi^j_1\land\phi''$ could become unsatisfiable. The children of $\rform{\ct{t'_1}{\phi'_1}}{\varphi'}$ are $\rform{\ct{t^j_1}{{\phi'}^j_1}}{\varphi'}$, $j\in J'$. Obviously, the new node is an instance of [${\sf der}^\forall$].
\item {}[{\sf circ}]. The curent node in $\it PT$ has two children of the form 
\\
\indent{$\rform{\ct{t'_c}{\phi'_c\land \phi_1\land \phi''_1}}{\varphi'}$ and $\rform{\ct{t_1}{\phi_1\land\neg\phi''_1}}{\varphi'}$.}
\\
The children are transformed into 
\\
\indent{$\rform{\ct{t'_c}{\phi'_c\land \phi_1\land\phi''\land \phi''_1}}{\varphi'}$ and $\rform{\ct{t_1}{\phi_1\land\phi''\land\neg\phi''_1}}{\varphi'}$,}
\\ 
respectively.
\end{enumerate}
Since $\it PT$ is guarded it follows that $\it PT'$ is guarded as well.
\qed\end{proof}

\section{Proofs of Results from the Paper}

\demonicvalidsoundness*
\begin{proof}
\textit{Reverse implication} ($\Leftarrow$). Let $X$ denote the set $\{\langle \tau, \rform{P}{Q}\rangle\mid \hd(\tau)\in P, {(M,\gtran{})}\allowbreak\demModels\rform{P}{Q}\}$.
We show that $X$ is backward closed w.r.t. $\widehat{\sf EPSRP}$, i.e. $X\subseteq \widehat{\sf EPSRP}(X)$. 
Let $\langle \tau, \rform{P}{Q}\rangle$ be in $X$.
If $\hd(\tau)\in Q$ then $\langle \tau, \rform{P}{Q}\rangle$ is in $\widehat{\sf EPSRP}(X)$ by the first rule of ${\sf EPSRP}$.
If $\hd(\tau)\in P\setminus Q$ then $(M,\gtran{})\demModels\rform{\partial(P\setminus Q)}{Q}$ and $P\setminus Q$ is runnable (the root of the proof tree for $\rform{P}{Q}$ corresponds to the second rule in {\sf DVP}).  It follows that $\tau$ is of the form $\gamma_0\gtran{}\tau'$. We have $\hd(\tau')\in \partial(P\setminus Q)$ by the definition of $\partial(\_)$. It follows that $\langle \tau', \rform{\partial(P\setminus Q)}{Q}\rangle\in X$, which implies 
$\langle \tau, \rform{P}{Q}\rangle\in\widehat{\sf EPSRP}(X)$ by the second rule of ${\sf EPSRP}$. 
This finishes the proof of "if" direction.
\\
\textit{Direct implication} ($\Rightarrow$). Let $Y$ be the set $\{\rform{P'}{Q}\mid (\forall\tau)\hd(\tau)\in P'\limplies \tau\demModels \rform{P'}{Q}\}$.
We show that $Y$ is backward closed w.r.t. $\widehat{\sf DVP}$, i.e. $Y\subseteq\widehat{\sf DVP}(Y)$, and we then apply the coinduction rule.
Let $\rform{P'}{Q}\in Y$. 
For any $\gamma\in P'\setminus Q$ there is an execution path $\tau$ starting from $\gamma$. Since $\tau\demModels \rform{P'}{Q}$ and $\gamma\not\in Q$ it follows that $\tau=\gamma\circ\tau'$ for certain $\tau'$ with $\tau'\demModels\rform{\partial(P')}{Q}$ (by the definition of $\demModels$), which implies $P'\setminus Q$ runnable. 
Moreover, we observe that $\tau\demModels \rform{P'}{Q}$ implies $\tau\demModels \rform{P'\setminus Q}{Q}$ for any $\tau$ starting from $P'\setminus Q$, i.e. $\rform{(P'\setminus Q)}{Q}\in Y$.
We show now that $\rform{\partial(P'\setminus Q)}{Q}\in Y$.
Let $\tau'$ be an execution path with $\hd(\tau')\eqbydef \gamma'\in \partial(P'\setminus Q)$. There is $\gamma\in P'\setminus Q$ such that $\gamma\gtran{}\gamma'$ by the definition of $\partial$. Then $\tau\demModels \rform{P'\setminus Q}{Q}$ by the definition of $Y$, where $\tau=\gamma\circ\tau'$, which implies 
$\tau'\demModels \rform{\partial(P'\setminus Q)}{Q}$ by the definition of $\demModels$. Since $\tau'$ is arbitrary, it follows that $\rform{\partial(P'\setminus Q)}{Q}\in Y$, and hence $\rform{P'}{Q}\in \widehat{\sf DVP}(Y)$ by the rule $\semrule{Step}$. Since $\rform{P'}{Q}\in Y$ is arbitrary, it follows that
$Y\subseteq\widehat{\sf DVP}(Y)$. This finishes the proof of the "only if" direction.
\qed\end{proof}

\semanticinclusion*
\begin{proof}
\textit{Reverse implication} ($\Leftarrow$).
Assume that $M^\Sigma\models \phi\limplies (\exists \widetilde{x})(t \qeq t'\land \phi')$ and consider $\sigma:\var(t,\phi)\cap\var(t',\phi')\to M^\Sigma$. 
We have to prove that $\tsem{\sigma(\ct{t}{\phi})}\subseteq \tsem{\sigma(\ct{t'}{\phi'})}$.
Let $\alpha$ be a valuation such that $M^\Sigma,\alpha\models\sigma(\phi)$. Then the valuation $\alpha_1$, given by
$\alpha_1(y)=\textrm{if }y\in\var(t,\phi)\cap\var(t',\phi')\textrm{ then }\sigma(y)\textrm{ else }\alpha(y)$, satisfies $M^\Sigma,\alpha_1\models\phi$. There is a valuation $\alpha'_1$ such that $\alpha'_1(y)=\alpha_1(y)$, for each $y\not\in\widetilde{x}$, and  $M^\Sigma,\alpha'_1\models (t \qeq t'\land \phi')$.  Since $\var(t,\phi)\cap\var(t',\phi')$ and $\widetilde{x}$ are disjoint, it follows that $\alpha'_1(y)=\sigma(y)$ for each $y\in\var(t,\phi)\cap\var(t',\phi')$.
Let $\alpha'$ be a valuation with $\alpha'(y)=\alpha'_1(y)$ for each $y\not\in\var(t,\phi)\cap\var(t',\phi')$. We obviously have $M^\Sigma,\alpha'\models \phi$ and  $\alpha'(\sigma(t'))=\alpha'_1(t')=\alpha'_1(t)=\alpha_1(t)=\alpha(t)$. Since $\alpha$ is arbitrary, we obtain $\tsem{\sigma(\ct{t}{\phi})}\subseteq \tsem{\sigma(\ct{t'}{\phi'})}$.
\\
\textit{Direct implication} ($\Rightarrow$). Assume that $\tsem{\ct{t}{\phi}}\subseteq_{\it shared} \tsem{\ct{t'}{\phi'}}$. We have to prove that
$M^\Sigma\models \phi\limplies (\exists \widetilde{x})(t \qeq t'\land \phi')$. 
Let $\alpha$ be a valuation such that $M^\Sigma,\alpha\models\sigma(\phi)$. Consider $\sigma:\var(t,\phi)\cap\var(t',\phi')\to M^\Sigma$ given by $\sigma(y)=\alpha(y)$. Since
$\tsem{\sigma(\ct{t}{\phi})}\subseteq \tsem{\sigma(\ct{t'}{\phi'})}$ it follows that there is $\alpha'$ such that $\alpha(\sigma(t))=\alpha'(\sigma(t'))$ and $M^\Sigma\!,\alpha'\models\sigma(\phi')$. 
We may assume w.l.o.g. that $\alpha'(y)=\alpha(y)$ for $y\not\in\var(t',\phi')$.
Let $\alpha'_1$ be the valuation such that $\alpha'_1(y)=\textrm{if }y\in\var(t,\phi)\cap\var(t',\phi')\textrm{ then }\allowbreak\sigma(y)\textrm{ else }\alpha'(y)$. 
We obviously have $\alpha'_1(y)=\alpha(y)$ for each $y\not\in\widetilde{x}$, $M^\Sigma,\alpha'_1\models\phi'$ (since $M^\Sigma,\alpha'\models\sigma(\phi')$),  and $\alpha'_1(t)=\alpha'(\sigma(t))=\alpha'(\sigma(t'))=\alpha'_1(t')$. 
Hence  $M^\Sigma,\alpha'_1\models (t \qeq t'\land \phi')$, which implies $M^\Sigma,\alpha\models (\exists \widetilde{x})(t \qeq t'\land \phi')$. Since $\alpha$ is arbitrar, it follows that $M^\Sigma\models \phi\limplies (\exists \widetilde{x})(t \qeq t'\land \phi')$.
\qed\end{proof}

\deltacommutes*
\begin{proof}
$\subseteq$.
Let $\gamma_2\in \tsem{\Delta_\R(\varphi)}$. There is a rule $\rrule{l}{r}{\phi_{lr}}$ in $\R$, a context $c[\cdot]$ and a valuation $\alpha$ such that $\gamma_2=\alpha(c[r])$ and $M^\Sigma,\alpha\models \phi'$ ($\phi'$ from the definition of $\Delta_{l,r,\phi_{\it lr}}$). Recall that  the variables in $\rrule{l}{r}{\phi_{lr}}$ are possibly renamed in order to be disjoint from those in $\varphi$.  $M^\Sigma,\alpha\models \phi'$ implies $M^\Sigma,\alpha\models \phi$ and $\alpha(t)=\alpha(c[l])$. 
It follows that $\gamma_1=\alpha(c[l])=\alpha(t)$ is in $\tsem{\varphi}$  and $\gamma_1\gtran{\R}\gamma_2$. Hence $\gamma_2\in\partial(\tsem{\varphi})$.
\\
$\supseteq$. Let $\gamma_2\in \partial(\tsem{\varphi})$. It follows that there is $\gamma_1\in\tsem{\varphi}$ s.t. $\gamma_1\gtran{} \gamma_2$ by the definition of $\partial$. The transition step $\gamma_1\gtran{} \gamma_2$ implies that there exist a rule $\rrule{l}{r}{\phi_{\it lr}}$ in $\R$, a context $c[\cdot]$, and a valuation $\alpha_1: X\to M^\Sigma$ such that $\gamma_1=\alpha_1(c[l])$, $\gamma_2=\alpha_1(c[r])$ and $M^\Sigma,\alpha_1\models \phi_{\it lr}$. 
Since $\gamma_1\in\tsem{\varphi}$, there exists $\alpha_2:X\to M^\Sigma$ such that $\gamma_1=\alpha_2(t)$ and $M^\Sigma,\alpha_2 \models \phi$.
We consider $\alpha$ such that $\alpha(x)=\alpha_1(x)$ for $x\in\var(\rrule{l}{r}{\phi_{\it lr}})$ and $\alpha(y)=\alpha_2(y)$ for $y\in\var(\varphi)$; this can be achieved by renaming the variables occurring in the rule.
The equality $\alpha(t)=\alpha(c[l])$  follows from $\alpha(t)=\gamma_1=\alpha(c[l])$. %
We obtain $M^\Sigma,\alpha \models c[l]  \qeq  t\land\phi_{\it lr}\land \phi$, which implies $\gamma_2\in \tsem{\Delta_\R(\varphi)}$. 
\qed\end{proof}

\begin{restatable}{corollary}{deltaequivalent}
\label{cor:der}
If $\rform{\varphi_1}{\varphi'}\equiv_\R\rform{\varphi_2}{\varphi'}$ then $\tsem{\Delta_\R(\varphi_1)}=\tsem{\Delta_\R(\varphi_2)}$.
\end{restatable}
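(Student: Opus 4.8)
The plan is to derive this directly from Theorem~\ref{th:der} together with the semantic content that Remark~\ref{rem:eqrf} attaches to the equivalence $\equiv_\R$. First I would unpack the hypothesis: by Remark~\ref{rem:eqrf}, whenever $\rform{\varphi_1}{\varphi'}\equiv_\R\rform{\varphi_2}{\varphi'}$ we are entitled to assume not only the preservation of shared variables but also the semantic equality $\tsem{\varphi_1}=\tsem{\varphi_2}$. Indeed, this is exactly the condition the remark imposes so that equivalent left-hand sides have the same derivatives, so the hypothesis of the corollary immediately yields $\tsem{\varphi_1}=\tsem{\varphi_2}$ as an equality of state predicates.

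Next I would apply Theorem~\ref{th:der} separately to the two constrained terms, obtaining $\tsem{\Delta_\R(\varphi_1)}=\partial(\tsem{\varphi_1})$ and $\tsem{\Delta_\R(\varphi_2)}=\partial(\tsem{\varphi_2})$. The final step is to observe that the state-predicate derivative $\partial(P)=\{\gamma'\mid \gamma\gtran{}\gamma'\textrm{ for some }\gamma\in P\}$ depends only on the set $P$; it is a genuine function of state predicates. Hence $\tsem{\varphi_1}=\tsem{\varphi_2}$ forces $\partial(\tsem{\varphi_1})=\partial(\tsem{\varphi_2})$, and chaining the three equalities gives $\tsem{\Delta_\R(\varphi_1)}=\tsem{\Delta_\R(\varphi_2)}$, as required.

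I do not expect a genuine obstacle here: the mathematical work has already been done in Theorem~\ref{th:der}, which reduces the \emph{syntactic} derivative $\Delta_\R$ to the purely \emph{semantic} derivative $\partial$, and in Remark~\ref{rem:eqrf}, which supplies the semantic equality of the constrained terms. The only point requiring care is to read $\equiv_\R$ as the proof-system equivalence that carries the extra side conditions of Remark~\ref{rem:eqrf}, rather than as the bare validity-preserving equivalence $\equiv$; without the clause $\tsem{\varphi_1}=\tsem{\varphi_2}$ the conclusion could fail, since two reachability formulas could in principle be validity-equivalent without denoting the same set of states. Given that reading, the corollary is a one-line consequence.
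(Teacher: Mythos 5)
Your proof is correct and follows exactly the paper's route: invoke Remark~\ref{rem:eqrf} to get $\tsem{\varphi_1}=\tsem{\varphi_2}$, observe that $\partial$ depends only on the state predicate, and conclude via Theorem~\ref{th:der}. Your extra caution about reading $\equiv_\R$ as the equivalence carrying the remark's semantic side conditions is well placed but does not change the argument.
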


\begin{proof}
We have $\tsem{\varphi_1}=\tsem{\varphi_2}$ by Remark~\ref{rem:eqrf}, which implies $\partial(\tsem{\varphi_1})=\partial(\tsem{\varphi_2})$.
Then we apply Theorem~\ref{th:der}.
\qed\end{proof}

\begin{remark}
\label{rem:eqder}

In this paper we assume more than Corollary~\ref{cor:der} claims,
namely that
$\rform{\varphi_1}{\varphi'}\equiv_\R\rform{\varphi_2}{\varphi'}$
implies \begin{enumerate}

\item[] for each $\varphi''_i\in \Delta_\R(\varphi_i)$ there is
  $\varphi''_{3-i}\in \Delta_\R(\varphi_{3-i})$ such that
  $\rform{\varphi''_i}{\varphi'}\equiv_\R\rform{\varphi_{3-i}}{\varphi'}$,
  $i=1,2$.

\end{enumerate}
In this way the symbolic execution given by the system {\sf
  DSTEP}$(\R)$ (see below) is preserved by the equivalence.
\end{remark}

\dstepsound*
\begin{proof}
\textit{Direct implication (soundness).}
Let $A$ be 
the set
\\
\indent{$\left\{\rform{\tsem{\sigma(\varphi)}}{\tsem{\sigma(\varphi')}}\,\middle|\, \sigma:\var(\varphi)\cap\var(\varphi')\to M^\Sigma, \rform{\varphi}{\varphi'}\in\allowbreak\nu\,\widehat{\sf DSTEP}(\R)\right\}$.}
\\
The conclusion of the theorem follows by showing that the set $A$ is backward closed w.r.t. $\widehat{\sf DVP}$, i.e. $A\subseteq \widehat{\sf DVP}(A)$.
Let $\rform{\varphi}{\varphi'}\in\nu\,\widehat{\sf DSTEP}(\R)$ and $\sigma:\var(\varphi)\cap\var(\varphi')\to M^\Sigma$. There is a proof tree $PT$ of  
$\rform{\varphi}{\varphi'}$ under ${\sf DSTEP}(\R)$, where $\rform{\varphi}{\varphi'}$ is the root. We proceed by case analysis on the {\sf DSTEP} rule applied to the root $\rform{\varphi}{\varphi'}\eqbydef\rform{\ct{t}{\phi}}{\ct{t'}{\phi'}}$. 
\\
1.[{\sf axiom}]. Then $M^\Sigma\models\phi\iff\False$ and $\rform{\tsem{\sigma(\varphi)}}{\tsem{\sigma(\varphi')}}\in \widehat{\sf DVP}(A)$ by the first rule of {\sf DVP}.
\\
2. [{\sf subs}]. The root has one child $\rform{\ct{t''}{\phi''\land \neg \phi'''}}{\ct{t'}{\phi'}}$, where 
$\rform{\varphi}{\varphi'}\equiv\rform{\ct{t''}{\phi''}}{\ct{t'}{\phi'}}$ (recall that the equivalence rule is implicitly applied)
and $M^\Sigma\models\phi'''\iff(\exists X)(t'' \qeq t'\land\phi')$, $X=\var(t',\phi')\setminus\var(t'',\phi'')$. 
The side condition of the rule ensures that
$M^\Sigma\models\neg(\phi'''\iff \False)$.%
We have $\tsem{\sigma(\ct{t''}{\phi'''})}\subseteq\tsem{\sigma(\ct{t'}{\phi'})}$ by Proposition~\ref{prop:subs}, which implies the equality
$\tsem{\sigma(\ct{t''}{\phi''{\land}\phi'''})}=\allowbreak\tsem{\ct{t''}{\phi''}}\cap\tsem{\sigma(\ct{t'}{\phi'})}$.
Moreover, from
\\%
\indent{$\tsem{\sigma(\ct{t''}{\phi''})}=\tsem{\sigma(\ct{t''}{\phi''\land\phi'''})}\uplus\tsem{\sigma(\ct{t''}{\phi''\land\neg\phi'''})}$}
\\%
we obtain
\\%
\indent{$\tsem{\sigma(\ct{t''}{\phi''\land\neg\phi'''})}=\tsem{\sigma(\ct{t''}{\phi''})}\setminus\tsem{\sigma(\ct{t''}{\phi''\land\phi'''})}$.}
\\%
We obviously have $\rform{\ct{t''}{\phi''\land \neg \phi'''}}{\ct{t'}{\phi'}}\in\nu\,\widehat{\sf DSTEP}(\R)$ (as a child of the proof tree root) and hence we get $\rform{\tsem{\sigma(\ct{t''}{\phi''\land\neg\phi'''})}}{\tsem{\sigma(\ct{t'}{\phi'})}}\in A$.
Moreover, $\tsem{\sigma(\ct{t}{\phi})}=\tsem{\sigma(\ct{t''}{\phi''})}$ by Remark~\ref{rem:eqrf}.
We distinguish two subcases:
\\
2.1. $M^\Sigma\models \phi'''\equiv\True$. Then $\tsem{\sigma(\ct{t''}{\phi''})}=\tsem{\sigma(\ct{t''}{\phi''\land\phi'''})}$, which
implies $\tsem{\sigma(\ct{t}{\phi})}\subseteq\allowbreak\tsem{\sigma(\ct{t'}{\phi'})}$. We obtain $\rform{\tsem{\sigma(\varphi)}}{\tsem{\sigma(\varphi')}}\in \widehat{\sf DVP}(A)$ by the first rule of {\sf DVP}.
\\
2.2. $M^\Sigma\models \neg(\phi'''\equiv\True)$. The children of $\rform{\ct{t''}{\phi''\land \neg \phi'''}}{\ct{t'}{\phi'}}$ in $\it PT$ are given by the rule [{\sf der}$^\forall$]. 
We proceed as in the case 3. and we get that $\tsem{\sigma(\ct{t''}{\phi''\land \neg \phi'''})}$ is runnable.
It follows $\rform{\tsem{\sigma(\ct{t}{\phi})}}{\tsem{\sigma(\ct{t'}{\phi'})}}\in \widehat{\sf DVP}(A)$ by the rule  $\semrule{Step}$ of {\sf DVP}.%
\\
3. [${\sf der}^\forall$]. The root has a set of children $\rform{\{\ct{t^j}{\phi^j}}{\varphi'}\mid j\in J\}$, where $J= \{1,\ldots,n\}$, $M^\Sigma\models \phi^j\iff(\phi''\land \phi'_j)$ with $\phi'_j$ depending only on $t''$ and the applied rule, and $\rform{\varphi}{\varphi'}\equiv\rform{\ct{t''}{\phi''}}{\ct{t'}{\phi'}}$.
We have:
\begin{align*}
&\rform{\ct{t^j}{\phi^j}}{\varphi'}\in \nu\,\widehat{\sf DSTEP}(\R) &&\textrm{implies by Prop.~\ref{pro:eqrform2}}\\
&\rform{\ct{z}{z \qeq t^j\land\phi^j}}{\varphi'}\in \nu\,\widehat{\sf DSTEP}(\R)&&\textrm{implies by Prop.~\ref{prop:disj}}\displaybreak[0]\\ 
&\rform{\ct{z}{\bigvee_{j\in J}(z \qeq t^j\land\phi^j)}}{\varphi'}\in \nu\,\widehat{\sf DSTEP}(\R)&&\textrm{implies by def. of }A\displaybreak[0]\\
&\rform{\tsem{\sigma(\ct{z}{\bigvee_{j\in J}(z \qeq t^j\land\phi^j)})}}{\tsem{\sigma(\varphi')}}\in A
\end{align*}
Since $\bigcup_{j\in J}\tsem{\sigma(\ct{t^j}{\phi^j})}=\tsem{\sigma(\ct{z}{\bigvee_{j\in J}(z \qeq t^j\land\phi^j)})}$ by Remark~\ref{rem:eqrf}, it follows that
$\bigcup_{j\in J}\rform{\tsem{\sigma(\ct{t^j}{\phi^j})}}{\tsem{\sigma(\varphi')}}\in A$.

We have $\rform{\tsem{\sigma(\ct{t^j}{\phi^j})}}{\tsem{\sigma(\varphi')}}\in A$ since $\rform{\ct{t^j}{\phi^j}}{\varphi'}\in \nu\,\widehat{\sf DSTEP}(\R)$, for each $j\in J$, and hence
$\rform{\bigcup_{j\in J}\tsem{\sigma(\ct{t^j}{\phi^j})}}{\tsem{\sigma(\varphi')}}\in A$ by the definition of $A$.
We have $\bigcup_{j\in J}\tsem{\ct{t^j}{\phi^j}}=\tsem{\Delta_\R(\ct{t''}{\phi''})}=\partial(\tsem{\ct{t''}{\phi''}}=\partial(\tsem{\ct{t}{\phi}})$ %
by Theorem~\ref{th:der} and Remark~\ref{rem:eqrf}, and hence the equality $\bigcup_{j\in J}\tsem{\sigma(\ct{t^j}{\phi^j})}=\partial(\tsem{\sigma(\ct{t}{\phi})})$. 
The side-condition of the inference rule implies $\tsem{\ct{t}{\phi}}=\tsem{\ct{t''}{\phi''}}$ runnable, and hence $\tsem{\sigma(\ct{t}{\phi})}$ runnable.
Since the side-condition of the inference rule implies $\tsem{\ct{t_l}{\phi_l}}\setminus\tsem{\ct{t_r}{\phi_r}}=\tsem{\ct{t_l}{\phi_l}}$, It follows that
$\rform{\tsem{\sigma(\ct{t}{\phi})}}{\tsem{\sigma(\ct{t'}{\phi'})}}\in \widehat{\sf DVP}(A)$ by the rule $\semrule{Step}$ of {\sf DVP}.

\noindent
\textit{Reverse implication (completeness).} Assume that
$\R\demModels\rform{\varphi}{\varphi'}$, i.e.,
$(M^\Sigma,\gtran{\R}) \demModels
\rform{\tsem{\sigma(\varphi)}}{\tsem{\sigma(\varphi')}}$ for all
$\sigma \in \SHSH$, where
$\SHSH = \{ \sigma \mid \sigma : \var(\varphi) \cap \var(\varphi') \to
M^\Sigma \}$. We prove by coinduction that there is a proof of
$\rform{\varphi}{\varphi'}$ under ${\sf DSTEP}(\R)$. We distinguish
two cases:
\begin{enumerate}
\item if there exists $\sigma \in \SHSH$ such that
  $\tsem{\sigma(\varphi)} \cap \tsem{\sigma(\varphi')} \not=
  \emptyset$, then we start our proof tree by a {\sf [subs]} node:

  \[\dfrac{\ldots}{[{\sf subs}]~\dfrac{\rform{\varphi \land \lnot \phi}{\varphi'}}{\rform{\varphi}{\varphi'},}}\]

  where $\phi = \exists \widetilde{x}.t_l = t_r \land \phi_r$ is the
  constraint in Rule~{\sf [subs]},
  Figure~\ref{fig:symbolic-execution}, under the assumption that
  $\varphi = \ct{t_l}{\phi_l}$ and $\varphi' = \ct{t_r}{\phi_r}$. By
  $\varphi \land \lnot \phi$ we denoted the constrained term
  $\ct{t_l}{\phi_l \land \lnot \phi}$. As $\tsem{\sigma(\varphi)} \cap
  \tsem{\sigma(\varphi')} \not= \emptyset$ for some $\sigma$, it
  follows that $\phi$ is satisfiable and therefore {\sf [subs]} can be
  applied. We have that $R \demModels \rform{\varphi \land \lnot
    \phi}{\varphi'}$ and also that, for any $\sigma \in \SHSH$,
    $\tsem{\sigma(\varphi\land \phi')} \cap
  \tsem{\sigma(\varphi')} = \emptyset$
(by the definition of
  $\phi$). Therefore, we continue to build the proof tree of
  $\rform{\varphi \land \lnot \phi}{\varphi'}$ coinductively (directly
  going into the second case, with $\varphi \land \lnot \phi$ playing
  the role of $\varphi$, and $\varphi'$ the role of $\varphi'$).

\item if for all $\sigma \in \SHSH$, $\tsem{\sigma(\varphi)} \cap
  \tsem{\sigma(\varphi')} = \emptyset$, we distinguish two more cases:

  \begin{enumerate}

  \item if for all $\sigma \in \SHSH$, $\tsem{\sigma(\varphi)} =
    \emptyset$, then we construct a proof tree of
    $\rform{\varphi}{\varphi'}$ as follows:

    \[{\sf [axiom]}~\dfrac{\ }{\rform{\varphi}{\varphi'}.}\]

    \item if there exists $\sigma \in \SHSH$ such that
      $\tsem{\sigma(\varphi)} \not= \emptyset$, let $\SHSH_1 = \{
      \sigma \in \SHSH \mid \tsem{\sigma(\varphi)} \not= \emptyset\}$
      and $\SHSH_2 = \{ \sigma \in \SHSH \mid \tsem{\sigma(\varphi)}
      = \emptyset\}$. We have that $\SHSH_1 \not= \emptyset$.

      We have that for all $\sigma \in \SHSH_1$: \begin{enumerate*}
        \item[(A)] $\tsem{\sigma(\varphi)} \cap
          \tsem{\sigma(\varphi')} = \emptyset$, \item[(B)]
          $\tsem{\sigma(\varphi)} \not= \emptyset$, 
          and \item[(C)] $R
          \demModels
          \rform{\sigma(\varphi)}{\sigma(\varphi')}$. \end{enumerate*}
      Therefore, rule $\semrule{Step}$ must have been applied with $P
      = \tsem{\sigma(\varphi)}$ and $Q = \tsem{\sigma(\varphi')}$ to
      justify $R \demModels \rform{\sigma(\varphi)}{\sigma(\varphi')}$
      (for all $\sigma \in \SHSH_1$). But $P \cap Q = \emptyset$ and
      therefore $P \setminus Q = P$. Which means $R \demModels
      \rform{\partial(P)}{Q}$ and $P$ is runnable, that is $R
      \demModels
      \rform{\partial(\tsem{\sigma(\varphi)})}{\tsem{\sigma(\varphi')}}$
      and $\tsem{\sigma(\varphi)}$ is runnable (for all $\sigma \in
      \SHSH_1$).

      We have that $\tsem{\Delta_R(\varphi)} = \allowbreak
      \partial(\tsem{\varphi}) = %
      \bigcup_{\sigma \in \SHSH_1}(\partial(\tsem{\sigma(\varphi)}))
      \not= \emptyset$ (since $\tsem{\sigma(\varphi)}$ is not empty
      for $\sigma \in \SHSH_1$). Therefore, $\varphi$ is
      $R$-derivable.

      We show that we can apply the $[{\sf der}^\forall]$ rule to
      $\rform{\varphi}{\varphi'}$. Let $\varphi =
      \ct{t_l}{\phi_l}$. Indeed, as for all $\sigma \in \SHSH_1$,
      $\tsem{\sigma(\varphi)}$ is runnable, it follows that for all
      $\rho \in \sem{\phi_l}$, there is a rewrite rule
      $\rrule{l}{r}{\phi_{lr}}$, a ground context $c[\cdot]$ and
      $\rho' : \var(l, r, \phi_{lr})$ such that $\rho(\sigma(\varphi))
      = \rho'(c[l])$ and $\rho(\phi_{lr}) = \True$.

      This means that $\phi_l \limplies \bigvee_{j \in \{1, \ldots,
        n\}}\exists{\widetilde{y}^j}.\phi^j$ is valid in rule $[{\sf
          der}^\forall]$ in Figure~\ref{fig:symbolic-execution} and
      therefore it can be applied:

      \[[{\sf der}^\forall]~\dfrac{\dfrac{\ldots}{\rform{\ct{t^j}{\phi^j}}{\varphi'} \mbox{, $j \in \{ 1, \ldots, n \}$}}}{\rform{\varphi}{\varphi'}}.\]

      Next we show that the coinduction hypothesis can be applied on
      all hypotheses $\rform{\ct{t^j}{\phi^j}}{\varphi'}$. Indeed, it
      is sufficient to show that $R \demModels
      \rform{\ct{t^j}{\phi^j}}{\varphi'}$. First, notice that
      $\var(\varphi) \cap \var(\varphi') = \var(\Delta_R(\varphi))
      \cap \var(\varphi')$ (taking the derivative of
      $\varphi$ preserves the common variables with $\varphi'$).

      By Proposition~\ref{prop:redrp}, it is sufficient to show that
      for all $\sigma \in \SHSH$,
      $R \demModels
      \rform{\tsem{\sigma(\Delta_R(\varphi))}}{\tsem{\sigma(\varphi')}}$.
      But
      $\tsem{\sigma(\Delta_R(\varphi))} =
      \tsem{\Delta_R(\sigma(\varphi))}
      = \partial(\tsem{\sigma(\varphi)})$, and
      $R \demModels
      \rform{\partial(\tsem{\sigma(\varphi)})}{\tsem{\sigma(\varphi')}}$
      is already known to hold.
  \end{enumerate}
\end{enumerate}

We have shown that in whenever $R \demModels
\rform{\varphi}{\varphi'}$, we can build a proof tree of
$\rform{\varphi}{\varphi'}$ under ${\sf DSTEP}(\R)$, which concludes
the proof of the completeness of ${\sf DSTEP}(\R)$.
\qed
\end{proof}

\circularityprinciple*
\begin{proof}
We first introduce some notations.
Let  $\sqsubset$ be the partial order over proof trees of {\sf DCC} defined as follows: ${\it PT}_1\sqsubset {\it PT}_2$ iff ${\it circOut}({\it PT}_1)$ is a subtree of ${\it circOut}({\it PT}_2)$, where ${\it circOut}({\it PT})$ is the tree obtained from $\it PT$ by removing all subtrees having a {\sf [circ]} root. 
If ${\it circHeight}({\it PT})$ denote the length of the shortest path from the root to a {\sf circ}-node in the proof tree $\it PT$ under {\sf DCC}, then 
${\it PT}_1\sqsubset {\it PT}_2$ implies ${\it circHeight}({\it PT}_1)\le{\it circHeight}({\it PT}_2)$. The main idea of the proof is to transform a guarded proof tree $\it PT$ under {\sf DCC}$(\R,G)$ for a $\rform{\varphi}{\varphi}'\in G$ into a proof tree ${\it PT}'$ under {\sf DSTEP}$(\R)$ for the same formula, where ${\it PT}'$ is the lub $\bigsqcup_{i\ge 0}{\it PT}_i$ of a chain ${\it PT}={\it PT}_0\sqsubset {\it PT}_1\sqsubset {\it PT}_2\sqsubset \cdots$ with the property that ${\it circHeight}({\it PT}_i)<{\it circHeight}({\it PT}_{i+1})$ (this ensures that the limit $\bigsqcup_{i\ge 0}{\it PT}_i$ has no {\sf [circ]}-nodes).
We show how ${\it PT}_{i+1}$ is obtained from ${\it PT}_i$. Let $\rform{\ct{t_i}{\phi_i}}{\varphi'}$ a {\sf circ} node that gives ${\it circHeight}({\it PT}_i)$, i.e. its children are 
$\rform{\ct{t'_c}{\phi'_c\land\phi_i\land\phi''_i)}}{\varphi'}$ and
$\rform{\ct{t_i}{\phi_i\land\neg\phi''_i}}{\varphi'}$,
where 
$M^\Sigma\models\phi''_i\iff (\exists\var(t_c,\phi_c))(t_i \qeq t_c\land\phi_c)$ and 
$\rform{\ct{t_c}{\phi_c}}{\ct{t'_c}{\phi'_c}}\in G$.

$(\R,G)\demEntails G$ implies $(\R,G)\demEntails\rform{\ct{t_c}{\phi_c}}{\ct{t'_c}{\phi'_c}}$ and hence we obtain $(\R,G)\demEntails\rform{\ct{t_c}{\phi_c\land\phi_i\land\phi''_i}}{\ct{t'_c}{\phi'_c}}$ by Proposition~\ref{prop:dcc-impl}.

Let ${\it PT}_c$ be a guarded proof tree for $\rform{\ct{t_c}{\phi_c\land\phi_i\land\phi''_i}}{\ct{t'_c}{\phi'_c}}$ under {\sf DCC}$(\R,G)$
and we want to transform it into a proof tree ${\it PT}'_c$ under the same proof system for $\rform{\ct{t_c}{\phi_c\land\phi_i\land\phi''_i}}{\varphi'}$. 
We may replace all the rhs $\ct{t'_c}{\phi'_c}$ by $\varphi'$ in ${\it PT}_c$ (in the sense that they are valid instances of the {\sf DCC}$(\R,G)$ rules) excepting the nodes that are instances of the inference rule [{\sf subs}], because this rule involves the right-hand side of the reachability formula.
Let $\rform{\ct{t_j}{\phi_j}}{\ct{t'_c}{\phi'_c}}$ be a [{\sf subs}]-node in ${\it PT}_c$, i.e. its child is (or equivalent to)
$\rform{\ct{t_j}{\phi_j\land\neg\phi''_j}}{\ct{t'_c}{\phi'_c}}$, where $M^\Sigma\models \phi''_j\iff (\exists X)(t_j \qeq t'_c\land\phi'_c)$
and $X=\var(t'_c,\phi'_c)\setminus\var(t_j,\phi_j)$.
Our intention is to transform this node into a [{\sf disj}]-node 
$\rform{\ct{t_j}{\phi_j}}{\varphi'}$ with the children $\rform{\ct{t_j}{\phi_j\land\phi''_j}}{\varphi'}$ and $\rform{\ct{t_j}{\phi_j\land\neg\phi''_j}}{\varphi'}$. 
In order to obtain ${\it PT}'_c$ a valid proof tree, we have to add to it, as a subtree of the new node, a proof tree for $\rform{\ct{t_j}{\phi_j\land\phi''_j}}{\varphi'}$ or for a formula equivalent to it. 
We know that $\rform{\ct{t'_c}{\phi'_c\land\phi_i\land\phi''_i}}{\varphi'}$ is a node in ${\it PT}_i$ and hence there is a proof tree for it.
We show that $\tsem{\ct{t_j}{\phi_j\land\phi''_j}}\subseteq\tsem{\ct{t'_c}{\phi'_c\land\phi_i\land\phi''_i}}$.
Let $\rho$ be in $\sem{\phi_j\land(\exists X)(t_j \qeq t'_c\land\phi'_c)}$. Note that $X=\var(\varphi'_c)\setminus\var(\varphi_j)$. There is $\rho'$ such that $\rho'(y)=\rho(y)$ for all $y\not\in X$ and $M^\Sigma,\rho'\models t_j \qeq t'_c\land\phi'_c$. 
Since $\rho$ and $\rho'$ coincide on $\var(\phi_j)$ we obtain $M^\Sigma,\rho'\models \phi_j\land t_j \qeq t'_c\land\phi'_c$ and hence
$M^\Sigma,\rho'\models \phi'_c\land(\exists X')(t_j \qeq t'_c\land\phi_j)$. Now the proof of the inclusion is finished.

Since $\var(\ct{t'_c}{\phi'_c\land\phi_i\land\phi''_i})\cap\var(\varphi')=\var(\phi_i\land\phi''_i)\cap\var(\varphi')$ and
$\rform{\ct{t_j}{\phi_j}}{\ct{t'_c}{\phi'_c}}$ is a node in the proof tree of $\rform{\ct{t_c}{\phi_c{\land}\phi_i{\land}\phi''_i}}{\ct{t'_c}{\phi'_c}}$ under {\sf DCC}$(\R,G)$, it follows that $\var(\ct{t_j}{\phi_j})\cap\var(\varphi')=\var(\phi_i\land\phi''_i)\cap\var(\varphi')$.
It follows that $\rform{\ct{t_j}{\phi_j\land\phi''_j}}{\varphi'}$ is equivalent to $\rform{\ct{t'_c}{t_j \qeq t'_c\land \phi_j\land\phi''_j\land \phi'_c\land\phi_i\land\phi''_i}}{\varphi'}$ and the later one has a proof tree under {\sf DCC}$(\R,G)$ by 
by Proposition~\ref{prop:dcc-impl}. This proof tree is added as the subtree of $\rform{\ct{t_j}{\phi_j\land\phi''_j}}{\varphi'}$.
Now the transformation of $\it PT_c$ into $\it PT'_c$ is completely described.

The proof tree ${\it PT}_{i+1}$ is the result of processing all {\sf [circ]} nodes that give ${\it circHeight}({\it PT}_i)$. The relation ${\it PT}_i\sqsubset{\it PT}_{i+1}$ is given by the fact that  ${\it PT}_i$ is guarded.
Moreover, we have ${\it circHeight}({\it PT}_{i})<{\it circHeight}({\it PT}_{i+1})$.
\qed\end{proof}

\end{document}